\documentclass[11pt]{article}

\usepackage[utf8]{inputenc}
\usepackage[T1]{fontenc}
\usepackage{amsmath}
\usepackage{amsthm}
\usepackage{amsfonts}
\usepackage{amssymb}
\usepackage{mathtools}
\usepackage{xcolor}
\usepackage{graphicx}
\usepackage[version=4]{mhchem}
\usepackage[flushleft]{threeparttable}
\usepackage{booktabs}       
\usepackage{makecell}
\usepackage{siunitx}
\usepackage{setspace}
\usepackage{array}
\usepackage{enumitem}
\usepackage{bm}
\usepackage{wrapfig}
\usepackage{placeins}
\usepackage{fix-cm}
\usepackage{pifont}
\usepackage{tocloft}

\newcommand{\cmark}{\ding{51}}
\newcommand{\xmark}{\ding{55}}

\newcommand{\PaperTableStyle}{%
    \centering
    \footnotesize
    \setlength{\tabcolsep}{3pt}%
    \renewcommand{\arraystretch}{1.10}%
}
\newcommand{\PaperTableNotesStyle}{\scriptsize}

\newcommand{\PreserveBackslash}[1]{\let\temp=\\#1\let\\=\temp}
\newcolumntype{C}[1]{>{\PreserveBackslash\centering}p{#1}}
\newcolumntype{R}[1]{>{\PreserveBackslash\raggedleft}p{#1}}
\newcolumntype{L}[1]{>{\PreserveBackslash\raggedright}p{#1}}
\renewcommand{\thefootnote}{\fnsymbol{footnote}}

\providecommand{\Sph}{S^2}

\theoremstyle{plain}
\newtheorem{theorem}{Theorem}[section]

\newtheorem{lemma}[theorem]{Lemma}
\theoremstyle{definition}

\theoremstyle{remark}
\newtheorem{remark}[theorem]{Remark}

\usepackage[letterpaper]{geometry}
\usepackage[superscript,biblabel,nomove]{cite}
\usepackage[colorlinks=true,allcolors=blue]{hyperref}
\usepackage{mathptmx}
\usepackage{microtype}

\makeatletter
\renewcommand{\section}{%
    \@startsection{section}{1}{\z@}%
    {-2.0ex \@plus -0.5ex \@minus -0.2ex}%
    {1.5ex \@plus 0.3ex \@minus 0.2ex}%
    {\large\bfseries\raggedright}%
}
\renewcommand{\subsection}{%
    \@startsection{subsection}{2}{\z@}%
    {-1.8ex \@plus -0.5ex \@minus -0.2ex}%
    {0.8ex \@plus 0.2ex}%
    {\normalsize\bfseries\raggedright}%
}
\renewcommand{\subsubsection}{%
    \@startsection{subsubsection}{3}{\z@}%
    {-1.5ex \@plus -0.5ex \@minus -0.2ex}%
    {0.5ex \@plus 0.2ex}%
    {\normalsize\bfseries\raggedright}%
}
\renewcommand{\paragraph}{%
    \@startsection{paragraph}{4}{\z@}%
    {1.5ex \@plus 0.5ex \@minus 0.2ex}%
    {-1em}%
    {\normalsize\bfseries}%
}

\newcommand{\@toptitlebar}{%
    \hrule height 4pt
    \vskip 0.25in
    \vskip -\parskip
}
\newcommand{\@bottomtitlebar}{%
    \vskip 0.29in
    \vskip -\parskip
    \hrule height 1pt
    \vskip 0.09in
}

\renewcommand{\maketitle}{%
    \par
    \begingroup
    \renewcommand{\thefootnote}{\fnsymbol{footnote}}%
    \renewcommand{\@makefnmark}{\hbox to \z@{$^{\@thefnmark}$\hss}}%
    \long\def\@makefntext##1{%
        \parindent 1em\noindent
        \hbox to 1.8em{\hss $\m@th ^{\@thefnmark}$}##1%
    }%
    \thispagestyle{empty}%
    \vbox{%
        \hsize\textwidth
        \linewidth\hsize
        \vskip 0.1in
        \@toptitlebar
        \centering
        {\LARGE\bfseries \@title\par}
        \@bottomtitlebar
        \def\And{%
            \end{tabular}\hfil\linebreak[0]\hfil
            \begin{tabular}[t]{c}\bfseries\rule{\z@}{24pt}\ignorespaces
        }%
        \def\AND{%
            \end{tabular}\hfil\linebreak[4]\hfil
            \begin{tabular}[t]{c}\bfseries\rule{\z@}{24pt}\ignorespaces
        }%
        \begin{tabular}[t]{c}\bfseries\rule{\z@}{24pt}\@author\end{tabular}%
        \vskip 0.3in \@minus 0.1in
    }%
    \@thanks
    \endgroup
    \let\maketitle\relax
    \let\thanks\relax
}

\renewenvironment{abstract}%
{%
    \vskip 0.075in
    \centerline{\large\bfseries Abstract}
    \vspace{0.5ex}
    \begin{quote}
}
{
    \par
    \end{quote}
    \vskip 1ex
}
\makeatother

\title{
DPA4: Pushing the Accuracy--Cost Frontier of Interatomic Potentials with EMFA SO(2) Convolution
}

\author{
    \begin{minipage}{0.95\textwidth}
        \centering
        Tiancheng Li$^{1,2}$ \quad
        Wentao Li$^{3}$ \quad
        Anyang Peng$^{2}$\\
        Jianming Xue$^{1,4,*}$ \quad
        Linfeng Zhang$^{2,5,*}$ \quad
        Duo Zhang$^{2,5,6,*}$ \quad
        Han Wang$^{4,7,*}$ \\[0.6em]
        {\normalfont\small
            $^{1}$State Key Laboratory of Nuclear Physics and Technology, School of Physics, Peking University, Beijing 100871, China\\
            $^{2}$AI for Science Institute, Beijing 100080, P. R. China\\
            $^{3}$Department of Chemical Engineering, Tsinghua University, Beijing 100084, P. R. China\\
            $^{4}$HEDPS, CAPT, College of Engineering, Peking University, Beijing 100871, P. R. China\\
            $^{5}$DP Technology, Beijing 100080, P. R. China\\
            $^{6}$Academy for Advanced Interdisciplinary Studies, Peking University, Beijing 100871, P. R. China\\
            $^{7}$National Key Laboratory of Computational Physics, Institute of Applied Physics and Computational Mathematics, Fenghao East Road 2, Beijing 100094, P. R. China\\
            $^{*}$Correspondence: \href{mailto:jmxue@pku.edu.cn}{jmxue@pku.edu.cn};
            \href{mailto:linfeng.zhang.zlf@gmail.com}{linfeng.zhang.zlf@gmail.com};
            \href{mailto:zhduodyx@pku.edu.cn}{zhduodyx@pku.edu.cn};
            \href{mailto:wang_han@iapcm.ac.cn}{wang\_han@iapcm.ac.cn}
        }
    \end{minipage}
}

\date{}

\begin{document}

\maketitle

\begin{abstract}
    Machine-learning interatomic potentials now approach quantum-mechanical accuracy, but the most expressive equivariant architectures are costly to evaluate, and the leading ones depend on auxiliary denoising or direct-force pretraining.
    We introduce DPA4, an SE(3)-equivariant architecture spanning six size classes from 0.48 to 25 million parameters and reaching the accuracy of the strongest published models at several-fold to an order-of-magnitude higher inference throughput.
    Its convolution couples edge and node features across all angular degrees in an edge-local frame, and its Wigner bilinear nonlinearity is universal in its full form and, on an exact quadrature grid, equivariant to machine precision.
    On Matbench Discovery, DPA4 leads every ranked metric, and every variant evaluated lies on the accuracy--throughput Pareto frontier of the compliant leaderboard.
    DPA4-Pro attains the lowest energy error on OMat24 and lower total-energy and force errors than the strongest conservative baseline on the OMol25 composition-validation split.
    All variants are trained through the conservative energy-gradient path alone, made practical by a threefold-faster compiled implementation.
    DPA4 thus brings leaderboard-class accuracy within the routine compute budgets of molecular-dynamics and materials-screening workflows, for both inorganic crystals and organic molecules.
\end{abstract}

\section{Introduction}

Machine-learning interatomic potentials (MLIPs) are moving from case-specific models trained on dedicated density-functional-theory (DFT) data to pretrained atomistic foundation models~\cite{yuan2026foundation}, also called large atomistic models (LAMs), intended as broad DFT surrogates for simulation, materials discovery and molecular design~\cite{merchant2023scaling}.
The lineage runs from the descriptor- and kernel-based potentials of the case-specific era~\cite{behler2007generalized,bartok2010gaussian,schutt2017schnet,unke2019physnet,zhang2018deep,wang2018deepmd} to the graph neural network models that define the LAM era~\cite{chen2022universal,deng2023chgnet,batatia2022mace,kovacs2023mace,yang2024mattersim,neumann2024orb,rhodes2025orbv3,liao2023equiformerv2,liao2026equiformerv3,wood2025umafamilyuniversalmodels,zhang2024dpa2,zhang2025dpa3}.

That promise is expensive to realize: training UMA-M~\cite{wood2025umafamilyuniversalmodels}, built on the eSEN architecture~\cite{fu2025esen}, required 129{,}024 H200 GPU-hours across direct-force pretraining and conservative fine-tuning.
The barrier is not confined to pretraining: the per-step expense that makes pretraining costly is paid again in fine-tuning to a target domain, and then at every step of molecular dynamics or structure relaxation in deployment.
This motivates the question whether comparable accuracy can be reached at substantially lower cost, at inference as well as in training.

On the accuracy side, equivariant architectures carry directional information as first-class features that transform under SO(3), instead of compressing it immediately into rotation-invariant features; NequIP~\cite{batzner2022nequip}, MACE~\cite{batatia2022mace}, the Equiformer family~\cite{liao2022equiformer,liao2023equiformerv2,liao2026equiformerv3} and eSEN~\cite{fu2025esen} have shown that explicit directional features substantially improve data efficiency and benchmark accuracy.
On the efficiency side, the expense of SE(3)-equivariant models is dominated by Clebsch--Gordan tensor products, whose cost grows rapidly with angular order.
eSCN showed that SO(3)-equivariant convolutions can be reduced to equivalent edge-local SO(2) operations~\cite{passaro2023reducing}, a strategy also used by recent high-performing models such as eSEN~\cite{fu2025esen} and EquiformerV3~\cite{liao2026equiformerv3}.
In these constructions the edge modulates the node feature diagonally: the degree-$l$ edge feature multiplies only the degree-$l$ node feature, whereas a Clebsch--Gordan product couples every edge degree to every node degree.
That restriction, and not the reduction to SO(2), is where expressive power is lost; letting the edge couple different degrees recovers it.

Cost is not the only obstacle: the conservative training objective itself resists acceleration.
Because forces are the negative gradient of the energy, the force loss differentiates through the energy model and every training step requires a double-backward pass; training stacks developed for large language models, which are tuned for single-backward gradients, do not transfer directly to this setting.
Leading models such as EquiformerV3, eSEN and UMA work around this by pretraining with denoising (DeNS~\cite{liao2024dens}) or direct-force prediction, objectives whose targets are obtained from a single forward pass, and only then fine-tuning with the conservative energy-gradient objective.
This two-stage protocol spends an additional pretraining stage and the engineering effort to maintain two objectives, motivating an architecture in which conservative energy-gradient training is itself compiler-friendly from the start.

\begin{figure}[!t]
    \centering
    \includegraphics[width=.7\linewidth]{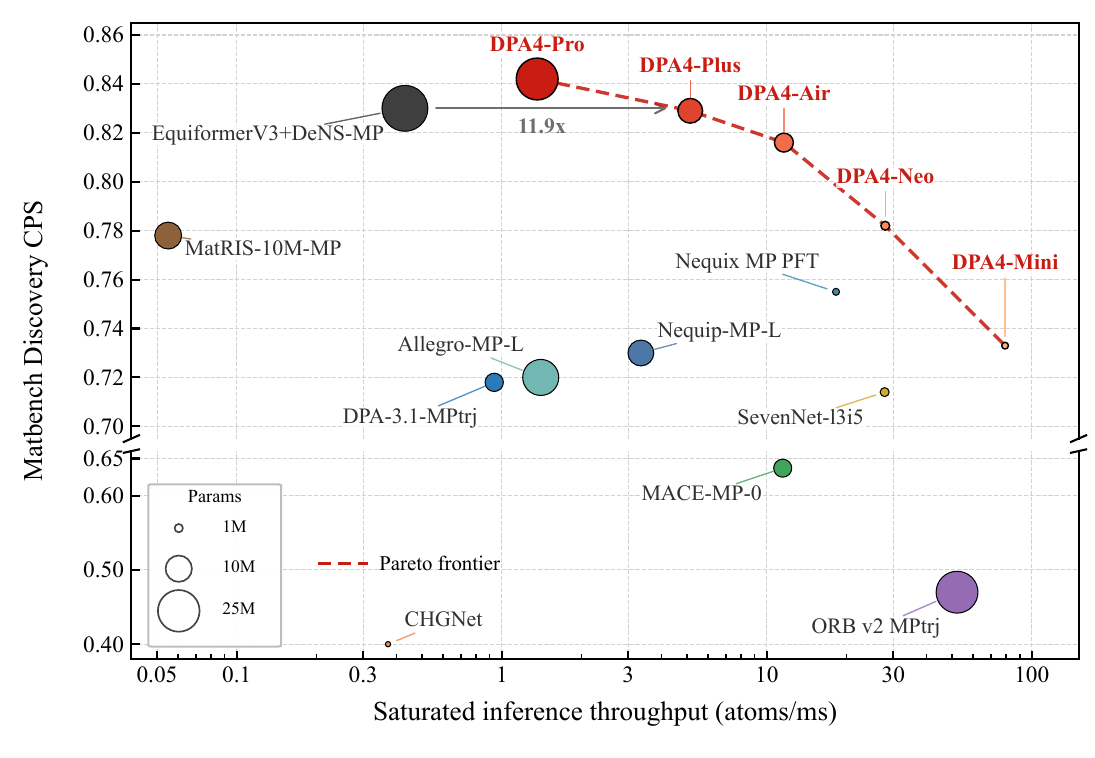}
    \caption{
        Accuracy--throughput Pareto frontier for compliant Matbench Discovery models~\cite{riebesell2025matbench}.
        The Combined Performance Score (CPS) is plotted against saturated inference throughput for the five DPA4 variants and ten representative baselines.
        Marker area is proportional to the number of model parameters.
        For each model, saturated throughput is the largest atom-normalized throughput among the successful trials of an adaptive diamond-supercell sweep; at each system size, throughput is computed from the mean wall time of 100 evaluations of energy, forces and stress after 10 warm-up evaluations on a single NVIDIA H20 GPU.
        Higher values are preferred on both axes; the dashed line joins the non-dominated DPA4 variants.
        The horizontal arrow gives the throughput ratio between EquiformerV3+DeNS-MP and DPA4-Plus at comparable CPS.
        The break in the CPS axis omits the empty interval between 0.66 and 0.695.
        The complete benchmark protocol is given in Supplementary Section~\ref{sec:sm-inference}.
    }
    \label{fig:cps-throughput}
\end{figure}

Here we introduce DPA4, an SE(3)-equivariant interatomic-potential architecture built on an EMFA (Edge--node degree coupling, Multi-Focus, Attention) SO(2) convolution and a Wigner bilinear network (WBN) nonlinearity, which achieves leading accuracy at substantially lower inference and training cost on inorganic-crystal and organic-molecule benchmarks.
Its architectural innovations, co-designed for efficiency and accuracy, are (A1) a low-rank edge--node SO(2)-equivariant degree coupling in an edge-local frame, (A2) a multi-focus design for message nonlinearity, (A3) envelope-gated attention for message aggregation --- the E, MF and A of EMFA --- and (A4) a Wigner bilinear network, a universal architecture for nonlinear SO(3)-equivariant maps between irreps, provably complete in its full-frame form.
All four raise generalization accuracy at low computational cost relative to standard SO(2)-equivariant baselines; the WBN, evaluated by exact Lebedev--uniform quadrature on the rotation group, is in addition equivariant to machine precision.
A shape-stable, compiler-friendly implementation of conservative energy-gradient training makes the energy-to-force path compatible with \texttt{torch.compile} and gives a threefold wall-clock training speedup in controlled ablations.

Across four datasets spanning inorganic crystals and organic molecules, DPA4 combines high accuracy with low inference and training cost.
On Matbench Discovery~\cite{riebesell2025matbench}, all five DPA4 variants evaluated on this benchmark lie on the CPS--throughput Pareto frontier among fifteen compliant models; DPA4-Pro leads every ranked metric, whereas DPA4-Plus matches the highest-CPS baseline with 11.9$\times$ higher saturated inference throughput (Fig.~\ref{fig:cps-throughput}).
On OMat24~\cite{barroso2024open}, a materials dataset comprising 118 million DFT calculations, DPA4-Pro attains the lowest mean absolute error (MAE) in energy (9.4~meV/atom).
Its force MAE is 42.7 versus 41.6~meV/\AA{} for EquiformerV3 with $L_{\max}=6$~\cite{liao2026equiformerv3}, a difference of only 1.1~meV/\AA{}; DPA4-Pro has about half as many parameters and requires one-sixth of the training compute.
On the OMol25 composition-validation split~\cite{levine2025open}, DPA4-Pro lowers both the total-energy and force MAEs relative to UMA-M-1.1~\cite{wood2025umafamilyuniversalmodels}, the strongest conservative baseline in the comparison.
DPA4-Pro has 25.2M parameters versus 1.4B total (50M active) for UMA-M-1.1 and required over an order of magnitude less training compute than the reported UMA-M-1.0 run; the cost of the subsequent fine-tuning to UMA-M-1.1 is not reported.
On~SPICE-MACE-OFF~\cite{kovacs2023mace}, DPA4-Plus lowers the aggregate energy and force errors of the best published model by 39\% and 36\% (Table~\ref{table:spice}).
All DPA4 models in these comparisons use conservative energy-gradient training alone, without auxiliary denoising or direct-force pretraining, whereas several leading baselines use one or both of these stages.
DPA4 is evaluated here in the single-task, per-dataset setting; extending this backbone to multitask LAM pretraining remains a direction for future work (Section~\ref{sec:discussion}).


\section{Results}
\label{sec:results}

\subsection{The DPA4 architecture}
\label{sec:arch_overview}

DPA4 is a conservative SE(3)-equivariant message-passing graph neural network that maps atomic species and positions to a scalar potential energy through a Geometry-Informed Embedding (GIE) stage, $N_{\mathrm{layer}}$ stacked equivariant interaction blocks, and an atomic energy head (Fig.~\ref{fig:arch}a).
Each interaction block has a residual structure with two skip connections: one over an EMFA SO(2) convolution followed by an equivariant RMSNorm, and one over a second equivariant RMSNorm followed by an equivariant feed-forward network (FFN).
Architectural designs A1, A2 and A3 introduced in the Introduction, namely the low-rank edge--node SO(2)-equivariant degree coupling, the multi-focus design for message nonlinearity, and the envelope-gated attention for message aggregation, operate inside the EMFA SO(2) convolution (Fig.~\ref{fig:arch}c).
Architectural design A4 supplies the SO(3)-equivariant nonlinearity: the equivariant FFN is implemented via a WBN, a stack of Wigner bilinear blocks mapping the node feature space to itself.
Forces and virials are obtained by automatic differentiation through the scalar energy.

A shared edge cache (Fig.~\ref{fig:arch}b) precomputes and reuses per-edge quantities across interaction blocks: the distance $r_{ij}$ and unit direction $\widehat{\mathbf{r}}_{ij}$, a radial-basis expansion with cutoff envelope, and per-pair species features.
It also caches the Wigner-D rotation $\mathbf{D}_{ij}$ and its inverse $\mathbf{D}_{ij}^{-1}$ that transport features between the global and the edge-local frame.
The GIE stage (Fig.~\ref{fig:arch}b) then injects both chemistry and geometry into the initial node representation before the first interaction block.
A scalar branch combines per-species type features with an SO(3)-invariant local-environment descriptor through Feature-wise Linear Modulation (FiLM)~\cite{perez2018film}, producing the initial $l=0$ slice.
In parallel, an equivariant branch projects each neighbor direction onto real spherical harmonics and weights the projection by a radial-species profile, producing the initial $l\geq 1$ slices.
This gives the model chemical and geometric context from the outset, rather than forcing all local-environment information to be discovered through iterated message passing.
The construction details are given in Section~\ref{sec:gie}.

The EMFA SO(2) convolution (Fig.~\ref{fig:arch}c) transports the source node features into an edge-local SO(2) frame, constructs an equivariant per-edge message in that frame, then lifts the message back and aggregates over neighbors to update the node features.
Mathematical details are given in Section~\ref{sec:so2}.
In the local frame, the convolution applies the low-rank edge--node SO(2)-equivariant degree coupling (A1), built on the eSCN reduction~\cite{passaro2023reducing} that replaces the costly SO(3) Clebsch--Gordan tensor product by equivalent edge-local SO(2) operations.
It couples every edge degree to every node degree, as an SO(3) Clebsch--Gordan product does, in contrast to similar SO(2)-equivariant constructions in eSEN~\cite{fu2025esen} and EquiformerV3~\cite{liao2026equiformerv3}, where the degree-$l$ edge feature multiplies only the degree-$l$ node feature.
Because the SO(2) Clebsch--Gordan coefficients are trivial, they are hard-wired into the block structure of the kernel rather than tabulated; its learnable part, a function of the invariant radial--species features, is factorized at low rank across channels, improving accuracy at modest additional training cost (Section~\ref{sec:ablation}).
The multi-focus design (A2) splits the hidden width into $F$ parallel focus streams, each processed by its own SO(2) stack and then reweighted by a cross-focus softmax competition, introducing message nonlinearity.
At matched SO(2) width, this parallel-focus structure matches the accuracy of the $F=1$ single-focus baseline with substantially fewer parameters and lower training and inference cost (Section~\ref{sec:ablation}).
Aggregation over neighbors uses an envelope-gated attention (A3) computed from the SO(3)-invariant $l=0$ slice, with a destination-side output gate (Section~\ref{sec:attn}).
This attention-weighted aggregation improves accuracy at small additional training cost relative to a plain envelope-weighted scatter sum (Section~\ref{sec:ablation}).
The cutoff envelope drives edge contributions smoothly to zero as $r_{ij}\to r_{\mathrm{c}}$, a requirement for stable molecular dynamics.

The equivariant FFN that follows the convolution is instantiated as a WBN (A4): a Wigner bilinear block expands the node feature into a pair of scalar functions on the rotation group, multiplies them point by point and re-expands the product, so that a single operation couples all degrees at once (Section~\ref{sec:s2-method}).
With the full range of frames retained, a stack of such blocks can approximate any transformation of the node features that respects rotations, to arbitrary accuracy (Supplementary Section~\ref{sec:sm-wbn-complete}).
The re-expansion integral is evaluated by a Lebedev rule on the sphere crossed with a uniform rule in the residual angle, which reaches a given algebraic order of accuracy with substantially fewer sample points than the tensor-product latitude--longitude grids used by Equiformer-family architectures~\cite{liao2023equiformerv2,liao2026equiformerv3} and keeps the numerical equivariance error of the nonlinearity at machine precision (Table~\ref{tab:s2_full_equivariance} and Supplementary Table~\ref{tab:sm_s2_truncated_equivariance}).

The remaining architectural component addresses close-contact physics.
DPA4 decomposes the potential energy as a sum of a learned equivariant branch and an analytical short-range branch, $E(Z,R)=E_{\Theta}^{\mathrm{NN}}(Z,R)+E^{\mathrm{ZBL}}(Z,R)$, with the analytical branch given by the Ziegler--Biersack--Littmark (ZBL) screened-Coulomb pair potential~\cite{ziegler1985}.
Rather than splicing the ZBL term onto the energy post hoc, DPA4 couples the two branches inside the energy model through native ZBL zone bridging: smooth bridging gates suppress the direct learned short-range pair channel so that the inner-zone pair interaction is handled exclusively by the analytical branch.
This removes the unphysical switching force, proportional to the energy mismatch of the two branches, that a coordinate-dependent splicing weight would introduce.
Both branches contribute to the same scalar total energy and are differentiated jointly, giving conservative forces with a smooth transition at close approach (Section~\ref{sec:zbl_method}).
Inside the inner zone the learned branch contributes no force, so the short-range force is by construction the analytical ZBL force; a C--Si dimer scan confirms this and shows the force-switching artifact that an external pair correction produces instead (Supplementary Section~\ref{sec:sm-zbl}).

\begin{figure}[!t]
    \centering
    \includegraphics[width=\textwidth]{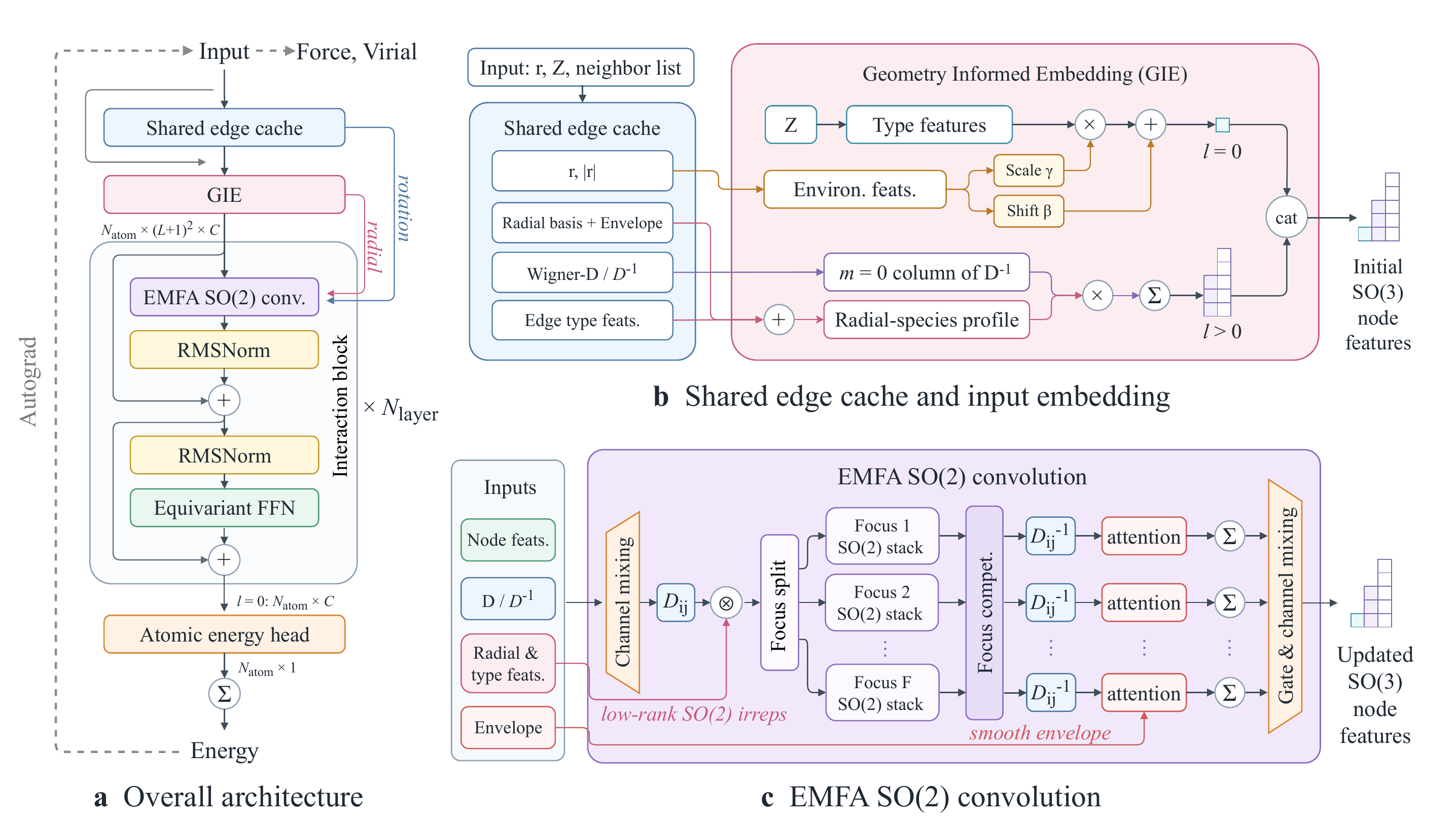}
    \caption{
        Overview of the DPA4 architecture.
        (a) The full model couples a shared edge cache, a Geometry-Informed Embedding (GIE) stage, $N_{\mathrm{layer}}$ stacked equivariant interaction blocks (each a residual stack of an EMFA (Edge--node degree coupling, Multi-Focus, Attention) SO(2) convolution and an equivariant feed-forward network (FFN) interleaved with equivariant RMS norms), and an atomic energy head.
        Architectural designs A1--A3 act inside the EMFA SO(2) convolution, and the equivariant FFN is instantiated as a Wigner bilinear network (A4).
        (b) The shared edge cache precomputes per-edge quantities; the GIE stage initializes the $l=0$ slice through Feature-wise Linear Modulation (FiLM) by a local-environment descriptor and the $l\ge 1$ slices by projecting neighbor directions onto real spherical harmonics weighted by a radial-species profile.
        (c) The EMFA SO(2) convolution transports features into an edge-local SO(2) frame, applies the low-rank edge--node SO(2)-equivariant degree coupling (A1) followed by the multi-focus design with cross-focus competition (A2), lifts back to the global frame, and aggregates over neighbors through envelope-gated attention (A3).
    }
    \label{fig:arch}
\end{figure}

\subsection{Matbench Discovery materials benchmark}
\label{sec:matbench}

\begin{table}[p]
    \PaperTableStyle
    \begin{threeparttable}
        \caption{Matbench Discovery leaderboard performance of DPA4 and compliant baseline models.}
        \label{tab:matbench}
        \begin{tabular*}{\linewidth}{@{\extracolsep{\fill}} l *{8}{c} @{}}
            \toprule
            \textbf{Model}       & \textbf{CPS}$\uparrow$\tnote{a} & \textbf{F1}$\uparrow$ & \textbf{$\kappa$SRME}$\downarrow$\tnote{b} & \textbf{RMSD}$\downarrow$\tnote{b} & \textbf{Params} & \textbf{Train time}\tnote{c} & \textbf{DeNS}\tnote{d} & \textbf{Targets}\tnote{e} \\
            \midrule
            DPA4-Pro             & \textbf{0.842}                  & \textbf{0.865}        & \textbf{0.227}                             & \textbf{0.068}                     & 25.2M           & 233                           & \xmark                 & EFSG                      \\
            DPA4-Plus            & 0.829                           & 0.847                 & \underline{0.240}                          & \underline{0.070}                  & 8.8M            & 31                            & \xmark                 & EFSG                      \\
            DPA4-Air             & 0.816                           & 0.837                 & 0.267                                      & 0.074                              & 5.1M            & 16                            & \xmark                 & EFSG                      \\
            DPA4-Neo             & 0.782                           & 0.810                 & 0.353                                      & 0.079                              & 1.1M            & 8.6                           & \xmark                 & EFSG                      \\
            DPA4-Mini            & 0.733                           & 0.776                 & 0.496                                      & 0.084                              & 0.66M           & 5.8                           & \xmark                 & EFSG                      \\
            \addlinespace[0.25em]
            \midrule
            \addlinespace[0.15em]
            EquiformerV3+DeNS-MP & \underline{0.830}               & \underline{0.863}     & 0.275                                      & \underline{0.070}                  & 30.3M           & 157                           & \cmark                 & EFSG                      \\
            eSEN-30M-MP          & 0.797                           & 0.831                 & 0.340                                      & 0.075                              & 30.1M           & 335                           & \cmark                 & EFSG                      \\
            MatRIS-10M-MP        & 0.778                           & 0.847                 & 0.489                                      & 0.072                              & 10.4M           & 224                           & \cmark                 & EFSGM                     \\
            Eqnorm MPtrj         & 0.756                           & 0.786                 & 0.408                                      & 0.084                              & 1.31M           & 83                            & \xmark                 & EFSG                      \\
            Nequix MP PFT        & 0.755                           & 0.748                 & 0.307                                      & 0.087                              & 708k            & 27\tnote{f}                   & \xmark                 & EFSHG                     \\
            Nequip-MP-L          & 0.730                           & 0.761                 & 0.466                                      & 0.086                              & 9.6M            & 41                            & \xmark                 & EFSG                      \\
            Nequix MP            & 0.729                           & 0.751                 & 0.446                                      & 0.085                              & 708k            & 21                            & \xmark                 & EFSG                      \\
            Allegro-MP-L         & 0.720                           & 0.751                 & 0.504                                      & 0.082                              & 18.7M           & 57                            & \xmark                 & EFSG                      \\
            DPA-3.1-MPtrj        & 0.718                           & 0.803                 & 0.650                                      & 0.080                              & 4.81M           & 104                           & \xmark                 & EFSG                      \\
            SevenNet-l3i5        & 0.714                           & 0.760                 & 0.550                                      & 0.085                              & 1.17M           & 381                           & \xmark                 & EFSG                      \\
            HIENet               & 0.707                           & 0.777                 & 0.642                                      & 0.080                              & 7.51M           & 120                           & \xmark                 & EFSG                      \\
            GRACE-2L-MPtrj       & 0.681                           & 0.691                 & 0.526                                      & 0.090                              & 15.3M           & --                            & \xmark                 & EFSG                      \\
            MACE-MP-0            & 0.637                           & 0.669                 & 0.682                                      & 0.092                              & 4.69M           & 108                           & \xmark                 & EFSG                      \\
            eqV2 S DeNS          & 0.522                           & 0.815                 & 1.676                                      & 0.076                              & 31.2M           & 228                           & \cmark                 & EFSD                      \\
            ORB v2 MPtrj         & 0.470                           & 0.765                 & 1.726                                      & 0.101                              & 25.2M           & --                            & \xmark                 & EFSD                      \\
            M3GNet               & 0.428                           & 0.569                 & 1.409                                      & 0.112                              & 228k            & --                            & \xmark                 & EFSG                      \\
            CHGNet               & 0.400                           & 0.613                 & 1.717                                      & 0.095                              & 413k            & --                            & \xmark                 & EFSGM                     \\
            \bottomrule
        \end{tabular*}
        \begin{tablenotes}[flushleft]
            \PaperTableNotesStyle
            \item[] The five DPA4 variants are trained on MPtrj in the compliant setting and compared with all compliant leaderboard entries on CPS, F1, $\kappa$SRME and RMSD, together with parameter counts and train times.
            \item[] Leaderboard entries comprise all compliant models accessed before August 1, 2026.
            \item[] Boldface and underlining denote the best and second-best values of each ranked metric.
            \item[a] CPS, the combined performance score of the leaderboard, aggregates F1, $\kappa$SRME and RMSD with weights 0.5, 0.4 and 0.1~\cite{riebesell2025matbench}.
            \item[b] $\kappa$SRME is the symmetric relative mean error of the phonon thermal conductivity~\cite{pota2024thermal}; RMSD is the mean structural deviation from the DFT-relaxed geometry, in \AA.
            \item[c] Train time is the aggregate accelerator time, reported in A100-equivalent GPU-days. It is calculated from the device count and wall-clock time recorded for each model~\cite{riebesell2025matbench} and rescaled by the theoretical FP32 peak throughput of the hardware used.
            A dash indicates that no train-time estimate is available.
            \item[d] \cmark{} marks models trained with the DeNS denoising objective~\cite{liao2024dens}.
            \item[e] Targets denotes the quantities each model is trained to predict, following the Matbench Discovery convention~\cite{riebesell2025matbench}: E, F, S, M and H denote energy, forces, stress, magnetic moments and Hessian, while the suffix G or D indicates whether forces and stress are obtained by energy gradient (conservative) or direct prediction.
            All DPA4 variants are EFSG.
            \item[f] Nequix MP PFT is obtained by fine-tuning Nequix MP on force constants from
            the MDR phonon database, co-trained with MPtrj~\cite{koker2026pft}; the reported value is
            the sum of the base-model and fine-tuning budgets, 500 and 140 GPU-hours, respectively.
        \end{tablenotes}
    \end{threeparttable}
\end{table}

Matbench Discovery is a widely used benchmark for evaluating machine-learning models in high-throughput inorganic-crystal discovery~\cite{riebesell2025matbench}.
In the compliant setting, models are trained on MPtrj~\cite{deng2023chgnet}, used to relax WBM candidate structures~\cite{WBM}, and then evaluated by their formation-energy predictions and derived convex-hull distances.
The resulting scores combine classification, regression and structural-relaxation metrics, making the benchmark a stringent test of both energy accuracy and relaxation quality.
The leaderboard also reports $\kappa$SRME, a thermal-conductivity metric sensitive to the smoothness and conservativeness of the learned potential~\cite{pota2024thermal}.
Here we focus on the compliant leaderboard to keep the training data fixed across models and more directly compare architectural differences.
Non-compliant leaderboard entries may additionally involve different training datasets or fine-tuning strategies and are left for future comparison.

Table~\ref{tab:matbench} reports the Matbench Discovery leaderboard metrics together with the parameter count and train time of each model.
Figure~\ref{fig:cps-throughput} plots the same score against saturated inference throughput.
DPA4-Pro establishes the best CPS in Table~\ref{tab:matbench}, reaching 0.842 with an F1 score of 0.865, a $\kappa$SRME of 0.227 and an RMSD of 0.068~\AA{} at 25.2M parameters.
It is the leading compliant model in every ranked metric of the leaderboard, spanning stability classification, thermal conductivity and relaxation quality.
Its CPS exceeds that of EquiformerV3+DeNS-MP~\cite{liao2026equiformerv3,liao2024dens} by 0.012 (0.842 versus 0.830) while using 17\% fewer parameters (25.2M versus 30.3M).

This comparison is notable because several high-ranking baselines use additional accuracy-enhancing training stages, marked in the DeNS column of Table~\ref{tab:matbench}.
DeNS~\cite{liao2024dens} has been shown to improve equivariant force fields by denoising non-equilibrium structures, and EquiformerV3+DeNS-MP~\cite{liao2026equiformerv3} further combines this strategy with direct-force pretraining.
DPA4-Pro uses neither DeNS nor direct-force training; it is trained through the conservative energy-gradient path and still surpasses the DeNS-assisted EquiformerV3 baseline in CPS.
This suggests that the accuracy advantage originates in the architecture itself rather than in an auxiliary denoising or direct-force objective.

The smaller DPA4 variants extend this trade-off across model scales: all five variants evaluated on Matbench Discovery lie on the Pareto frontier in CPS and saturated inference throughput (Fig.~\ref{fig:cps-throughput}), and each matches or surpasses much larger baselines at a small fraction of their model size and training budget.
DPA4-Plus reaches a CPS of 0.829 with 8.8M parameters, matching EquiformerV3+DeNS-MP to within 0.001 CPS while running 11.9$\times$ faster at saturated inference, using 5.1$\times$ less training compute (31 versus 157 A100-equivalent GPU-days) and 3.4$\times$ fewer parameters.
DPA4-Air reaches a CPS of 0.816 with 5.1M parameters, exceeding eSEN-30M-MP~\cite{fu2025esen} (0.797 CPS, 30.1M parameters) with 20.9$\times$ less training compute (16 versus 335 GPU-days) and a 5.9$\times$ smaller model.
DPA4-Neo contains only 1.1M parameters but still reaches a CPS of 0.782, above MatRIS-10M-MP~\cite{zhou2026matris} (0.778 CPS, 10.4M parameters) with a saturated throughput higher by more than two orders of magnitude (28.0 against 0.055~atoms/ms), 26$\times$ less training compute (8.6 versus 224 GPU-days) and a 9.5$\times$ smaller model.
DPA4-Mini, the smallest variant at 0.66M parameters, reaches a CPS of 0.733 at a training budget of 5.8 GPU-days, above the Nequip-MP-L and Nequix MP entries of Table~\ref{tab:matbench} (0.730 and 0.729 CPS), running 23.7$\times$ and 4.3$\times$ faster at saturated inference and at 7.1$\times$ and 3.6$\times$ less training compute.

Within the DPA4 series, the accuracy--throughput trade-off of Fig.~\ref{fig:cps-throughput} flattens sharply at the top end.
The Mini--Neo, Neo--Air and Air--Plus steps trade 2.8$\times$, 2.4$\times$ and 2.3$\times$ in saturated inference throughput for CPS gains of 0.049, 0.034 and 0.013, whereas the step from DPA4-Plus to DPA4-Pro costs 3.8$\times$ (5.1 to 1.4~atoms/ms) and gains only 0.013.
DPA4-Plus therefore sits at the knee of the frontier: it matches the accuracy of the strongest baseline at an order of magnitude higher throughput, while DPA4-Pro converts a disproportionate share of throughput into the remaining margin at the top of the leaderboard, itself still 3.2$\times$ faster than EquiformerV3+DeNS-MP.
On the training side, cost grows sublinearly with model size at the lower end, rising by only 1.9$\times$ from Neo to Air against a 4.6$\times$ increase in parameters, because neighbor-list construction, geometric preprocessing and memory traffic occupy a larger fraction of each step once the arithmetic workload no longer saturates the GPU.
For perspective, the A100 has a peak FP32 throughput of 19.5~TFLOPS, whereas recent single-card workstation GPUs can exceed 100~TFLOPS.
Rescaled by the peak-FLOP ratio, the 16 GPU-day budget of DPA4-Air corresponds to roughly three days on such hardware, which makes DPA4-Air and DPA4-Neo practical for high-throughput workflows.

With the leading compliant models now separated by only small CPS margins, substantial further gains in accuracy are likely to require training data beyond the fixed MPtrj set.
The compliant setting nevertheless remains the appropriate basis for the present comparison: holding the training data fixed across models attributes the observed CPS differences to the architectures themselves.
Within this setting, the accuracy--cost frontier remains open in the cost direction, namely how far equal accuracy can be attained at further reduced inference and training cost.

\subsection{OMat24 inorganic materials benchmark}
\label{sec:omat24}

\begin{table}[t]
    \PaperTableStyle
    \begin{threeparttable}
        \caption{Performance on the OMat24 validation set.}
        \label{tab:omat24}
        \begin{tabular*}{\linewidth}{@{\extracolsep{\fill}} l c c c c c c @{}}
            \toprule
            \textbf{Model}                                                &
            \textbf{Prediction}\tnote{a}                                  &
            \textbf{Energy}\tnote{b}                                      &
            \textbf{Force}\tnote{b}                                       &
            \textbf{Stress}\tnote{b}                                      &
            \textbf{Params}\tnote{c}                                      &
            \textbf{Train time}\tnote{d}                                                                                           \\
            \midrule
            MACE-Omat-Small\tnote{e}~\cite{batatia2022mace,batatia2025crosslearning} & Gradient    & 17.9          & 85.9                & 3.5              & 8.2M       & --               \\
            MACE-Omat-Medium\tnote{e}                                      & Gradient    & 16.3          & 78.4                & 3.3              & 9.1M       & --               \\
            EquiformerV2-S~\cite{liao2023equiformerv2,barroso2024open}     & Direct      & 11.0          & 49.2                & \underline{2.4}  & 31M        & --               \\
            EquiformerV2-M                                                & Direct      & 10.0          & 44.8                & \textbf{2.3}     & 87M        & --               \\
            EquiformerV2-L                                                & Direct      & \underline{9.6} & 43.1              & \textbf{2.3}     & 154M       & --               \\
            EquiformerV3, $L_{\max}=4$~\cite{liao2026equiformerv3}         & Direct+Grad & 10.4          & 43.5                & 2.6              & 30M        & 4,782            \\
            EquiformerV3, $L_{\max}=6$                                    & Direct+Grad & 10.1          & \textbf{41.6}       & 2.5              & 49M        & 9,197            \\
            eSEN~\cite{fu2025esen}                                        & Gradient    & 10.7          & 47.3                & 2.6              & 30M        & --               \\
            UMA-S~\cite{wood2025umafamilyuniversalmodels}                 & Direct+Grad & 11.3          & 57.1                & 2.9              & 150M (6M)  & 46,080\tnote{f}  \\
            UMA-M                                                         & Direct+Grad & 10.0          & 47.3                & 2.7              & 1.4B (50M) & 129,024\tnote{f} \\
            UMA-L                                                         & Direct      & 9.7           & 43.5                & 2.5              & 700M       & 95,232\tnote{f}  \\
            \addlinespace[0.25em]
            \midrule
            \addlinespace[0.15em]
            DPA4-Nano                                                     & Gradient    & 18.7          & 95.8                & 3.4              & 0.48M      & 54               \\
            DPA4-Mini                                                     & Gradient    & 14.0          & 70.7                & 2.9              & 0.66M      & 80               \\
            DPA4-Neo                                                      & Gradient    & 12.3          & 58.7                & 2.7              & 1.1M       & 139              \\
            DPA4-Air                                                      & Gradient    & 10.6          & 51.0                & 2.6              & 5.1M       & 207              \\
            DPA4-Plus                                                     & Gradient    & 9.8           & 45.6                & 2.5              & 8.8M       & 368              \\
            DPA4-Pro                                                      & Gradient    & \textbf{9.4}  & \underline{42.7}    & \underline{2.4}  & 25.2M      & 1,474            \\
            \bottomrule
        \end{tabular*}
        \begin{tablenotes}[flushleft]
            \PaperTableNotesStyle
            \item[] The six DPA4 size classes are trained on OMat24 through the conservative energy-gradient objective and compared with published baselines.
            \item[] Boldface and underlining denote the best and second-best values of each evaluation metric; values that coincide at the reported precision carry the same mark.
            \item[a] Prediction denotes the force-prediction pathway: Gradient obtains forces as energy gradients, Direct predicts forces directly, and Direct+Grad denotes direct-force pre-training followed by gradient fine-tuning.
            \item[b] Values are mean absolute errors in meV per atom for energy, meV/\AA\ for forces and meV/\AA$^3$ for stress; lower values are better.
            \item[c] Params reports total parameters; parentheses give active parameters for UMA MoLE models.
            \item[d] Train time is the aggregate accelerator time, reported in H100-equivalent GPU-hours. DPA4 H20 runs are converted using the ratio of the theoretical FP32 peak throughputs of H20 and H100 GPUs. UMA runs on H200 GPUs are counted at parity, as the H200 and H100 share the same theoretical FP32 peak throughput.
            \item[e] No OMat24 validation metrics are published for MACE-Omat; the values listed here were obtained in this work by evaluating the released checkpoints through the ASE interface, and no train-time estimate is available.
            \item[f] The UMA train times include multitask pre-training and fine-tuning and therefore are not OMat24-only estimates.
        \end{tablenotes}
    \end{threeparttable}
\end{table}

OMat24~\cite{barroso2024open} is a large-scale inorganic materials dataset of 118 million DFT calculations, obtained by non-equilibrium sampling --- rattling, ab initio molecular dynamics and the relaxation of rattled structures --- starting from structures in the Alexandria database~\cite{schmidt2023alexandria}.
Its validation split has become a common reference point for single-dataset comparisons of recent equivariant architectures.
Because the dataset is much larger and more diverse than MPtrj, it separates architectures by their capacity to fit a broad potential-energy surface.
We therefore train the six DPA4 size classes on OMat24 with the architecture configurations of Supplementary Table~\ref{tab:sm_model_config} and report energy, force and stress MAEs on the validation set.

Table~\ref{tab:omat24} shows that DPA4-Pro attains the lowest energy MAE of all models compared, 9.4~meV/atom, with 25.2M parameters.
The next most accurate models, EquiformerV2-L~\cite{liao2023equiformerv2,barroso2024open} (9.6~meV/atom, 154M parameters) and UMA-L~\cite{wood2025umafamilyuniversalmodels} (9.7~meV/atom, 700M parameters), are 6.1$\times$ and 27.8$\times$ larger.
Its force MAE of 42.7~meV/\AA{} is the second lowest, behind EquiformerV3 with $L_{\max}=6$~\cite{liao2026equiformerv3} (41.6~meV/\AA{}, 49M parameters), and only the larger EquiformerV2 models reach a lower stress MAE (2.3 against 2.4~meV/\AA$^3$).
All DPA4 entries predict forces as energy gradients, whereas the strongest baselines rely on direct force prediction or on direct-force pre-training followed by gradient fine-tuning.
The training budgets also differ substantially: DPA4-Pro required 1,474 H100-equivalent GPU-hours, 6.2$\times$ below the 9,197 of EquiformerV3 with $L_{\max}=6$.

Across the six DPA4 size classes, increasing the parameter count from 0.48M to 25.2M and the training cost from 54 to 1,474 H100-equivalent GPU-hours monotonically reduces the energy MAE from 18.7 to 9.4~meV/atom, the force MAE from 95.8 to 42.7~meV/\AA{} and the stress MAE from 3.4 to 2.4~meV/\AA$^3$.
DPA4-Nano, DPA4-Mini and DPA4-Neo occupy the compact 0.48--1.1M-parameter regime at costs of only 54--139 H100-equivalent GPU-hours; within this regime, Mini and Neo reach energy MAEs of 14.0 and 12.3~meV/atom, below the 16.3~meV/atom of the larger 9.1M-parameter MACE-Omat-Medium.
DPA4-Air then matches the 30M-parameter eSEN baseline~\cite{fu2025esen} in energy (10.6 versus 10.7~meV/atom) with 5.9$\times$ fewer parameters, while DPA4-Plus reaches 9.8~meV/atom and 45.6~meV/\AA{} with 8.8M parameters, a lower energy MAE than every baseline except EquiformerV2-L and UMA-L, both more than an order of magnitude larger.
At 368 H100-equivalent GPU-hours, DPA4-Plus costs 4.0$\times$ less to train than DPA4-Pro for an energy MAE only 0.4~meV/atom higher, making it, as on Matbench Discovery, the most favorable accuracy--cost balance in the series.
Together with the Matbench Discovery results, this shows that the accuracy--parameter advantage of DPA4 persists when the training set is enlarged from MPtrj to OMat24.

\subsection{OMol25 molecular benchmark}
\label{sec:omol25}

\begin{table}[t]
    \PaperTableStyle
    \begin{threeparttable}
        \caption{Performance on the OMol25 OMol-0 out-of-distribution composition validation split.}
        \label{tab:omol25}
        \begin{tabular*}{\linewidth}{@{\extracolsep{\fill}} l c c c c c c @{}}
            \toprule
            \textbf{Model}                       &
            \textbf{Prediction}\tnote{a}         &
            \textbf{Cons.}\tnote{a}              &
            \textbf{Energy}\tnote{b}             &
            \textbf{Force}\tnote{b}              &
            \textbf{Params}\tnote{c}             &
            \textbf{Train time}\tnote{d}          \\
            \midrule
            MACE-OMol-L-0~\cite{batatia2022mace,levine2025open}                       & Gradient    & \cmark & 4.56             & 0.25              & 52.4M\tnote{e} & --               \\
            GemNet-OC-r6\tnote{f}~\cite{gasteiger2022gemnet,levine2025open}           & Direct      & \xmark & 1.04             & 0.17              & 39.1M      & --               \\
            GemNet-OC\tnote{f}~\cite{gasteiger2022gemnet,levine2025open}              & Direct      & \xmark & \textbf{0.78}    & 0.15              & 39.1M      & --               \\
            eSEN-sm-d.~\cite{fu2025esen,levine2025open}                              & Direct      & \xmark & 2.06             & 0.23              & 6.3M       & --               \\
            eSEN-sm-cons.~\cite{fu2025esen,levine2025open}                           & Gradient    & \cmark & 1.77             & 0.19              & 6.3M       & --               \\
            eSEN-md-d.~\cite{fu2025esen,levine2025open}                              & Direct      & \xmark & 1.15             & \textbf{0.11}    & 50.7M      & --               \\
            UMA-S-1.1~\cite{wood2025umafamilyuniversalmodels,levine2025open}          & Direct+Grad & \cmark & 1.45             & 0.22              & 150M (6M)  & 46,080\tnote{g}  \\
            UMA-M-1.1~\cite{wood2025umafamilyuniversalmodels,levine2025open}          & Direct+Grad & \cmark & 1.14             & 0.14              & 1.4B (50M) & 129,024\tnote{g} \\
            \addlinespace[0.25em]
            \midrule
            \addlinespace[0.15em]
            DPA4-Nano                                      & Gradient    & \cmark & 8.02             & 0.776             & 0.48M      & 186              \\
            DPA4-Mini                                      & Gradient    & \cmark & 4.97             & 0.502             & 0.66M      & 431              \\
            DPA4-Neo                                       & Gradient    & \cmark & 2.91             & 0.332             & 1.1M       & 819              \\
            DPA4-Air                                       & Gradient    & \cmark & 1.45             & 0.189             & 5.1M       & 1,549            \\
            DPA4-Plus                                      & Gradient    & \cmark & 1.20             & 0.146             & 8.8M       & 2,521            \\
            DPA4-Pro                                       & Gradient    & \cmark & \underline{0.92} & \underline{0.117} & 25.2M      & 8,829            \\
            \bottomrule
        \end{tabular*}
        \begin{tablenotes}[flushleft]
            \PaperTableNotesStyle
            \item[] A dash indicates that a value is unavailable.
            \item[] Boldface and underlining denote the best and second-best values within each MAE column.
            \item[a] Prediction denotes the force-prediction pathway: Gradient obtains forces as energy gradients, Direct predicts forces directly, and Direct+Grad denotes direct-force pre-training followed by gradient fine-tuning. Cons.\ marks whether the evaluated forces are conservative, i.e.\ energy gradients: \cmark{} for Gradient and Direct+Grad models, \xmark{} for Direct models.
            \item[b] Values are mean absolute errors (MAEs), reported in kcal/mol for energy and kcal/mol/\AA{} for forces. Values originally reported in meV and meV/\AA{} are converted using $1~\mathrm{meV}=0.0230605~\mathrm{kcal/mol}$ and $1~\mathrm{meV}/\text{\AA}=0.0230605~\mathrm{kcal/mol}/\text{\AA}$, respectively.
            \item[c] Params reports total parameters; parentheses give active parameters for UMA MoLE models.
            \item[d] Train time is the aggregate accelerator time, reported in H100-equivalent GPU-hours. DPA4 H20 runs are converted using the ratio of the theoretical FP32 peak throughputs of H20 and H100 GPUs. UMA runs on H200 GPUs are counted at parity, as the H200 and H100 share the same theoretical FP32 peak throughput.
            \item[e] The MACE-OMol-L-0 parameter count was computed from the released \texttt{MACE-omol-0-extra-large-1024.model} checkpoint distributed through the \href{https://github.com/ACEsuit/mace}{MACE repository}.
            \item[f] GemNet-OC-r6 uses a 6~\AA{} cutoff radius, whereas GemNet-OC uses the default 12~\AA{} cutoff radius.
            \item[g] The listed UMA train times are for the multitask UMA 1.0 runs and are not OMol25-only estimates. UMA-1.1 was obtained by further fine-tuning these models, but the cost of that step is not reported.
        \end{tablenotes}
    \end{threeparttable}
\end{table}

OMol25~\cite{levine2025open} is a molecular dataset of DFT single-point calculations at the $\omega$B97M-V/def2-TZVPD level, spanning biomolecules, metal complexes, electrolytes and recomputed community datasets, with charge and spin states as explicit inputs.
It is substantially broader in chemistry than SPICE-MACE-OFF and therefore probes transferability across molecular classes rather than accuracy within a single class.
The dataset has been released in successive versions; we train on the OMol-0 release, which the dataset paper describes as containing about 100 million DFT datapoints and which is also the release used for the UMA-1.1 models~\cite{levine2025open}, and we evaluate on its out-of-distribution composition validation split.

Across the six DPA4 size classes, increasing the parameter count from 0.48M to 25.2M monotonically decreases the energy and force MAEs from 8.02 to 0.92~kcal/mol and from 0.776 to 0.117~kcal/mol/\AA{}, respectively.
Across the series, the training budget increases from 186 to 8,829 H100-equivalent GPU-hours from Nano through Pro.
At the compact end, DPA4-Nano, DPA4-Mini and DPA4-Neo span 0.48--1.1M parameters, energy MAEs of 8.02--2.91~kcal/mol and training costs of 186--819 H100-equivalent GPU-hours.

DPA4-Air then matches the 1.45~kcal/mol energy MAE of UMA-S-1.1 at the reported precision and improves its force MAE from 0.22 to 0.189~kcal/mol/\AA{} with 5.1M parameters, 29$\times$ fewer than the 150M total parameters of UMA-S-1.1.
The strongest conservative baseline in Table~\ref{tab:omol25} is UMA-M-1.1, with energy and force MAEs of 1.14~kcal/mol and 0.14~kcal/mol/\AA{} at 1.4B total and 50M active parameters~\cite{wood2025umafamilyuniversalmodels,levine2025open}.
DPA4-Plus matches it at 1.20~kcal/mol and 0.146~kcal/mol/\AA{}, and DPA4-Pro surpasses it on both metrics at 0.92~kcal/mol and 0.117~kcal/mol/\AA{}, trained with 2,521 and 8,829 H100-equivalent GPU-hours, respectively.
UMA-M-1.0 required 129,024 GPU-hours for its two-stage training.
UMA-M-1.1 was obtained by further fine-tuning this model, but the cost of that step is not reported.
Non-conservative models predict energy and forces separately, free from the constraint that the forces are the gradients of the energy.
Among them, GemNet-OC attains the lowest energy MAE of the table at 0.78~kcal/mol~\cite{gasteiger2022gemnet,levine2025open}, and eSEN-md-d.\ attains the lowest force MAE at 0.11~kcal/mol/\AA{}~\cite{fu2025esen,levine2025open}.
Neither, however, surpasses DPA4-Pro on energy and force simultaneously.
The monotonic error reduction across the DPA4 size series extends the scaling trends observed on Matbench Discovery and OMat24 to both OMol25 energy and force.

\subsection{SPICE-MACE-OFF molecular benchmark}
\label{sec:small_mol}

SPICE-MACE-OFF~\cite{kovacs2023mace} is a small-molecule benchmark for transferable organic force fields.
The benchmark spans PubChem molecules, DES370K~\cite{donchev2021des370k} monomers and dimers, dipeptides, solvated amino acids, water clusters and larger QMugs-derived molecules~\cite{eastman2023spice,isert2022qmugs}.
Reference energies and forces are computed at the $\omega$B97M-D3(BJ)/def2-TZVPPD level~\cite{eastman2023spice,kovacs2023mace}.
We train the DPA4-Neo, DPA4-Air and DPA4-Plus variants on SPICE-MACE-OFF, using the same train/validation/test split as MACE-OFF~\cite{kovacs2023mace} and DPA3~\cite{zhang2025dpa3}, and report per-subset energy and force MAEs.
Following DPA3~\cite{zhang2025dpa3}, we also report a logarithmic weighted average MAE (LWAMAE) with equal weights, equivalent to the geometric mean of subset MAEs.

\begin{table}[tp]
    \PaperTableStyle
    \begin{threeparttable}
        \caption{SPICE-MACE-OFF small-molecule benchmark performance.}
        \label{table:spice}
        \begin{tabular*}{\linewidth}{@{\extracolsep{\fill}} l *{8}{c} @{}}
            \toprule
            \textbf{Dataset} &
            \makecell{\textbf{MACE}\\\textbf{(M)}} &
            \makecell{\textbf{MACE}\\\textbf{(L)}} &
            \makecell{\textbf{eSEN}\\\textbf{(3.2M)}} &
            \makecell{\textbf{eSEN}\\\textbf{(6.5M)}} &
            \makecell{\textbf{DPA3-}\\\textbf{L24}} &
            \makecell{\textbf{DPA4-}\\\textbf{Neo}} &
            \makecell{\textbf{DPA4-}\\\textbf{Air}} &
            \makecell{\textbf{DPA4-}\\\textbf{Plus}} \\
            \midrule
            \multicolumn{9}{@{}l}{\textit{Energy MAE (meV/atom)}} \\
            \addlinespace[0.15em]
            PubChem         & 0.91 & 0.88 & 0.22 & \underline{0.15} & 0.24 & 0.24 & \underline{0.13} & \textbf{0.12} \\
            DES370K M.      & 0.63 & 0.59 & 0.17 & \underline{0.13} & 0.18 & 0.17 & 0.17             & \textbf{0.12} \\
            DES370K D.      & 0.58 & 0.54 & 0.20 & \underline{0.15} & 0.23 & 0.18 & \textbf{0.11}    & \textbf{0.11} \\
            Dipeptides      & 0.52 & 0.42 & 0.10 & 0.07             & 0.13 & 0.10 & \underline{0.06} & \textbf{0.04} \\
            Sol.
            AA & 1.21 & 0.98 & 0.30 & 0.25 & 0.31 & 0.27 & \underline{0.13} & \textbf{0.09} \\ Water & 0.76 & 0.83 & 0.24 & \underline{0.15} & 0.32 & 0.30 & 0.16 & \textbf{0.11} \\ QMugs & 0.69 & 0.54 & 0.16 & 0.12 & 0.17 & 0.12 & \underline{0.07} & \textbf{0.05} \\ LWAMAE\tnote{a} & 0.73 & 0.65 & 0.19 & 0.14 & 0.22 & 0.18 & \underline{0.11} & \textbf{0.09} \\ \addlinespace[0.25em] \midrule \addlinespace[0.15em] \multicolumn{9}{@{}l}{\textit{Force MAE (meV/\AA)}} \\ \addlinespace[0.15em] PubChem & 20.57 & 14.75 & 6.10 & 4.21 & 8.47 & 5.92 & \underline{3.59} & \textbf{2.98} \\ DES370K M.
            & 9.36  & 6.58  & 1.85 & 1.24 & 3.15 & 2.13 & \underline{1.22} & \textbf{0.91} \\
            DES370K D.      & 9.02  & 6.62  & 2.77 & 2.12 & 3.19 & 2.16 & \underline{1.23} & \textbf{0.96} \\
            Dipeptides      & 14.27 & 10.19 & 3.04 & 2.00 & 4.81 & 3.10 & \underline{1.68} & \textbf{1.25} \\
            Sol. AA         & 23.26 & 19.43 & 5.76 & 3.68 & 8.77 & 6.69 & \underline{3.59} & \textbf{2.59} \\
            Water           & 15.27 & 13.57 & 3.88 & 2.50 & 6.89 & 4.46 & \underline{2.47} & \textbf{1.78} \\
            QMugs           & 23.58 & 16.93 & 5.70 & 3.78 & 8.66 & 5.36 & \underline{2.91} & \textbf{2.14} \\
            LWAMAE\tnote{a} & 15.42 & 11.66 & 3.83 & 2.58 & 5.78 & 3.89 & \underline{2.18} & \textbf{1.64} \\
            \midrule
            \textbf{Params}                  & 2.3M & 6.9M & 3.2M & 6.5M & 4.9M & 1.1M & 5.1M & 8.8M \\
            \textbf{Train time}\tnote{b}       & 10   & 14   & --   & --   & 288  & 7    & 7.8  & 9 \\
            \bottomrule
        \end{tabular*}
        \begin{tablenotes}[flushleft]
            \PaperTableNotesStyle
            \item[] The DPA4-Neo, DPA4-Air and DPA4-Plus variants are compared with MACE-OFF, eSEN and DPA3 baselines, together with parameter counts and train times.
            \item[] Reported values are test-set MAEs on the SPICE-MACE-OFF dataset~\cite{kovacs2023mace}; energy MAEs are in meV/atom and force MAEs are in meV/\AA.
            \item[] Boldface and underlining denote the best and second-best values for each dataset--target pair; values that coincide at the reported precision carry the same mark.
            \item[a] LWAMAE denotes the equal-weight geometric mean of subset MAEs, following the DPA3 comparison protocol~\cite{zhang2025dpa3}.
            \item[b] Train time is the aggregate accelerator time, reported in A100-equivalent GPU-days; a dash indicates that no public train-time estimate is available.
        \end{tablenotes}
    \end{threeparttable}
\end{table}

Table~\ref{table:spice} shows that DPA4 improves molecular energy and force accuracy across the chemically diverse SPICE-MACE-OFF subsets.
DPA4-Plus attains the lowest aggregate energy and force errors, with LWAMAEs of 0.09~meV/atom and 1.64~meV/\AA{}, respectively, and is best or tied for best on all 14 subset--target pairs at the reported precision.
Against the 6.5M-parameter eSEN baseline~\cite{fu2025esen}, this 8.8M-parameter model lowers the aggregate energy and force errors by 39\% and 36\%.
Against DPA3-L24~\cite{zhang2025dpa3}, the corresponding reductions are 61\% and 72\%, showing that the gain extends well beyond the inorganic-crystal benchmark.

The smaller variants preserve most of this accuracy at a fraction of the model size.
With 5.1M parameters, DPA4-Air reaches aggregate LWAMAEs of 0.11~meV/atom and 2.18~meV/\AA{}, which are 21\% and 16\% below the larger 6.5M-parameter eSEN baseline~\cite{fu2025esen} and 50\% and 62\% below DPA3-L24~\cite{zhang2025dpa3}.
It ranks first or second on 12 of the 14 subset--target pairs, and on the PubChem subset its force error is the lowest of all models compared.
DPA4-Neo extends the comparison to a still smaller scale: with 1.1M parameters it reaches LWAMAEs of 0.18~meV/atom and 3.89~meV/\AA{}, matching the 3.2M-parameter eSEN baseline~\cite{fu2025esen} with 2.9$\times$ fewer parameters and improving on DPA3-L24~\cite{zhang2025dpa3} by 18\% in energy and 33\% in force with 4.5$\times$ fewer parameters.
Beyond model size, these gains come at low training cost: DPA4-Neo, DPA4-Air and DPA4-Plus require only 7, 7.8 and 9 A100-equivalent GPU-days, respectively (Table~\ref{table:spice}), below the MACE baselines (10 and 14) and 32--41$\times$ smaller than the 288 GPU-days of DPA3-L24; no public training cost is reported for the eSEN baselines.

\subsection{Training and inference efficiency}
\label{sec:efficiency}

\begin{figure}[t]
    \centering
    \includegraphics[width=0.9\textwidth]{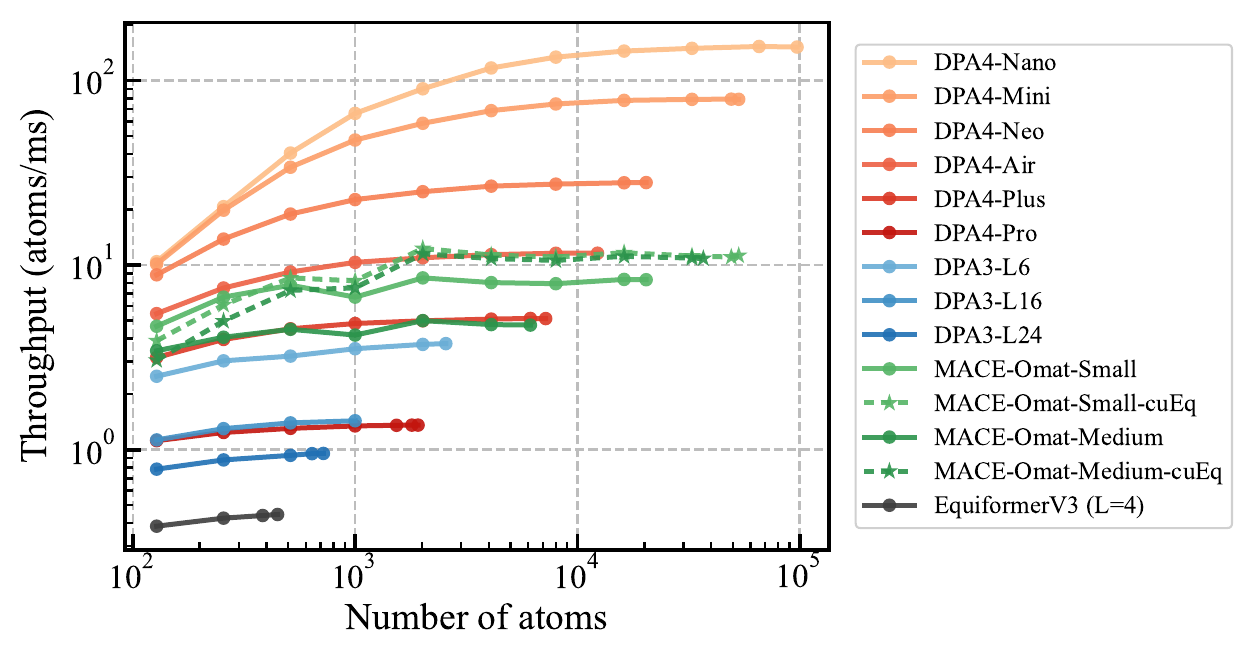}
    \caption{
        Inference throughput on diamond supercells at experimental density through the ASE calculator interface~\cite{ase-paper}.
        The six DPA4 size classes, from Nano to Pro, are compared with DPA3, MACE-Omat and EquiformerV3 baselines.
        Each point reports end-to-end throughput from the mean wall time of 100 evaluations of energy, forces and stress after 10 warm-up evaluations on a single NVIDIA H20 GPU.
        Diamond at the 6~\AA{} cutoff used by DPA4 carries 158 neighbors per atom, so the sweep varies the system size at a fixed and comparatively dense local environment.
        Each curve terminates at the largest supercell the corresponding model could evaluate within the memory of one GPU.
        cuEq denotes MACE inference with NVIDIA cuEquivariance-accelerated equivariant kernels~\cite{nvidia_cuequivariance}.
        Higher atom-normalized throughput indicates faster inference.
    }
    \label{fig:infer_diamond}
\end{figure}

The low training cost documented in the preceding sections rests on a systems contribution of this work: conservative energy-gradient training that runs entirely inside compiled GPU code.
The obstacle is the mixed second derivative required by force supervision --- the training gradient differentiates the energy-derived forces $-\partial E_{\Theta}/\partial\mathbf{R}$ once more with respect to the parameters --- which a standard compiled forward/backward stack does not expose inside the compiled region.
DPA4 resolves this by executing the energy-to-force derivative during symbolic tracing, so that the compiled graph itself contains the energy evaluation and its coordinate derivative, and the outer backward pass differentiates the force residual with respect to the parameters (Section~\ref{sec:compiled-training}).
Making this path robust requires a set of dedicated systems measures: a shape-stable compact edge representation under symbolic shapes, surgical removal of the autograd detach nodes that would sever the force-loss gradient, a compilation cache keyed by graph topology, and Inductor settings compatible with higher-order autograd (Section~\ref{sec:compiled-training}).
In controlled ablations, the compiled path under \texttt{torch.compile}~\cite{ansel2024pytorch2} with bf16 automatic mixed precision gives a 3.1$\times$ wall-clock training speedup and reduces peak training memory to about 40\% of the FP32 baseline (Table~\ref{tab:dpa4_compile_ablation}).
The gain is obtained without replacing energy-based force matching by a direct-force surrogate: the compiled path preserves the scalar energy-to-force relation of the uncompiled model.

Inference efficiency is evaluated through the ASE calculator interface~\cite{ase-paper}, which provides a common end-to-end route for single-point energy, force and stress evaluation across DPA4, DPA3, MACE and EquiformerV3 baselines.
The MACE baselines are evaluated both in their standard path and with NVIDIA cuEquivariance-accelerated equivariant kernels~\cite{nvidia_cuequivariance}.
The comparison sweeps diamond supercells at experimental density from a target size of 128 atoms upwards and therefore includes neighbor-list construction, model evaluation and calculator-interface overhead, while holding the local environment fixed at 158 neighbors per atom within the 6~\AA{} DPA4 cutoff.
All inference benchmarks are run on the same H20 hardware and software environment with the same adaptive size-search and timing protocol, as detailed in Supplementary Section~\ref{sec:sm-inference}.

Figure~\ref{fig:infer_diamond} shows that throughput rises with system size and approaches a saturated regime as the GPU becomes fully occupied.
Across the six DPA4 size classes, reducing the model scale from Pro to Nano increases saturated throughput monotonically from 1.36 to 153~atoms/ms and expands the largest evaluable system from 1{,}920 to 97{,}336 atoms.
DPA4-Nano therefore extends the throughput end of the family, reaching 1.93$\times$ the saturated throughput of DPA4-Mini and 112$\times$ that of DPA4-Pro.
For Fig.~\ref{fig:cps-throughput}, saturated throughput is taken as the largest atom-normalized throughput among all successful trials of the adaptive size search.
Because no Matbench Discovery CPS is reported for DPA4-Nano, the CPS--throughput comparison comprises the five DPA4 variants with available scores and ten compliant baselines.
Among these fifteen models, DPA4-Pro has the highest CPS, DPA4-Mini has the highest saturated throughput, and the three intermediate DPA4 variants are non-dominated between these endpoints.
The MACE-Omat and EquiformerV3 baselines of this sweep share their architecture and parameter count with the corresponding entries of Table~\ref{tab:omat24} (EquiformerV3 in its $L_{\max}=4$, 30M configuration), so accuracy and throughput can be read together.

Against MACE-Omat~\cite{batatia2022mace,batatia2025crosslearning}, the six DPA4 size classes expose a continuous accuracy--throughput trade-off rather than a uniform dominance relation.
At the throughput extreme, the 0.48M-parameter DPA4-Nano reaches 153~atoms/ms, 12.4$\times$ the saturated throughput of cuEquivariance-accelerated MACE-Omat-Small~\cite{nvidia_cuequivariance}, while its OMat24 energy and force MAEs are 4.5\% and 12\% higher, respectively.
The trade-off reverses already at DPA4-Mini: with 0.66M parameters, it lowers the OMat24 energy and force MAEs of the 8.2M-parameter MACE-Omat-Small by 22\% and 18\%, while running 9.3$\times$ faster than its standard implementation and 6.4$\times$ faster than its cuEquivariance-accelerated form.
DPA4-Neo, with 1.1M parameters, lowers both MAEs of the 9.1M-parameter MACE-Omat-Medium by 25\% and is still 2.4$\times$ faster than that baseline in its accelerated form.
DPA4-Air reaches 11.6~atoms/ms, matching the cuEquivariance-accelerated MACE-Omat-Medium result of 11.5~atoms/ms and remaining within 6\% of the accelerated MACE-Omat-Small result of 12.3~atoms/ms, at energy and force MAEs 35--41\% below theirs.

Against EquiformerV3~\cite{liao2026equiformerv3}, the same advantage appears in throughput and accessible system size.
DPA4-Pro is more accurate than the $L_{\max}=4$ configuration in all three targets (energy 9.4 against 10.4~meV/atom, force 42.7 against 43.5~meV/\AA{}, stress 2.4 against 2.6~meV/\AA$^3$) while running 3.0$\times$ faster and reaching 1{,}920 atoms against 448 within the memory of one GPU.
DPA4-Plus, with 3.4$\times$ fewer parameters, is 11.5$\times$ faster and reaches 7{,}200 atoms at a lower energy MAE (9.8 against 10.4~meV/atom) and a 4.8\% higher force MAE.
The DPA3-L6 and DPA3-L16 baselines~\cite{zhang2025dpa3} are 3.1$\times$ and 3.6$\times$ slower than DPA4-Air and DPA4-Plus, respectively.

\subsection{Mechanism ablations}
\label{sec:ablation}

To establish that the accuracy--cost gains originate from the proposed mechanisms rather than from confounding hyperparameter choices, we ablate the architecture cumulatively.
Starting from the complete DPA4 interaction block, the components are removed one after another, with each removal retained in all subsequent configurations; every configuration is retrained under an identical protocol and evaluated on a subsample of the WBM test set~\cite{WBM}.
Figure~\ref{fig:ablation} reports the outcome, with the underlying numbers in Supplementary Table~\ref{tab:sm_cumulative_ablation}; the complete sweeps, protocols and configurations are given in Supplementary Section~\ref{sec:sm-ablation}.

Removing all four components raises the energy and force MAEs by 27.9\% and 13.8\%, respectively, relative to the complete architecture (31.007 to 39.671~meV/atom and 39.169 to 44.581~meV/\AA{}).
To make this cumulative decomposition additive, every percentage quoted for the ladder is an absolute increment divided by the corresponding complete-architecture value, rather than a ratio between adjacent rungs.
Envelope-gated attention (A3) is the largest single contributor: switching it off accounts for 16.6 percentage points of the energy increase and 6.6 points of the force increase.
Because its weights are computed from rotationally invariant scalar channels, this gain reflects adaptive neighbor weighting rather than any relaxation of SO(3) equivariance.
The low-rank edge--node SO(2)-equivariant degree coupling (A1) accounts for 4.2 energy and 5.3 force percentage points, and thus matters more for forces than for energies, whereas the WBN inside the equivariant FFN (A4) accounts for 5.3 and 2.5 points and shows the opposite balance.
The second WBN, acting on the post-aggregation message--node path, i.e.\ the term $\mathcal{M}_i$ of Eq.~\eqref{eq:dpa4-message-node} (MN in Fig.~\ref{fig:ablation}), contributes 1.9 energy percentage points; its force contribution is $-0.5$ points because removing it marginally lowers the force MAE, so this ladder does not resolve a force benefit (Fig.~\ref{fig:ablation}b).

The aggregate inference overhead is modest: removing all four components reduces the parameter count from 2.05M to 1.19M but shortens the inference time by only 12.4\%, from 20.2 to 17.7~s (Fig.~\ref{fig:ablation}c).
Within the same additive decomposition, attention contributes 16.6 energy and 6.6 force percentage points while accounting for 2.0 points of the complete-model inference time.

\begin{figure}[t]
    \centering
    \includegraphics[width=\linewidth]{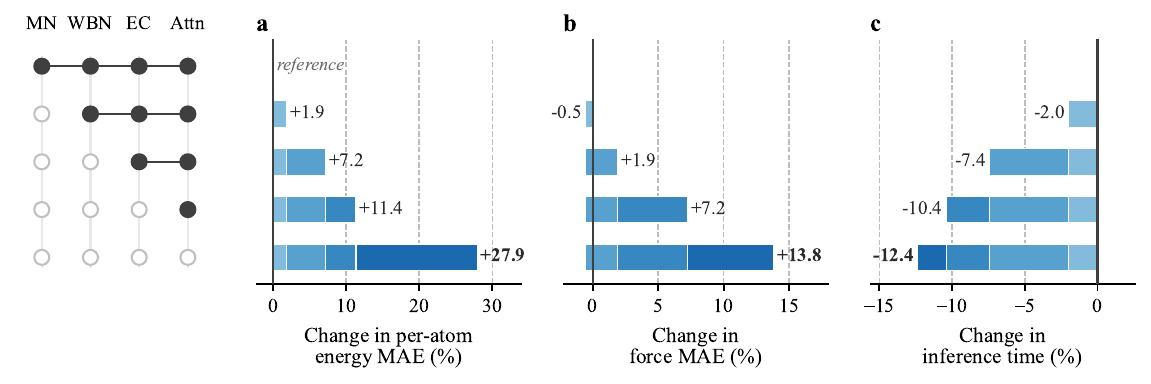}
    \caption{
        Cumulative ablation of the DPA4 architectural components.
        The matrix states which components each configuration retains: a filled marker denotes a component that is present and an open marker one that has been switched off, so that reading downwards accumulates the removals of the rows above.
        The row labels denote the four components: MN, the Wigner bilinear network (WBN) grid on the post-aggregation message--node path; WBN, the grid inside the equivariant feed-forward network (A4); EC, the low-rank edge--node SO(2)-equivariant degree coupling (A1); and Attn, the envelope-gated attention aggregation (A3).
        (a--c) Per-atom energy MAE, force MAE and inference time of each configuration relative to the complete architecture; every change is normalized by the corresponding complete-architecture value, and the segments therefore sum to the cumulative label at each rung.
        Inference time is the wall-clock time of a full test pass over the WBM subsample.
        Every configuration uses a single focus stream: the multi-focus design (A2) redistributes a fixed angular budget across parallel streams rather than adding a branch, and is therefore compared at matched SO(2) width in Supplementary Section~\ref{sec:sm-ablation}.
    }
    \label{fig:ablation}
\end{figure}

The four components of Fig.~\ref{fig:ablation} can each be disabled without changing the dimensions of the equivariant representation; this is what makes the cumulative ladder well defined.
The multi-focus design (A2) is different in kind: it separates expressivity from raw channel width by redistributing a fixed angular budget across parallel streams rather than by adding a branch, so its effect is defined only against a matched SO(2) width.
Every configuration in Fig.~\ref{fig:ablation} uses a single focus stream, so the contributions measured there are obtained without any assistance from A2, and the reallocation is assessed on its own.
At a fixed SO(2) dimension of 192, the 96-channel two-focus model matches the accuracy of the 192-channel single-focus model (energy and force MAEs of 26.994~meV/atom and 36.408~meV/\AA{} versus 27.286~meV/atom and 36.477~meV/\AA{}) while using 56\% fewer parameters, 23\% less training time and 34\% less inference time.
Widening the single stream beyond this point yields no further benefit: the 256-channel single-focus model carries about four times as many parameters as the 96-channel two-focus model, yet attains a 5.2\% higher energy MAE with a nearly identical force MAE (Supplementary Table~\ref{tab:dpa4_focus_ablation}).
Under a shared training recipe, a single wide equivariant stream is thus harder to optimize to comparable quality than several narrower focus streams.
Allocating the same angular budget to focus channels that compete over distinct edge-local motifs therefore provides a more parameter-efficient and more readily optimized route for scaling the SO(2) representation than widening a single stream.

The mechanism behind the A1 rung of Fig.~\ref{fig:ablation} is how the per-degree edge radial profiles modulate the node message in the local SO(2) frame.
The ablated variant applies the profiles degree-diagonally, the degree-$l$ profile scaling only the degree-$l$ channels, whereas the full form lets them parameterize a cross-degree kernel that mixes input and output degrees within each fixed $|m|$ stratum.
This kernel is a learnable function of the degree-indexed edge radial profiles, conserving $|m|$ as SO(2) equivariance requires (Section~\ref{sec:so2}), and is factorized across channels to keep the additional cost small.
A low rank suffices: in the dedicated rank sweep, even the rank-1 form recovers most of the gain over degree-diagonal scaling (Supplementary Table~\ref{tab:dpa4_radial_degree_ablation}), and the released variants use rank 1 or 2 (Supplementary Table~\ref{tab:sm_model_config}).

The quadrature rule on which the WBN (A4) is evaluated sets the numerical equivariance of the nonlinearity, and the comparison below isolates that choice from the design of the nonlinearity itself.
Table~\ref{tab:s2_full_equivariance} reports the equivariance error of the spherical-grid nonlinearity (Section~\ref{sec:s2-method}), applied to the complete coefficient layout with $l\le L$, under random SO(3) rotations.
Tensor-product latitude--longitude grids leave maximum fp64 residuals of $3.62\times10^{-7}$ -- $4.14\times10^{-6}$; for the higher angular orders, the fp64 residual is larger than the corresponding fp32 residual, showing that the error is set by the projection rule rather than by floating point round-off.
For the same maximum degree $L$, Lebedev quadrature reduces the fp64 residual to $2.31\times10^{-14}$ -- $7.99\times10^{-14}$ and decreases the grid size from 64--576 points to 26--170 points.
On the WBM ablation, this replacement slightly reduces the energy MAE while leaving the force MAE and wall-clock cost essentially unchanged in the FFN-only setting (Supplementary Table~\ref{tab:sm_quadrature_ablation}).
The main role of the Lebedev rule is therefore to remove a systematic numerical symmetry error from the nonlinear equivariant branch at lower quadrature size, independently of which nonlinearity is evaluated on it.

\begin{table}[t]
    \PaperTableStyle
    \begin{threeparttable}
        \caption{Equivariance of the spherical-grid nonlinearity under random SO(3) rotations.}
        \label{tab:s2_full_equivariance}
        \begin{tabular*}{\linewidth}{@{\extracolsep{\fill}}ccccccccc@{}}
            \toprule
            \textbf{$L$} &
            \multicolumn{4}{c}{\textbf{Product grid}} &
            \multicolumn{4}{c}{\textbf{Lebedev quadrature}} \\
            \cmidrule(lr){2-5}\cmidrule(lr){6-9}
            & \textbf{Rule} & \textbf{\#\,pts} & \textbf{fp64 error} & \textbf{fp32 error}
            & \textbf{$p$} & \textbf{\#\,pts} & \textbf{fp64 error} & \textbf{fp32 error} \\
            \midrule
            2 & $8\times8$   & 64  & $3.62\times10^{-7}$ & $4.77\times10^{-7}$ & 7  & 26  & $2.31\times10^{-14}$ & $2.38\times10^{-7}$ \\
            3 & $12\times12$ & 144 & $7.04\times10^{-7}$ & $6.86\times10^{-7}$ & 9  & 38  & $3.58\times10^{-14}$ & $3.58\times10^{-7}$ \\
            4 & $14\times14$ & 196 & $7.97\times10^{-7}$ & $1.55\times10^{-6}$ & 13 & 74  & $5.82\times10^{-14}$ & $6.56\times10^{-7}$ \\
            5 & $18\times18$ & 324 & $1.48\times10^{-6}$ & $1.49\times10^{-6}$ & 15 & 86  & $3.22\times10^{-14}$ & $6.56\times10^{-7}$ \\
            6 & $20\times20$ & 400 & $4.14\times10^{-6}$ & $2.27\times10^{-6}$ & 19 & 146 & $7.99\times10^{-14}$ & $8.35\times10^{-7}$ \\
            7 & $24\times24$ & 576 & $3.19\times10^{-6}$ & $2.03\times10^{-6}$ & 21 & 170 & $6.86\times10^{-14}$ & $8.79\times10^{-7}$ \\
            \bottomrule
        \end{tabular*}
        \begin{tablenotes}[flushleft]
            \PaperTableNotesStyle
            \item[] $L$ is the maximum spherical-harmonic degree. Product-grid rules are $(R_{\phi},R_{\theta})$ after the square-grid lift, and \#\,pts is the total number of grid points $R_{\phi}R_{\theta}$.
            \item[] For Lebedev quadrature, $p$ is the algebraic order of accuracy. fp64 and fp32 denote 64- and 32-bit floating-point arithmetic, respectively; errors are maximum absolute deviations between the two equivariance paths.
        \end{tablenotes}
    \end{threeparttable}
\end{table}

Beyond the mechanism ablations above, Supplementary Section~\ref{sec:sm-ablation} reports additional sweeps over compiled mixed-precision training, interaction depth, attention design variants, normalization placement and the learning-rate schedule.
The mechanism ablations confirm that the reported improvements arise from the targeted architectural designs A1--A4, while the additional sweeps establish stable design and training choices for the released DPA4 variants.


\section{Discussion}
\label{sec:discussion}

DPA4 shows that the accuracy--cost trade-off of equivariant interatomic potentials can be substantially improved by targeted architectural design.
Inside the EMFA SO(2) convolution, which avoids the full cost of SO(3) Clebsch--Gordan tensor products while remaining more expressive than prior SO(2) reductions, the edge--node degree coupling (A1), the multi-focus design (A2) and the envelope-gated attention (A3) each raise accuracy at low cost.
The Wigner bilinear network (A4) supplies the nonlinearity: it couples all angular degrees at once, in its full form approximates any continuous SO(3)-equivariant map, and, evaluated by exact Lebedev--uniform quadrature, is equivariant to machine precision.
Complementing the architecture, a compiler-friendly implementation makes the conservative energy-to-force path compatible with \texttt{torch.compile} and removes the systems overhead that usually limits expressive equivariant models.
The cumulative ablation confirms that every mechanism contributes (Section~\ref{sec:ablation}); together, these designs move DPA4 onto a new accuracy--cost frontier, in inference, training and parameter cost alike, across inorganic-crystal and organic-molecule benchmarks.

Across materials and molecular benchmarks, DPA4 reaches leading accuracy at a fraction of the cost of the strongest baselines.
On Matbench Discovery, DPA4-Pro leads every ranked metric of the compliant leaderboard, and all five variants evaluated on this benchmark lie on the Pareto frontier in CPS and saturated inference throughput among the fifteen compliant models tested (Fig.~\ref{fig:cps-throughput}).
They also define an accuracy--training-cost frontier that no baseline matches at less than four times its training budget (Table~\ref{tab:matbench}).
On OMat24, DPA4-Pro attains the lowest validation energy MAE of all models compared, with $6.1\times$ and $27.8\times$ fewer parameters than the two next most accurate models.
On the OMol25 composition-validation split, DPA4-Pro records lower total-energy and force MAEs than UMA-M-1.1, the strongest conservative baseline, using 25.2M versus 1.4B total parameters (50M active).
On SPICE-MACE-OFF, DPA4-Plus lowers the aggregate energy and force errors of the best published model by $39\%$ and $36\%$.
These results are obtained through the conservative energy-gradient path alone, without the denoising or direct-force pretraining stages on which the strongest baselines rely; the compiled implementation removes the double-backward overhead that motivated such two-stage protocols.
This makes compact, single-task energy-conservative potentials practical to train, ablate and redeploy in molecular-dynamics and structure-relaxation workflows.


DPA4 is established here as a backbone architecture in the single-task, per-dataset setting; its behavior under large-scale multitask pretraining on heterogeneous DFT data remains to be established.
An important next step is therefore to use DPA4 as a backbone for LAM pretraining and downstream adaptation while preserving its conservative energy-gradient training efficiency.
The relevant question is then not only whether a single pretrained model reaches a higher benchmark score, but whether accurate target-domain potentials can be generated, validated and refined repeatedly at low cost.
Such low-cost, repeated refinement is an ingredient for more automated, agentic potential development and, ultimately, for closing the loop between computation and experiment, so that simulation-driven model updates and laboratory feedback iteratively refine one another.


\section{Methods}
\label{sec:methods}

\subsection{Datasets}

For the Matbench Discovery benchmark, we train DPA4 on MPtrj, the Materials Project trajectory dataset introduced with CHGNet~\cite{deng2023chgnet,jain2013materials}.
MPtrj contains relaxation and static calculations for inorganic crystals across 89 elements, with energies, forces, stresses and magnetic moments computed at the GGA or GGA+$U$ level.
Generalization is evaluated on the Matbench Discovery benchmark, in which MPtrj-trained models relax the WBM candidate structures and are scored by their formation-energy predictions and derived convex-hull distances~\cite{riebesell2025matbench,WBM}.
The leaderboard additionally reports $\kappa$SRME, the symmetric relative mean error of the phonon thermal conductivity, which is sensitive to the smoothness and conservativeness of the learned potential~\cite{pota2024thermal}.
The architectural ablations instead train on MPtrj and evaluate on a fixed subsample of the WBM test set, a protocol shared by all ablations so that differences in accuracy and throughput reflect only the controlled model component.

OMat24~\cite{barroso2024open} contains 118 million structures labeled with energy, forces and cell stress.
Input structures were obtained by randomly sampling relaxed structures from the Alexandria PBE bulk dataset~\cite{schmidt2023alexandria} and driving them out of equilibrium by Boltzmann sampling of rattled structures, short ab initio molecular-dynamics trajectories and the relaxation of rattled structures.
Labels were computed with VASP at the PBE level, with Hubbard $U$ corrections for oxides and fluorides containing Co, Cr, Fe, Mn, Mo, Ni, V or W, following Materials Project defaults.
We train on the 100.8-million-structure training split and report energy, force and stress MAEs on the 5.3-million-structure validation split.

OMol25~\cite{levine2025open}, in its OMol-0 release, contains about 100 million DFT single-point calculations at the $\omega$B97M-V/def2-TZVPD level computed with ORCA.
It combines four domains --- biomolecules, metal complexes, electrolytes and recomputed community datasets --- and covers 83 elements, systems of up to 350 atoms, total charges from $-10$ to $+10$ and spin multiplicities from 1 to 11, so that charge and spin enter the model as explicit inputs.
Splits are constructed by molecular composition, so that the validation and test compositions are held out from training.
We train on its 101.7-million-structure training split and report energy and force MAEs on the out-of-distribution composition validation split.

SPICE-MACE-OFF is the organic-molecule dataset introduced for MACE-OFF~\cite{kovacs2023mace}.
It spans PubChem molecules, DES370K monomers and dimers~\cite{donchev2021des370k}, dipeptides, solvated amino acids, water clusters and larger QMugs-derived molecules~\cite{eastman2023spice,isert2022qmugs}.
Reference energies and forces are evaluated at the $\omega$B97M-D3(BJ)/def2-TZVPPD level with PSI4~\cite{najibi2018nonlocal,grimme2010consistent,grimme2011effect,weigend2005balanced,rappoport2010property,smith2020psi4}.
We use the same train/validation/test split as MACE-OFF~\cite{kovacs2023mace} and DPA3~\cite{zhang2025dpa3}, and report per-subset energy and force MAEs.

\subsection{DPA4 model architecture}

Consider a system of $N$ atoms with atomic numbers $Z=\{Z_i\}_{i=1}^{N}$ and positions $R=\{\mathbf{R}_i\}_{i=1}^{N}$.
DPA4 decomposes the potential energy
as the sum of a learned equivariant message-passing branch and an analytical
Ziegler--Biersack--Littmark (ZBL) short-range branch,
\begin{equation}
    E(Z,R)=E_{\Theta}^{\mathrm{NN}}(Z,R) + E^{\mathrm{ZBL}}(Z,R), \label{eq:dpa4-total-energy} \end{equation} where $\Theta$ collects all learnable parameters of the learned branch.
Forces and virials are obtained by differentiating the scalar energy in Eq.~\eqref{eq:dpa4-total-energy} with respect to atomic positions and the cell, so the learned and analytical branches together define a single conservative potential.

In the learned branch $E_{\Theta}^{\mathrm{NN}}$, the atomic species $Z$ and positions $R$ are encoded into per-atom irreducible representations in the real space $V_{\leq L}\otimes\mathbb{R}^{C}$ by a Geometry-Informed Embedding (GIE) stage (Sec.~\ref{sec:gie}), updated through $N_{\mathrm{layer}}$ stacked equivariant interaction blocks, and read out as a scalar energy by an atomic energy head (Sec.~\ref{sec:ener-head}).
Here $V_{\leq L}=\bigoplus_{l=0}^{L}V_l$ with $V_l$ the $(2l+1)$-dimensional irreducible representation of SO(3); the $l=0$ subspace is invariant and is used for scalar readout, while $l>0$ subspaces transform equivariantly and carry angular information.
Each interaction block is a residual stack of an EMFA SO(2) convolution (Sec.~\ref{sec:so2}) and an equivariant feed-forward network (FFN, Sec.~\ref{sec:s2-method}) interleaved with equivariant RMS norms.
The architectural designs A1--A4 of the Introduction operate inside the block: the low-rank edge--node SO(2)-equivariant degree coupling (A1), the multi-focus design with cross-focus competition (A2), and the envelope-gated attention (A3) act inside the EMFA SO(2) convolution, while the FFN is instantiated as a Wigner bilinear network (A4).
Native ZBL zone bridging couples the learned and analytical branches in the close-contact region (Sec.~\ref{sec:zbl_method}).

\subsubsection{Geometric inputs to the learned branch}
\label{sec:geom-inputs}

Throughout Sections~\ref{sec:geom-inputs}--\ref{sec:s2-method}, every geometric quantity entering the learned branch $E_{\Theta}^{\mathrm{NN}}$, namely radial basis functions, cutoff envelopes and the local-environment descriptor $\mathcal{D}_i$, is evaluated on a \emph{clamped distance} $\widetilde{r}(r_{ij})$ rather than the raw distance $r_{ij}=\|\mathbf{R}_j-\mathbf{R}_i\|$, so $r_{ij}$ should be read as $\widetilde{r}(r_{ij})$ in every downstream equation.
Quantities fixed by the bond direction alone, such as $Y_l^m(\widehat{\mathbf{r}}_{ij})$ and the edge-local frame, are unaffected, since the clamp acts on the radial magnitude only.
Edge messages sourced from atom $j$ are additionally weighted by a smooth \emph{source-freeze gate} $\eta_j\in[0,1]$ that vanishes whenever $j$ has a neighbor inside the inner zone.
Both $\widetilde{r}$ and $\eta_j$ are the technical core of \emph{native ZBL zone bridging} (Sec.~\ref{sec:zbl_method}), and are defined next.

\paragraph{Bridging window and septic Hermite polynomials.}
The \emph{bridging window} is an interval $[r_{\mathrm{in}},r_{\mathrm{out}}]$ with $0<r_{\mathrm{in}}<r_{\mathrm{out}}\le r_{\mathrm{c}}$ inside the neighbor cutoff $r_{\mathrm{c}}$.
Two septic Hermite polynomials
$h_{\mathrm{c}},h_{\mathrm{w}}:[0,1]\to[0,1]$ control the clamp and the gate,
respectively,
\begin{equation}
    h_{\mathrm{c}}(t)\coloneqq 20t^4-45t^5+36t^6-10t^7,
    \qquad
    h_{\mathrm{w}}(t)\coloneqq 35t^4-84t^5+70t^6-20t^7,
    \label{eq:dpa4-hermite}
\end{equation}
each being the unique interpolant fixed by eight boundary conditions.
For $h_{\mathrm{c}}$, it is required that $h_{\mathrm{c}}(0)=0$, $h_{\mathrm{c}}(1)=1$,
$h_{\mathrm{c}}'(0)=0$, $h_{\mathrm{c}}'(1)=1$,
$h_{\mathrm{c}}''(0)=h_{\mathrm{c}}''(1)=h_{\mathrm{c}}'''(0)=h_{\mathrm{c}}'''(1)=0$, 
while for
$h_{\mathrm{w}}$
it is required that $h_{\mathrm{w}}(0)=0$, $h_{\mathrm{w}}(1)=1$,
$h_{\mathrm{w}}^{(k)}(0)=h_{\mathrm{w}}^{(k)}(1)=0$ for $k=1,2,3$.

\paragraph{Clamped distance map and bridging amplitude.}
With $t(r)\coloneqq(r-r_{\mathrm{in}})/(r_{\mathrm{out}}-r_{\mathrm{in}})$,
the clamped distance map and the bridging amplitude are
\begin{equation}
    \widetilde{r}(r) =
    \begin{cases}
        r_{\mathrm{in}}, & r\le r_{\mathrm{in}},               \\
        r_{\mathrm{in}}+(r_{\mathrm{out}}-r_{\mathrm{in}})\,h_{\mathrm{c}}\!\bigl(t(r)\bigr),
                         & r_{\mathrm{in}}<r<r_{\mathrm{out}}, \\
        r,               & r\ge r_{\mathrm{out}},
    \end{cases}
    \qquad
    w(r) =
    \begin{cases}
        0,                                & r\le r_{\mathrm{in}},               \\
        h_{\mathrm{w}}\!\bigl(t(r)\bigr), & r_{\mathrm{in}}<r<r_{\mathrm{out}}, \\
        1,                                & r\ge r_{\mathrm{out}}.
    \end{cases}
    \label{eq:dpa4-clamp}
\end{equation}
Both $\widetilde{r}$ and $w$ are $C^3$ on $\mathbb{R}_{>0}$ by the Hermite boundary conditions.
The associated \emph{clamped displacement} is
\begin{equation}
    \widetilde{\mathbf{r}}_{ij} = \widetilde{r}(r_{ij})\,\widehat{\mathbf{r}}_{ij},
    \qquad \|\widetilde{\mathbf{r}}_{ij}\|=\widetilde{r}(r_{ij}),
    \label{eq:dpa4-rvec-tilde}
\end{equation}
which carries the clamped magnitude along the exact bond direction.

\paragraph{Source-freeze gate.}
For each atom $j$, let $\mathcal{N}(j)\coloneqq \{i:r_{ji}<r_{\mathrm{c}},\,i\neq j\}$ denote its set of neighbors within the cutoff.
The source-freeze gate is
\begin{equation}
    \eta_j \;\coloneqq\; \prod_{i\in\mathcal{N}(j)} w(r_{ji})
    \;\in\;[0,1].
    \label{eq:dpa4-source-freeze}
\end{equation}
As a finite product of $C^3$ functions, $\eta_j$ is $C^3$ in the atomic positions wherever no two atoms coincide.
It vanishes whenever some neighbor of $j$ lies in the inner zone, $r_{ji}\le r_{\mathrm{in}}$.

\paragraph{Behavior inside and outside the bridging window.}
Two properties of Eqs.~\eqref{eq:dpa4-clamp}--\eqref{eq:dpa4-source-freeze} are used repeatedly in the rest of Section~\ref{sec:methods}.
First, on the inner zone $r_{ij}\le r_{\mathrm{in}}$ every geometric quantity of the learned branch is constant in $r_{ij}$ (since $\widetilde r$ is constant there) and every message sourced from an atom $j$ whose own neighborhood penetrates the inner zone is silenced (since $\eta_j=0$), so the learned branch contributes zero gradient there and the short-range repulsion is handled exclusively by $E^{\mathrm{ZBL}}$.
Second, once $r_{ij}\ge r_{\mathrm{out}}$ and every neighbor of $j$ is at least $r_{\mathrm{out}}$ away, the clamp is the identity and the gate is one, so the learned branch sees the true geometry.

\subsubsection{Geometry-Informed Embedding (GIE)}
\label{sec:gie}

The initial feature $\mathbf{h}_i^{(0)}\in V_{\leq L}\otimes\mathbb{R}^{C}$ depends on both \emph{chemistry} (atomic species $Z_i$ and the species of neighbors) and \emph{geometry} (the relative positions $\{\mathbf{r}_{ij}\}$ inside the cutoff).
The two enter the slices differently: the $l=0$ slice starts from a chemistry-only embedding that geometry then modulates, whereas the $l\geq 1$ slices are constructed from the neighbor directions.

\paragraph{$l=0$ slice.}
The chemistry-only baseline is a learnable element embedding $\mathbf{T}\in\mathbb{R}^{n_t\times C}$, giving $\mathbf{h}_{i,l=0,c}^{(0)}=\mathbf{T}_{Z_i,c}$.
To inject local geometry into the scalar slice without breaking SO(3) invariance, DPA4 builds a compact rotation-invariant local-environment descriptor $\mathcal{D}_i$ in the spirit of the smooth-edition Deep Potential descriptor~\cite{zhang2018end} as follows.
The per-edge four-vector is
\begin{equation}
    \mathbf{u}_{ij,0} = \frac{s_5(r_{ij})}{r_{ij}},
    \qquad
    \mathbf{u}_{ij,k} = \mathbf{u}_{ij,0}\,\widehat{\mathbf{r}}_{ij,k},\quad k=1,2,3,
    \label{eq:dpa4-env-vector}
\end{equation}
whose first component is a smooth invariant and whose last three components
together transform as an SO(3) vector. Here $s_5$ is one of a family of $C^3$
cutoff envelopes $s_p$ that decay from $s_p(0)=1$ and vanish together with
their first three derivatives at $r_{\mathrm{c}}$; their closed forms are given
in Supplementary Section~\ref{sec:sm-radial}.
DPA4 uses $s_5$ for edge weighting and in the degree normalization, and $s_7$ in the radial basis defined below.
With a separate
radial-species map
$\mathbf{g}:\mathbb{R}_{\ge 0}\times\{1,\dots,n_t\}^2\to\mathbb{R}^{C_{\mathrm{env}}}$,
the per-atom matrix is
\begin{equation}
    A_i = n_i\sum_{j:r_{ij}<r_{\mathrm{c}}} \eta_j\,
    \mathbf{u}_{ij}\otimes \mathbf{g}(r_{ij};Z_i,Z_j)
    \in\mathbb{R}^{4\times C_{\mathrm{env}}},
    \label{eq:dpa4-env-matrix}
\end{equation}
where $n_i=(d_i+\varepsilon)^{-1/2}$ with
$d_i=\sum_{j:r_{ij}<r_{\mathrm{c}}} s_5(r_{ij})^2$ is a smooth degree
normalization (squaring $s_5$ makes $d_i$ inherit $C^6$ regularity; the
regularizer $\varepsilon>0$ prevents singularity for isolated atoms), and
$\eta_j$ is the source-freeze gate defined in
Eq.~\eqref{eq:dpa4-source-freeze}. Contracting the first axis of $A_i$ yields
a rotation-invariant Gram-style descriptor,
\begin{equation}
    \mathcal{D}_i = A_i^{\top}
    A_i^{(:,1{:}K_{\mathrm{env}})} \in\mathbb{R}^{C_{\mathrm{env}}\times K_{\mathrm{env}}}, \label{eq:dpa4-env-descriptor} \end{equation} where the truncation to $K_{\mathrm{env}}$ columns controls cost.
Invariance of $\mathcal{D}_i$ follows because contracting the first axis of $A_i$ pairs the scalar component of $\mathbf{u}_{ij}$ with itself and the three spatial components with one another, and both pairings are rotation invariant.
The descriptor then conditions the scalar features through a
Feature-wise Linear Modulation (FiLM) step~\cite{perez2018film},
\begin{equation}
    \mathbf{h}_{i,l=0}^{(0)}
    \leftarrow
    \boldsymbol{\gamma}_i \odot \mathbf{h}_{i,l=0}^{(0)} + \boldsymbol{\beta}_i,
    \label{eq:dpa4-film}
\end{equation}
with per-channel scale and shift
\begin{equation}
    \boldsymbol{\gamma}_i = \mathbf{1} + e^{\lambda_\alpha}\tanh\!\bigl(N_0(W_\alpha\,\mathrm{vec}\,\mathcal{D}_i)\bigr),
    \qquad
    \boldsymbol{\beta}_i = e^{\lambda_\beta}\tanh\!\bigl(N_0(W_\beta\,\mathrm{vec}\,\mathcal{D}_i)\bigr),
    \label{eq:dpa4-film-gb}
\end{equation}
where $W_\alpha,W_\beta\in\mathbb{R}^{C\times C_{\mathrm{env}}
        K_{\mathrm{env}}}$ are learnable projections and $N_0$ is a scalar RMS normalizer.
The bounded $\tanh$ outputs are scaled by learnable log-strengths $\lambda_\alpha,\lambda_\beta\in\mathbb{R}$, initialized at $\log(0.01)$, so the modulation begins close to the identity ($\boldsymbol{\gamma}_i\approx\mathbf{1}$, $\boldsymbol{\beta}_i\approx\mathbf{0}$) and the species embedding dominates at the start of training.
Because $\mathcal{D}_i$ is SO(3)-invariant and FiLM acts diagonally in the channel index, the modulation preserves the invariance of the $l=0$ slice.

\paragraph{$l\geq 1$ slices.}
For higher degrees no chemistry-only baseline exists: equivariant features must carry directional information from the start.
DPA4 obtains them by projecting
each neighbor direction onto real spherical harmonics and weighting the
projection by a radial-species profile,
\begin{equation}
    \bigl[\mathbf{h}_{i}^{(0)}\bigr]^{(l)}_{m,c}
    =
    n_i\sum_{j:r_{ij}<r_{\mathrm{c}}}
    \eta_j\,Y_l^m(\widehat{\mathbf{r}}_{ij})\,
    \rho_{ij,l,c},\qquad l\geq 1.
    \label{eq:dpa4-gie}
\end{equation}
Here $n_i$ and $\eta_j$ are the same smooth degree normalization and source-freeze gate as in Eq.~\eqref{eq:dpa4-env-matrix}.
The radial-species profile $\rho_{ij,l,c}$
mixes the per-pair chemistry into a smooth function of distance,
\begin{equation}
    \rho_{ij,l,c}
    =
    \bigl[\mathrm{MLP}^{\mathrm{rad}}\!\bigl(\boldsymbol{\phi}(r_{ij})\bigr)\bigr]_{l,c}
    + \bigl[\mathbf{T}_{\mathrm{edge}}(Z_i,Z_j)\bigr]_{c},
    \label{eq:dpa4-rho}
\end{equation}
where $\mathrm{MLP}^{\mathrm{rad}}:\mathbb{R}^{n_r}\to\mathbb{R}^{(L+1)\times C}$ is
bias-free with SiLU activations, $\mathbf{T}_{\mathrm{edge}}(Z_i,Z_j)\in\mathbb{R}^{C}$ is a
per-pair species embedding broadcast across the $(L+1)$ degree slots, and
$\boldsymbol{\phi}(r_{ij})=(\phi_1(r_{ij}),\dots,\phi_{n_r}(r_{ij}))
    \in\mathbb{R}^{n_r}$ is the Bessel radial basis
$\phi_n(r)=\sin(\omega_n r)\,s_7(r)/r$, $n=1,\dots,n_r$, with learnable
frequencies initialized at $\omega_n=n\pi/r_{\mathrm{c}}$ and $s_7$ the cutoff
envelope of Supplementary Section~\ref{sec:sm-radial}.
Equation~\eqref{eq:dpa4-gie} is SO(3)-equivariant because every prefactor is rotation-invariant and the entire angular content is carried by the degree-$l$ spherical harmonic $Y_l^m(\widehat{\mathbf{r}}_{ij})$, which transforms as $D^l(R)$ under a rotation $R$.

\subsubsection{EMFA SO(2) convolution}
\label{sec:so2}

Each interaction block applies the EMFA (Edge--node degree coupling, Multi-Focus, Attention) SO(2) convolution $\mathcal{C}_{\Theta}$, an SO(3)-equivariant convolution that takes the per-atom node features $\mathbf{h}_j\in V_{\le L}\otimes\mathbb{R}^{C}$, $j=1,\dots,N$, and returns a per-atom update $(\mathcal{C}_{\Theta}\mathbf{h})_i\in V_{\le L}\otimes\mathbb{R}^{C}$ obtained by aggregating information from the neighbors of atom $i$.
The operator is built in six stages: (i) transport each source node feature into a per-edge SO(2) gauge aligning the bond direction with a fixed reference axis; (ii) construct the in-frame message through a \emph{low-rank edge--node SO(2)-equivariant degree coupling} (A1); (iii) introduce message nonlinearity through a \emph{multi-focus design} (A2); (iv) lift the in-frame message back to the global frame; (v) aggregate neighbor messages with envelope-gated attention (A3) modulated by a destination-side output gate; and (vi) project the result back to representation width through a channel post-mixer.
The closed form of $\mathcal{C}_{\Theta}$ collecting all six stages is given as Eq.~\eqref{eq:dpa4-conv} at the end of the subsection.

\paragraph{Edge-local frame.}
Full SO(3)-equivariant tensor products require Clebsch--Gordan expansions whose cost grows steeply with angular order~\cite{batatia2022mace,liao2023equiformerv2}.
DPA4 instead reduces SO(3)
convolutions to SO(2) by rotating each directed edge $(i,j)$ into a gauge that
aligns its bond direction with the reference axis,
$R_{ij}\widehat{\mathbf{r}}_{ij}=(0,0,1)^{\top}$~\cite{passaro2023reducing}.
In this frame the residual symmetry is the abelian group SO(2), so angular orders $m$ decouple into independent strata.
DPA4 retains coefficients with $|m|\leq M\leq L$ inside the convolution and applies a degree-dependent lift factor $\Upsilon_M$ (Eq.~\eqref{eq:sm-rescale}) after rotating back to compensate the norm loss from truncation.

\paragraph{Per-edge equivariant message.}
For each directed edge $(i,j)$, the source node feature is pre-mixed to hidden
width and transported into the edge-local frame,
\begin{equation}
    \mathbf{x}_{ij}=P_M\,D(R_{ij})\,\mathbf{h}_j',
    \qquad
    \mathbf{h}_j'\;\coloneqq\;L^{\mathrm{pre}}_{\mathrm{deg}}\mathbf{h}_j
    \in V_{\le L}\otimes\mathbb{R}^{H},
    \label{eq:dpa4-local-projection}
\end{equation}
with $\mathbf{x}_{ij}\in\mathbb{R}^{D_M\times H}$, where
$L^{\mathrm{pre}}_{\mathrm{deg}}$ is a degree-wise channel pre-mixer
(Eq.~\eqref{eq:sm-degree-linear}) that lifts the representation width from
$C$ to a hidden width $H=F\,C_f$ (with focus count $F$ and per-focus width
$C_f$), $D(R_{ij})$ is block-diagonal in $l$ (Eq.~\eqref{eq:sm-direct-sum}), and
$P_M$ selects the retained $m$-strata to dimension $D_M$
(Eq.~\eqref{eq:sm-trunc-index}).
The radial-species feature $\rho_{ij}$ of Eq.~\eqref{eq:dpa4-rho} depends only on the pair distance and the species pair, so it is computed once and reused by every interaction block.
Each block lifts it to hidden width by its own degree-wise channel map applied independently at each degree $l$ (Eq.~\eqref{eq:sm-degree-linear}),
\begin{equation}
    \widetilde{\rho}_{ij} = L^{\mathrm{rad}}_{\mathrm{lift}}\,\rho_{ij}
    \in\mathbb{R}^{(L+1)\times H},
    \qquad
    L^{\mathrm{rad}}_{\mathrm{lift}}:\mathbb{R}^{C}\to\mathbb{R}^{H},
    \label{eq:dpa4-rho-lift}
\end{equation}
giving the edge-side operand at the same width as $\mathbf{x}_{ij}$.

\paragraph{Low-rank edge--node SO(2)-equivariant degree coupling.}
This stage realizes the architectural design A1.
In the local frame, the edge angular feature $Y_l(\widehat{\mathbf{r}}_{ij})$ collapses to its $m=0$ component for every degree $l$, so $\widetilde{\rho}_{ij}$ of Eq.~\eqref{eq:dpa4-rho-lift} serves as the radial-modulated $m=0$ slice of the per-degree edge SO(2) irreps.
The edge--node degree coupling lets the edge-side irreps parameterize a learnable linear map acting on the node-side irreps $\mathbf{x}_{ij}$ that, at each fixed $|m|$-stratum, mixes the different angular degrees $l$ without coupling different $|m|$,
\begin{equation}
    \mathbf{x}_{ij,l,m,c}\;\leftarrow\;
    \sum_{l'\geq |m|}
    \mathcal{K}_{l,l',|m|,c}(\widetilde{\rho}_{ij})\,\mathbf{x}_{ij,l',m,c},
    \label{eq:dpa4-radial-coupling}
\end{equation}
where each kernel entry $\mathcal{K}_{l,l',|m|,c}(\widetilde{\rho}_{ij})\in \mathbb{R}$ is a learnable linear functional of $\widetilde{\rho}_{ij}$.
Because $\mathcal{K}$ depends only on rotation-invariant radial-species information and never mixes different $|m|$, the $(-m,+m)$ pair continues to transform as a single two-dimensional real SO(2) representation.
Under the edge-local reduction, the Clebsch--Gordan tensor product of a standard SO(3)-equivariant convolution is exactly an SO(2) map that mixes angular degrees within each fixed $|m|$ stratum (Proposition~3.1 of ref.~\cite{passaro2023reducing}), and Eq.~\eqref{eq:dpa4-radial-coupling} is a learnable map of that form.
Kernels constrained to be diagonal in $l$, as when the degree-$l$ edge feature multiplies only the degree-$l$ node feature, form a strict subspace of those maps: the cross-degree entries $l\neq l'$ are therefore not an enhancement of the reduction but what makes it faithful to the product it replaces.

The per-channel kernel of Eq.~\eqref{eq:dpa4-radial-coupling} improves accuracy over the degree-diagonal baseline, but carries $H$ independent copies of the degree-pair coefficients, raising the training and inference costs by factors of $3.36$ and $2.22$, respectively (Supplementary Table~\ref{tab:dpa4_radial_degree_ablation}).
$\mathcal{K}$ is therefore parameterized by a low-rank factorization across the channel index $c$,
\begin{equation}
    \mathcal{K}_{l,l',|m|,c}(\widetilde{\rho}_{ij})
    =
    \sum_{r=1}^{R}
    \bigl[K_{l,l',|m|}(\widetilde{\rho}_{ij})\bigr]_{r}\,B_{r,c},
    \label{eq:dpa4-K-lowrank}
\end{equation}
with rank $R\leq H$, where $K_{l,l',|m|}:\mathbb{R}^{(L+1)\times H}\to\mathbb{R}^{R}$ is a learnable linear map and $B\in\mathbb{R}^{R\times H}$ is a learnable channel basis.
The diagonal special case $\mathcal{K}_{l,l',|m|,c}(\widetilde{\rho}_{ij})=\widetilde{\rho}_{ij,l,c}\, \delta_{l,l'}$ reduces to per-degree scalar radial modulation.
The rank sweep favors small ranks on both counts: the rank-1 to rank-4 kernels attain lower energy and force MAEs than the dense one at a fraction of its cost (Supplementary Table~\ref{tab:dpa4_radial_degree_ablation}).

\paragraph{Multi-focus design for message nonlinearity.}
The in-frame edge feature is processed by two distinct nonlinear mechanisms that together realize the architectural design A2.
First, the hidden width
factorizes as $\mathbb{R}^{H}=\mathbb{R}^{F}\otimes \mathbb{R}^{C_f}$, so
$\mathbf{x}_{ij}\in\mathbb{R}^{D_M\times F\times C_f}$ splits into $F$
\emph{focus streams}, on each of which a multi-layer SO(2) stack acts in
parallel as a composition of $S$ residual layers,
\begin{equation}
    \mathbf{x}_{ij}\;\leftarrow\;
    \mathbf{x}_{ij}+\Lambda_s\odot
    \Gamma_s\!\bigl(L^{\mathrm{SO2}}_s\,N_s(\mathbf{x}_{ij})\bigr),
    \qquad s=1,\dots,S,
    \label{eq:dpa4-so2-layer}
\end{equation}
where $N_s$ is an equivariant RMS norm
(Eq.~\eqref{eq:sm-rmsnorm}), $L^{\mathrm{SO2}}_s$ is an
edge-independent SO(2)-equivariant linear map
(Eqs.~\eqref{eq:sm-so2-m0}, \eqref{eq:sm-so2-mn}; unrestricted on $m=0$,
the real form of complex multiplication on each $|m|>0$ subspace) supplying
cross-$l$ mixing at fixed $|m|$ that complements the edge-dependent
cross-$l$ mixing realized by $\mathcal{K}$ in
Eq.~\eqref{eq:dpa4-radial-coupling}, $\Gamma_s$ is a gated activation acting on
the scalar slice (Eq.~\eqref{eq:sm-gated}), and
$\Lambda_s\in\mathbb{R}^{F\times C_f}$ is a learnable
per-(focus,\,channel) residual scale initialized at $10^{-3}$.
The gated activations $\{\Gamma_s\}$ provide the \emph{first nonlinearity} in the message-construction pipeline.
The learnable weights of the stack are $L^{\mathrm{SO2}}_s$, $N_s$, $\Gamma_s$ and $\Lambda_s$ for $s=1,\dots,S$.

The \emph{second nonlinearity} is a cross-focus competition that depends on the edge's own SO(2)-invariant $l=0$ content through a softmax, turning the multi-focus split into a learnable nonlinear gating mechanism on top of the per-focus SO(2) stack.
The competition is computed from $\mathbf{s}_{ij}\in\mathbb{R}^{F\times C_f}$, the $(l,m)=(0,0)$ component of $\mathbf{x}_{ij}$ at the entry of the stack (Eq.~\eqref{eq:dpa4-so2-layer}), normalized by a focus-wise scalar RMS norm $N_0:\mathbb{R}^{C_f}\to\mathbb{R}^{C_f}$ applied independently to each of the $F$ stream rows.
The streams are reweighted by a per-focus competition weight,
\begin{equation}
    \begin{aligned}
        \mathbf{x}_{ij} & \;\leftarrow\; \alpha_{ij}\odot\mathbf{x}_{ij},                                     \\[2pt]
        \alpha_{ij,f}   & = (1-\epsilon)\,\frac{\exp\!\bigl(\tau^{-1}\sum_{c}
        W^{\mathrm{cf}}_{c,f}\, N_0(\mathbf{s}_{ij})_{f,c}\bigr)} {\sum_{f'}\exp\!
            \bigl(\tau^{-1}\sum_{c}
        W^{\mathrm{cf}}_{c,f'}\, N_0(\mathbf{s}_{ij})_{f',c}\bigr)} +\frac{\epsilon}{F},
    \end{aligned}
    \label{eq:dpa4-focus-compete}
\end{equation}
where $W^{\mathrm{cf}}\in\mathbb{R}^{C_f\times F}$ is a learnable channel-to-focus scoring matrix, $\tau>0$ is a softmax temperature that sharpens ($\tau\to 0$) or flattens ($\tau\to\infty$) the competition between streams, $\epsilon\in[0,1)$ is a label-smoothing strength that mixes the softmax with the uniform distribution $1/F$ over focuses to prevent any single stream from being driven to zero focus weight, and $\alpha_{ij}\in\mathbb{R}^{F}$ is broadcast across the $(l,m)$ and $c$ axes.
Equivariance is preserved because $\alpha_{ij,f}$ is constructed from an SO(2)-invariant $l=0$ slice.

\paragraph{Lift back to the global frame.}
The equivariant edge message in the global frame is recovered by inverting the
local gauge,
\begin{equation}
    \mathbf{m}_{ij}
    =
    \Upsilon_M\,D(R_{ij})^{\top}
    P_M^{\top}\,\mathbf{x}_{ij} \in V_{\le L}\otimes\mathbb{R}^{H}, \label{eq:dpa4-edge-message} \end{equation} with $D(R_{ij})^{\top}$ the inverse gauge rotation, $P_M^{\top}$ the re-embedding of the truncated $m$-strata back into the full $(L+1)^2$ layout, and $\Upsilon_M$ the truncation rescale of Eq.~\eqref{eq:sm-rescale}.
At this point $\mathbf{m}_{ij}$ still carries the hidden width $H$; channel post-mixing back to representation width is deferred until after neighbor aggregation.

\paragraph{Envelope-gated attention.}
\label{sec:attn}
The aggregation of $\mathbf{m}_{ij}$ over neighbors uses an envelope-gated attention weight $w_{ij}^{(f,a)}$: a destination-wise normalized softmax over neighbors whose logits are scalar functions of the invariant $l=0$ destination and source features plus a radial bias.
Each focus $f\in\{1,\dots,F\}$ is split into $H_a$ heads of width $d_a=C_f/H_a$, indexed by $a\in\{1,\dots,H_a\}$.
The pre-mixed hidden-width node feature $\mathbf{h}_n'$ of Eq.~\eqref{eq:dpa4-local-projection} has an $l=0$ slice $\mathbf{h}_n'|_{l=0}\in\mathbb{R}^{H}$, which the factorization $H=F\cdot H_a\cdot d_a$ reshapes into $\mathbb{R}^{F\times H_a\times d_a}$.
A focus-wise scalar RMS norm $N_0$ applied in this layout gives
$N_0\bigl(\mathbf{h}_n'|_{l=0}\bigr)\in\mathbb{R}^{F\times H_a\times d_a}$,
which is read either with the head axis resolved or, where the head axis is not
needed, in the focus layout $\mathbb{R}^{F\times C_f}$ with $C_f=H_a d_a$.
From it the per-node, per-(focus, head) queries and keys are formed,
\begin{equation}
    \mathbf{q}_i^{(f,a)}
    = Q^{(f)}\,N_0\bigl(\mathbf{h}_i'|_{l=0}\bigr)_{f,a,:},\qquad
    \mathbf{k}_j^{(f,a)}
    = K^{(f)}\,N_0\bigl(\mathbf{h}_j'|_{l=0}\bigr)_{f,a,:}
    \in\mathbb{R}^{d_a},
    \label{eq:dpa4-attn-qk}
\end{equation}
with learnable per-focus query/key matrices $Q^{(f)},K^{(f)}\in
    \mathbb{R}^{d_a\times d_a}$. The attention logit for edge $(i,j)$ at focus
$f$, head $a$ combines a scaled dot product with a radial bias linear in the
$l=0$ lifted radial-species feature $\widetilde{\rho}_{ij,0}$ of Eq.~\eqref{eq:dpa4-rho-lift},
\begin{equation}
    \ell_{ij}^{(f,a)}
    =
    \frac{\langle\mathbf{q}_i^{(f,a)},\mathbf{k}_j^{(f,a)}\rangle}{\sqrt{d_a}}
    + \sum_{c=1}^{C_f}
    W^{\mathrm{rb}}_{c,f,a}\,\widetilde{\rho}_{ij,0,f,c}, \label{eq:dpa4-attn-logit} \end{equation} where $W^{\mathrm{rb}}\in\mathbb{R}^{C_f\times F\times H_a}$ is a learnable radial-bias tensor and the channel axis of $\widetilde{\rho}_{ij,0}$ is split as $H=F\cdot C_f$, so that each focus reads its own $C_f$ channels.
The attention weight is
\begin{equation}
    w_{ij}^{(f,a)}
    =
    \frac{s_5(r_{ij})^2\,\eta_j\,\exp\!\bigl(\ell_{ij}^{(f,a)}\bigr)}
    {\operatorname{softplus}(\zeta_{f,a})
        +\sum_{k:r_{ik}<r_{\mathrm{c}}}
        s_5(r_{ik})^2\,\eta_k\,\exp\!\bigl(\ell_{ik}^{(f,a)}\bigr)},
    \label{eq:dpa4-attention}
\end{equation}
with the $C^3$ envelope $s_5$ of Supplementary Section~\ref{sec:sm-radial}, the
source-freeze gate $\eta_j$ of Eq.~\eqref{eq:dpa4-source-freeze}, and a
learnable null-logit $\zeta_{f,a}\in\mathbb{R}$.
Two mechanisms make $w_{ij}^{(f,a)}$ smooth at the cutoff.
The numerator factor $s_5(r_{ij})^2$ drives the weight $C^3$-smoothly to zero as $r_{ij}\to r_{\mathrm{c}}$, and the $\operatorname{softplus}(\zeta_{f,a})$ term keeps the denominator strictly positive even when every incident edge of atom $i$ is silenced ($s_5\to 0$ or $\eta\to 0$), removing $0/0$ indeterminacies.
The weight $w_{ij}^{(f,a)}$ is SO(3)-invariant by construction, so its use as a per-(focus, head) reweighting preserves equivariance.

Reshaping the channel axis of the edge message $\mathbf m_{ij}\in V_{\le
            L}\otimes\mathbb{R}^{H}$ along the focus/head factorization $H=F\cdot H_a\cdot
    d_a$ yields per-(focus,\,head) slices $\mathbf m_{ij}^{(f,a)}\in V_{\le
            L}\otimes\mathbb{R}^{d_a}$, and aggregation under the attention weights gives
\begin{equation}
    \mathbf A_i^{(f,a)}
    =\sum_{j:r_{ij}<r_{\mathrm{c}}}w_{ij}^{(f,a)}\,\mathbf m_{ij}^{(f,a)}
    \in V_{\le L}\otimes\mathbb{R}^{d_a}.
    \label{eq:dpa4-attn-agg}
\end{equation}
A destination-side scalar output gate then modulates each $(f,a)$ slice
multiplicatively,
\begin{equation}
    \widetilde{\mathbf A}_i^{(f,a)}
    = G_i^{(f,a)}\,\mathbf A_i^{(f,a)},
    \qquad
    G_i^{(f,a)}
    = \sigma\!\Bigl(
    \textstyle\sum_{c=1}^{C_f}
    W^{\mathrm{og}}_{c,f,a}\, N_0\bigl(\mathbf h_i'|_{l=0}\bigr)_{f,c} \Bigr)\in(0,1), \label{eq:dpa4-attn-gate} \end{equation} where $\sigma(t)=(1+e^{-t})^{-1}$ is the logistic sigmoid and $W^{\mathrm{og}}\in\mathbb{R}^{C_f\times F\times H_a}$ is a learnable output-gate tensor.

\paragraph{The convolution in closed form.}
Concatenating the gated slices $\widetilde{\mathbf A}_i^{(f,a)}$ over all $(f,a)$ pairs, the inverse of the focus/head split above, restores a single hidden-width tensor $\widetilde{\mathbf A}_i=\operatorname{concat}_{(f,a)}\widetilde{\mathbf A}_i^{(f,a)}\in V_{\le L}\otimes\mathbb{R}^{H}$.
The aggregated message is coupled to the destination node feature by a two-operand Wigner bilinear product $\mathcal{P}$ (Section~\ref{sec:s2-method}),
\begin{equation}
    \mathcal{M}_i
    = \mathcal{P}\bigl(\widetilde{\mathbf A}_i,\mathbf{h}_i\bigr)
    \in V_{\le L}\otimes\mathbb{R}^{H},
    \label{eq:dpa4-message-node}
\end{equation}
a residual branch initialized at a small scale.
Adding it and projecting back to representation width with a degree-wise channel post-mixer $L^{\mathrm{post}}_{\mathrm{deg}}:V_{\le L}\otimes\mathbb{R}^{H}\to V_{\le L}\otimes\mathbb{R}^{C}$ (Eq.~\eqref{eq:sm-degree-linear}) gives the EMFA SO(2) convolution at atom $i$,
\begin{equation}
    \bigl(\mathcal{C}_{\Theta}\mathbf{h}\bigr)_i
    =
    L^{\mathrm{post}}_{\mathrm{deg}}\bigl[\widetilde{\mathbf A}_i+\mathcal{M}_i\bigr]
    \in V_{\le L}\otimes\mathbb{R}^{C}.
    \label{eq:dpa4-conv}
\end{equation}
The architectural designs A1 and A2 are absorbed into the per-edge message $\mathbf{m}_{ij}$, while A3 appears explicitly through the attention weight $w_{ij}^{(f,a)}$ and the destination-side output gate $G_i^{(f,a)}$.
Equivariance follows because $w_{ij}^{(f,a)}$ and $G_i^{(f,a)}$ are SO(3)-invariant scalars and every other operation is either degree-wise, acts inside an SO(2)-equivariant local frame, or is the equivariant bilinear product of Section~\ref{sec:s2-method}.
The post-mixer is zero-initialized so $\mathcal{C}_{\Theta}\equiv 0$ at the start of training.

\subsubsection{Equivariant feed-forward network}
\label{sec:s2-method}

After each SO(2) convolution, DPA4 applies an equivariant feed-forward network (FFN) that acts on every atom independently and carries a residual connection.
Its nonlinearity cannot act component by component: a point-wise map applied to the components of a degree-$l>0$ feature destroys equivariance, so an equivariant network must obtain its nonlinearity by coupling degrees to one another.
DPA4 realizes the architectural design A4 through a \emph{Wigner bilinear network} (WBN): a stack of \emph{Wigner bilinear blocks} (WBBs) between equivariant linear maps.
Each block expands the node feature into a pair of scalar functions on the rotation group $SO(3)$, multiplies them point by point, and projects the product back onto irreducible coefficients.
That cycle realizes the Clebsch--Gordan tensor product without ever forming it: a single product couples all degrees at once, with no enumeration of coupling paths and no stored Clebsch--Gordan tables.
The same block, fed from two different inputs rather than one, also supplies the message--node coupling $\mathcal{P}$ of Eq.~\eqref{eq:dpa4-message-node}.

\paragraph{Construction.}
A WBN is the composition
\begin{equation}
    \mathcal{N}_{\mathrm{WBN}}
    = L_{\mathrm{read}}\circ\mathcal{B}_{K}\circ\cdots\circ\mathcal{B}_{1}\circ L_{\mathrm{emb}} ,
    \label{eq:dpa4-wbn}
\end{equation}
where $L_{\mathrm{emb}}$ and $L_{\mathrm{read}}$ are equivariant linear maps adapting the input and output spaces to a hidden space $V_{\le L}\otimes\mathbb{R}^{C}$, and every block $\mathcal{B}_{\kappa}:V_{\le L}\otimes\mathbb{R}^{C}\to V_{\le L}\otimes\mathbb{R}^{C}$ is a WBB.

The block rests on the structure of band-limited functions on $SO(3)$.
Such a function is determined by the coefficients of its expansion in Wigner matrices $D^{(l)}_{mk}$, on which a rotation acts through the row index $m$ alone.
The column index $k$, running over $|k|\le l$, therefore enumerates the $2l+1$ copies of degree $l$ that the function carries; the $k$ axis is stored at width $2L+1$, with the entries $|k|>l$ absent.

The block uses two degree-wise channel-mixing maps (Eq.~\eqref{eq:sm-degree-linear}) at FFN hidden width $H_{\mathrm{FFN}}$: $L^{\mathrm{ch}}_{\mathrm{in}}:V_{\le L}\otimes\mathbb{R}^{C}\to V_{\le L}\otimes\mathbb{R}^{2(2L+1)H_{\mathrm{FFN}}}$ and $L^{\mathrm{ch}}_{\mathrm{out}}:V_{\le L}\otimes\mathbb{R}^{(2L+1)H_{\mathrm{FFN}}}\to V_{\le L}\otimes\mathbb{R}^{C}$.
The first acts degree by degree on the node feature, its $2(2L+1)H_{\mathrm{FFN}}$ output channels read as a side $\mathrm{X}\in\{\mathrm{L},\mathrm{R}\}$, a frame $k$, and a hidden channel $c$,
\begin{equation}
    u^{\mathrm{X},(l)}_{m,k,c}
    = \bigl[L^{\mathrm{ch}}_{\mathrm{in}}(\mathbf{h}_i)\bigr]^{(l)}_{m,(\mathrm{X},k,c)}
    = \sum_{c'=1}^{C} W^{(l)}_{(\mathrm{X},k,c),\,c'}\,
      \mathbf{h}^{(l)}_{i,m,c'},
    \qquad |k|\le l,
    \label{eq:dpa4-wbb-lift}
\end{equation}
with $W^{(l)}$ the learnable matrix of $L^{\mathrm{ch}}_{\mathrm{in}}$ at degree $l$.
The index $m$ passes through untouched, so every frame at degree $l$ is a linear image of the same input coefficients: the input carries no frame axis, and the frame multiplicity is created here by the channel map alone.
The two arrays are then expanded into scalar functions on the group,
\begin{equation}
    f^{\mathrm{X}}_{c}(g)
    = \sum_{l=0}^{L}\ \sum_{m=-l}^{l}\ \sum_{k=-l}^{l}
      u^{\mathrm{X},(l)}_{m,k,c}\,D^{(l)}_{mk}(g),
    \qquad \mathrm{X}\in\{\mathrm{L},\mathrm{R}\},
    \label{eq:dpa4-wbb-synth}
\end{equation}
these functions are multiplied point by point, and Wigner orthogonality returns the product to coefficients,
\begin{equation}
    z^{(l)}_{m,k,c}
    = (2l+1)\int_{SO(3)} D^{(l)}_{mk}(g)\,f^{\mathrm{L}}_{c}(g)\,f^{\mathrm{R}}_{c}(g)\,dg ,
    \label{eq:dpa4-wbb-analysis}
\end{equation}
with $dg$ the normalized Haar measure and $D^{(l)}$ taken in the real basis carried by the features, so that no complex conjugation appears.
Only channel-diagonal products $f^{\mathrm{L}}_{c}f^{\mathrm{R}}_{c}$ are formed; the mixing that a product over all left--right channel pairs would supply is instead carried by the linear maps surrounding it, at a cost linear rather than quadratic in $H_{\mathrm{FFN}}$.

An invariant branch acts in parallel on the pair of $l=k=0$ slices of $u^{\mathrm{L}}$ and $u^{\mathrm{R}}$, written $\mathbf{u}^{0}_i\in\mathbb{R}^{2H_{\mathrm{FFN}}}$: it gates the coefficients and writes an additive term into the invariant slot,
\begin{equation}
    \widetilde{z}^{\,(l)}_{m,k,c}
    = \sigma\!\bigl(W_{\mathrm{g}}\mathbf{u}^{0}_i\bigr)_{c}\sum_{c'}Q_{cc'}z^{(l)}_{m,k,c'}
    + \delta_{l0}\delta_{m0}\delta_{k0}\,\mathrm{SwiGLU}(\mathbf{u}^{0}_i)_{c},
    \label{eq:dpa4-wbb-scalar}
\end{equation}
where $\sigma$ is the logistic function, the learnable $Q\in\mathbb{R}^{H_{\mathrm{FFN}}\times H_{\mathrm{FFN}}}$ is shared by all coefficients, and $W_{\mathrm{g}}\in\mathbb{R}^{H_{\mathrm{FFN}}\times 2H_{\mathrm{FFN}}}$ produces the gate.
The gated linear unit is $\mathrm{SwiGLU}(\mathbf{u})\coloneqq\mathbf{u}_{\mathrm{v}}\odot\mathrm{SiLU}(\mathbf{u}_{\mathrm{g}})$, with $\mathbf{u}_{\mathrm{g}},\mathbf{u}_{\mathrm{v}}\in\mathbb{R}^{H_{\mathrm{FFN}}}$ the two halves of its argument along the channel axis.
The block is completed by the output map and the residual connection; writing $\mathcal{P}(\mathbf{h},\mathbf{h}')$ for the map carrying the two operands to $L^{\mathrm{ch}}_{\mathrm{out}}[\widetilde{z}]$, which $L^{\mathrm{ch}}_{\mathrm{in}}$ cuts from a single input,
\begin{equation}
    \mathcal{B}(\mathbf{h}_i)
    = \mathbf{h}_i + L^{\mathrm{ch}}_{\mathrm{out}}\bigl[\widetilde{z}\bigr]
    = \mathbf{h}_i + \mathcal{P}(\mathbf{h}_i,\mathbf{h}_i),
    \label{eq:dpa4-ffn}
\end{equation}
with $L^{\mathrm{ch}}_{\mathrm{out}}$ zero-initialized, so that the block begins at the identity.
Feeding the two operands of $\mathcal{P}$ from different inputs instead, each lifted to the frame axis by its own channel map, gives the form used by the message--node coupling of Eq.~\eqref{eq:dpa4-message-node}.

\paragraph{Quadrature on the rotation group.}
The integral in Eq.~\eqref{eq:dpa4-wbb-analysis} is evaluated by a product rule, built on the factorization of a rotation into the point it assigns to the pole and a residual spin: every $g\in SO(3)$ can be written uniquely as $g=R_{\mathbf{q}}R_{z}(\varphi)$, where $\mathbf{q}=g\hat{\mathbf{z}}\in\Sph$, $R_{\mathbf{q}}$ is a fixed reference rotation carrying $\hat{\mathbf{z}}$ to $\mathbf{q}$, and $R_{z}(\varphi)$ is the rotation by $\varphi\in[0,2\pi)$ about $\hat{\mathbf{z}}$.
A quadrature on the group therefore combines a rule on the sphere with a rule on the angle.
A Lebedev rule of algebraic order $p$, with points $\{\mathbf{q}_a\}_{a=1}^{A}\subset\Sph$ and positive weights $\{w_a\}$ normalized to $\sum_a w_a=1$, supplies the spherical factor, and the $n_\varphi$ equally spaced angles $\varphi_\gamma=2\pi\gamma/n_\varphi$, $\gamma=0,\dots,n_\varphi-1$, supply the angular one, giving the $G=A\,n_\varphi$ nodes $g_{a\gamma}=R_{\mathbf{q}_a}R_{z}(\varphi_\gamma)$ with Haar weights $w_a/n_\varphi$.
On this rule the block evaluates its operands at the sample points and projects the product back to coefficients,
\begin{equation}
    z^{(l)}_{m,k,c}=(2l+1)\sum_{a,\gamma}\frac{w_a}{n_\varphi}
      D^{(l)}_{mk}(g_{a\gamma})\,f^{\mathrm{L}}_c(g_{a\gamma})\,f^{\mathrm{R}}_c(g_{a\gamma}),
    \label{eq:dpa4-wbb-quad}
\end{equation}
the operand values $f^{\mathrm{X}}_c(g_{a\gamma})$ being Eq.~\eqref{eq:dpa4-wbb-synth} read off at the nodes.
Both directions are matrices in the Wigner values at the nodes, depend only on the precomputed rule, and are formed once and cached as buffers.
Whenever the rule resolves the degrees present in the product, the two maps are mutually inverse on the band-limited space, which is the discrete counterpart of Wigner orthogonality; that condition is made precise next.

\paragraph{Exactness and equivariance.}
A rotation acts on the expanded functions by translating their argument, $f^{\mathrm{X}}_c(g)\mapsto f^{\mathrm{X}}_c(R^{-1}g)$; the point-wise product commutes with that action, and the projection of Eq.~\eqref{eq:dpa4-wbb-analysis} returns it to a rotation of the coefficients, so the exact block is equivariant.
The discrete block loses nothing, because the summand of Eq.~\eqref{eq:dpa4-wbb-quad} is a product of three Wigner entries of degree at most $L$, hence a polynomial of degree at most $3L$ in the sampled axis and, at fixed axis, a trigonometric polynomial of frequency at most $3L$ in $\varphi$.
A Lebedev rule of order $p\ge 3L$ crossed with $n_\varphi\ge 3L+1$ angles integrates every such term exactly, so the sampled block coincides with the exact one and the FFN is equivariant to floating-point precision.
Bounding those degrees requires the single bilinear product; a general point-wise nonlinearity would spread spectral weight beyond any band limit and leave the truncation only approximate.

\paragraph{Expressive power.}
The expressive power of an equivariant nonlinearity is measured by which equivariant maps a network built from it can represent.
The WBN is \emph{complete} in this sense: given sufficient band limit, width and depth, it approximates any continuous SO(3)-equivariant map between finite direct sums of irreps, uniformly on compact sets, and realizes polynomial equivariant maps exactly (Supplementary Theorem~\ref{thm:sm-wbn-complete}).
Working on the rotation group rather than the sphere is what makes this possible: with the full frame range $|k|\le l$ retained, the point-wise product realizes every Clebsch--Gordan coupling $l_1\otimes l_2\to l$ permitted by $|l_1-l_2|\le l\le l_1+l_2$.
A concise, self-contained proof is given in Supplementary Section~\ref{sec:sm-wbn-complete}.

\paragraph{Instantiation in DPA4.}
DPA4 instantiates the WBN economically.
Each FFN is itself a WBN acting on the node feature space: input, hidden and output spaces coincide, so neither the embedding $L_{\mathrm{emb}}$ nor the readout $L_{\mathrm{read}}$ is needed (Supplementary Remark~\ref{rem:sm-wbn-adapters}), and the blocks $\mathcal{B}_1,\dots,\mathcal{B}_K$ of Eq.~\eqref{eq:dpa4-wbn} are stacked directly, with $K=1$ in every released variant.
The frame range is truncated as well: only frames $|k|\le k_{\max}$ are retained, with $k_{\max}=1$ throughout this work.
Every formula above carries over unchanged, with the frame width $2L+1$ replaced by $2k_{\max}+1$; because the retained functions carry $\varphi$-frequencies at most $k_{\max}$, the angular part of the rule shrinks to $n_\varphi=3k_{\max}+1$ points while remaining exact.
For the reference configuration ($C=64$, $L=4$, $H_{\mathrm{FFN}}=192$) the quadrature is a Lebedev rule of order $13$ ($74$ points, satisfying $p\ge 3L$) crossed with $n_\varphi=4$ angles, i.e.\ $296$ group samples per atom.
Truncation gives up the hypothesis of the completeness result, which requires all frames, and is adopted for economy: $k_{\max}=1$ already restores the couplings that a sphere-restricted nonlinearity cannot form.

\paragraph{Relation to spherical-grid nonlinearities.}
At $k_{\max}=0$ only the entries $D^{(l)}_{m0}$ survive, which are, up to normalization, the spherical harmonics $Y_l^m$: the expanded functions are constant along $\varphi$ and descend to the quotient $\Sph=SO(3)/SO(2)$, so the WBB reduces to a spherical-grid bilinear nonlinearity of the kind used by Equiformer-family models~\cite{liao2023equiformerv2,liao2026equiformerv3}.
The restriction to the sphere is not free.
Re-expanding a product of spherical harmonics is governed by the Gaunt coefficients, which vanish whenever $l_1+l_2+l$ is odd.
The parity-odd couplings are therefore lost, among them the cross product $1\otimes 1\to 1$ and with it any construction of the signed volume $\det[\mathbf{r}_1,\mathbf{r}_2,\mathbf{r}_3]$; no composition of sphere-based bilinear layers can recover them.
Retaining $k=\pm 1$ reopens these channels at modest extra cost, which is what motivates the default $k_{\max}=1$.
The Lebedev axis rule attains a given algebraic order with substantially fewer points than the latitude--longitude tensor grids used by EquiformerV2 and EquiformerV3.
The measured equivariance error of the resulting FFN is at machine precision (Table~\ref{tab:s2_full_equivariance}; Supplementary Table~\ref{tab:sm_s2_truncated_equivariance}).

\subsubsection{Atomic energy head}
\label{sec:ener-head}

The scalar energy of the learned branch is read out in two steps, the first reducing the final node feature to an invariant vector $\mathbf{s}_i\in\mathbb{R}^{C}$ in one of two modes.
In the equivariant mode the Wigner bilinear block $\mathcal{B}$ of Eq.~\eqref{eq:dpa4-ffn} is applied to the final node feature and the $l=0$ slice of the result is extracted,
\begin{equation}
    \mathbf{s}_i = \bigl[\mathcal{B}\bigl(\mathbf{h}_i^{(N_{\mathrm{layer}})}\bigr)\bigr]^{(l=0)}
    \in\mathbb{R}^{C},
    \label{eq:dpa4-readout}
\end{equation}
so that the angular information carried by the $l>0$ coefficients is coupled into the invariant slot before the slice is taken; the residual connection of Eq.~\eqref{eq:dpa4-ffn} is part of the block.
In the scalar mode the $l>0$ coefficients are discarded first and the same block is applied at $L=0$, where only its invariant branch survives.
DPA4-Mini, DPA4-Neo, DPA4-Air and DPA4-Plus use the equivariant mode and DPA4-Pro the scalar mode (Supplementary Table~\ref{tab:sm_model_config}); every variant uses a single read-out block.
A gated multilayer perceptron then maps $\mathbf{s}_i$ to the atomic energy: each hidden layer projects to twice its width and splits the output into a value and a gate, $\mathbf{x}\leftarrow\mathbf{v}\odot\mathrm{SiLU}(\mathbf{g})$.
A bias-free linear map returns a single number, to which a per-species constant $b_{Z_i}$ is added,
\begin{equation}
    E^{\mathrm{NN}}_{\Theta,i}=\mathbf{w}^{\top}\mathrm{MLP}(\mathbf{s}_i)+b_{Z_i},
    \qquad
    E^{\mathrm{NN}}_{\Theta}=\sum_{i=1}^{N}E^{\mathrm{NN}}_{\Theta,i}.
    \label{eq:dpa4-ener-head}
\end{equation}
The released variants use a single hidden layer of width $32\lceil 8C/96\rceil$ ($192$ for the $C=64$ variants).
The network is shared by all species: chemistry enters through the embeddings of Section~\ref{sec:gie} and through $b_{Z_i}$, which is fixed to the per-species mean energy of the training set rather than learned.

\subsubsection{Native ZBL zone bridging}
\label{sec:zbl_method}

The analytical branch is a pairwise sum of the Ziegler--Biersack--Littmark screened Coulomb potential~\cite{ziegler1985},
\begin{equation}
    E^{\mathrm{ZBL}}(Z,R) = \tfrac{1}{2}\sum_{i\neq j}E_{ij}^{\mathrm{ZBL}}(r_{ij}),
    \qquad
    E_{ij}^{\mathrm{ZBL}}(r)= \frac{k_{\mathrm{e}}Z_iZ_j}{r}\,\Phi\!\left(\frac{r}{a_{ij}}\right),
    \qquad
    a_{ij}=\frac{0.88534\,a_0}{Z_i^{0.23}+Z_j^{0.23}},
    \label{eq:dpa4-zbl}
\end{equation}
with $k_{\mathrm{e}}$ the Coulomb constant and $a_0$ the Bohr radius.
The universal screening function takes the standard four-exponential form
\begin{equation}
    \Phi(x)=0.18175\,e^{-3.1998\,x}
    +0.50986\,e^{-0.94229\,x}
    +0.28022\,e^{-0.4029\,x}
    +0.028171\,e^{-0.20162\,x}.
    \label{eq:dpa4-zbl-phi}
\end{equation}
$E^{\mathrm{ZBL}}$ is evaluated on the raw pair distances $r_{ij}$, in contrast with the learned branch $E_{\Theta}^{\mathrm{NN}}$, which sees only the clamped distance $\widetilde{r}(r_{ij})$ (Sec.~\ref{sec:geom-inputs}).
Native ZBL zone bridging couples the two branches inside the energy model rather than as a post-hoc energy-level splice.

This distinction removes a force artifact that is intrinsic to conventional energy-level splicing.
In such a splice a coordinate-dependent switching function $\lambda_i(R)$ blends an analytical ZBL branch with a learned atom-wise energy, as in DP-ZBL-type pair corrections~\cite{wang2019dpzbl}, giving the energy
\begin{equation}
    E^{\mathrm{splice}}(Z,R) = \sum_i \left[ \lambda_i(R)E_i^{\mathrm{ZBL}}(Z,R) +\bigl(1-\lambda_i(R)\bigr)E_{\Theta,i}^{\mathrm{NN}}(Z,R) \right].
    \label{eq:zbl-splice-energy}
\end{equation}
Differentiating with respect to $\mathbf{R}_k$ gives the force
\begin{equation}
    \mathbf{F}_k^{\mathrm{splice}}
    =
    \mathbf{F}_{k}^{\mathrm{weighted}}
    -
    \sum_i
    \frac{\partial \lambda_i}{\partial \mathbf{R}_k}
    \left(
    E_i^{\mathrm{ZBL}}-E_{\Theta,i}^{\mathrm{NN}}
    \right),
    \label{eq:zbl-switching-force}
\end{equation}
where $\mathbf{F}_{k}^{\mathrm{weighted}}$ is the $\lambda$-weighted sum of the two branch forces.
The second term is a switching force proportional to the branch energy mismatch in the splice window.
It has no independent physical counterpart and vanishes only when the switching weight is constant or the two branches are exactly energy-matched throughout the switching region.

Native ZBL zone bridging carries no analogue of the switching force of Eq.~\eqref{eq:zbl-switching-force}: in Eq.~\eqref{eq:dpa4-total-energy} no coordinate-dependent switching weight multiplies the energy difference between the branches.
The clamped distance and the source-freeze gate (Sec.~\ref{sec:geom-inputs}) make $E_{\Theta}^{\mathrm{NN}}$ independent of $r_{jk}$ for every pair with $r_{jk}\le r_{\mathrm{in}}$.
Such a pair therefore contributes exactly the ZBL pair force, while many-body contributions involving non-frozen neighbors are unaffected.

\subsubsection{Symmetry guarantees}
\label{sec:symmetry}

The total energy $E(Z,R)$ in Eq.~\eqref{eq:dpa4-total-energy} is invariant under translations, under permutations of atoms of the same chemical species and under global rotations, is $C^3$-smooth in $R$ away from coincident atoms, and gives conservative forces.
Translation invariance follows because every geometric input is built from relative displacements.
Permutation invariance follows because all edge and node operators are globally shared and depend on species only through learned embeddings of $Z_i$, and neighbor aggregation is by summation or softmax-weighted summation.
Rotational invariance of $E$ follows because the Geometry-Informed Embedding (Sec.~\ref{sec:gie}), the EMFA SO(2) convolution (Sec.~\ref{sec:so2}), the equivariant feed-forward network (Sec.~\ref{sec:s2-method}) and the equivariant RMS norm are all SO(3)-equivariant on $V_{\leq L}\otimes\mathbb{R}^{C}$, while the atomic energy head reads out only the $l=0$ invariant slice.
The analytical ZBL branch is invariant as well, depending only on the scalar distances $r_{ij}$.
Smoothness follows from the $C^3$ cutoff envelope $s_5$, the $C^3$ clamped distance $\widetilde{r}$ and source-freeze gate $\eta_j$ (Sec.~\ref{sec:geom-inputs}), and the softplus stabilizer in the attention denominator (Eq.~\eqref{eq:dpa4-attention}).
Forces and virials are obtained by automatic differentiation of the single scalar energy in Eq.~\eqref{eq:dpa4-total-energy}, so the force field is conservative.

\subsection{Training}

DPA4 is trained as a conservative potential, so the loss constrains the energy and its coordinate and cell derivatives jointly.
For a mini-batch of configurations $b=1,\ldots,B$, with $N_b$ atoms in configuration $b$, the training objective is
\begin{equation}
    \mathcal{L}
    = \lambda_E\,\frac{1}{B}\sum_{b=1}^{B}\frac{|E_{\Theta,b}-E_b|}{N_b}
    + \lambda_F\,\frac{1}{\sum_{b}N_b}\sum_{b=1}^{B}\sum_{i=1}^{N_b}\bigl\|\mathbf{F}_{\Theta,bi}-\mathbf{F}_{bi}\bigr\|_2
    + \lambda_\Xi\,\frac{1}{B}\sum_{b=1}^{B}\frac{\bigl\|\Xi_{\Theta,b}-\Xi_b\bigr\|_1}{9N_b}.
    \label{eq:dpa4-loss}
\end{equation}
Here $E_{\Theta,b}$, $\mathbf{F}_{\Theta,bi}$ and $\Xi_{\Theta,b}$ denote the DPA4 energy, atomic forces and virial tensor, while $E_b$, $\mathbf{F}_{bi}$ and $\Xi_b$ denote the reference DFT labels.
The force term averages the Euclidean norm of each atomic force-vector residual over all atoms of the batch; the energy term is normalized per atom and the virial term per atom and per tensor component.
The weights $\lambda_E$, $\lambda_F$ and $\lambda_\Xi$ are the loss weights $(E,F,V)$ listed for each benchmark in Supplementary Tables~\ref{tab:sm_main_ablation_config}--\ref{tab:sm_matbench_config}.

All benchmark models train in bf16 mixed precision, with FP32 geometric reductions and TF32 matrix products where available, under a warmup--stable--decay learning-rate schedule~\cite{wen2024wsd} (Supplementary Section~\ref{sec:sm-wsd-schedule}) and the HybridMuon optimizer~\cite{jordan2024muon,liu2025muon} (Supplementary Section~\ref{sec:sm-hybrid-muon}).
HybridMuon routes matrix-valued hidden transformations to Muon updates and scalar, normalization or auxiliary parameters to Adam-family updates.
For degree-wise equivariant linear maps, slice-mode Muon applies an independent matrix update to each degree-$l$ channel block, preserving the representation-block structure instead of flattening all degrees into one matrix.
The Muon path uses match-RMS scaling to keep its update magnitude on the same learning-rate scale as the Adam-family path.
It also uses Magma-lite alignment damping~\cite{joo2026magma} to attenuate Muon blocks whose current gradients are poorly aligned with their momentum (Supplementary Section~\ref{sec:sm-magma-lite}).
The remaining model- and dataset-specific hyperparameters are reported in the same tables.

\subsection{Compiled conservative energy-gradient training}
\label{sec:compiled-training}

We compile the conservative energy-gradient path: compilation changes how the training step is executed, not what it computes, and forces remain the negative gradient of the predicted energy.
The obstacle is that the force is itself a coordinate derivative of the energy, so training on forces requires a double backward pass,
\begin{equation}
    \mathbf{F}_{\Theta}
    = -\frac{\partial E_{\Theta}}{\partial \mathbf{R}},
    \qquad
    \frac{\partial\mathcal{L}}{\partial\Theta}
    \supset
    \frac{\partial^2E_{\Theta}}{\partial\mathbf{R}\,\partial\Theta},
    \label{eq:dpa4-double-backward}
\end{equation}
leaving a coordinate--parameter mixed derivative in the training gradient.
A standard compiled forward/backward stack captures the forward graph and its first reverse-mode derivative, but does not expose a nested coordinate derivative inside the compiled region.
DPA4 executes the energy-to-force derivative during symbolic tracing and lowers the resulting tensor graph with PyTorch Inductor~\cite{ansel2024pytorch2}.
The compiled graph then contains the energy evaluation and its coordinate derivative, while the outer backward pass differentiates the force residual with respect to the model parameters.

\paragraph{Tracing the conservative graph.}
The compiled function wraps the energy computation as a tensor-only map: from coordinates, atom types, neighbor indices and optional conditioning tensors to energies, forces and virials.
Before entering the traced function, the coordinates are rebound to a fresh leaf tensor.
This restart removes any upstream graph carried by data loading or neighbor construction while preserving a well-defined coordinate endpoint for $\partial E_{\Theta}/\partial\mathbf{R}$.
During training, the force construction keeps the coordinate-derivative graph alive, which is what lets the outer loss backward pass reach the parameters through the force residual.

\paragraph{Symbolic tracing and higher-order differentiability.}
We trace the conservative graph with symbolic shapes and real tensor inputs.
Real inputs are needed because compact edge construction contains data-dependent operations; once that control flow is resolved, the runtime dimensions are represented symbolically.
The trace pads a representative batch so that the frame, total-atom and local-atom counts take distinct primes no smaller than five, none of them equal to any dimension of the model.
Primes at least five avoid the extents the graph treats specially: one for broadcasting, and the two, three and nine appearing as charge--spin, Cartesian and virial literals in the model code.
Distinctness stops the tracer from unifying two runtime dimensions into a single symbol.
The SiLU backward operation is decomposed into elementary pointwise operations before tracing, so the compiler receives an explicit first-derivative graph when the outer backward pass differentiates it again.

\paragraph{Preserving the force-loss gradient.}
When the coordinate derivative is traced in training mode, autograd wraps every saved forward activation in a chain of two detach nodes.
In the traced graph these nodes become ordinary tensor operations and sever the path from the force loss back to the parameters.
We remove a detach node when its input is another detach node, or when all of its users are detach nodes.
Those two patterns identify an autograd-inserted chain; every other detach operation was written deliberately by the model code and is kept.
The edited graph is then rebuilt into a fresh graph before compilation, so all compiler passes see a consistent node topology.

\paragraph{Dynamic edge representation and cache structure.}
Edge vectors are formed by differentiable indexed selection from the coordinate tensor, so the coordinate-derivative path runs through them.
Because the edge count is data-dependent and therefore symbolic, a single masked sentinel edge is appended to every batch; it guarantees a nonempty edge tensor while contributing exactly zero to downstream reductions.
Compiled callables are cached under a key recording whether the graph is a training or an evaluation graph, whether atomic virials are produced and whether optional auxiliary inputs are present.
This multi-slot cache avoids recompilation when training is periodically interrupted by validation, while keeping distinct output signatures in separate compiled graphs.

\paragraph{Inductor configuration.}
The traced graph is lowered with dynamic-shape compilation and deterministic settings rather than autotuning, with shape padding enabled for fluctuating symbolic dimensions.
We disable the Triton features that misbehave on the combination of a data-dependent edge count and a second-order autograd graph, most visibly the oversized fused reduction kernels that fail to compile on some GPU and Triton combinations.
Every generated kernel is constrained to a one-dimensional launch grid, because the default tiling can place the data-dependent edge or node axis on a launch dimension whose extent is limited to $65535$.
The evaluation graph additionally disables the peak-memory reordering pass, whose cost model needs size hints that the traced placeholders do not carry.

This design keeps conservative training inside compiled GPU code without replacing conservative force matching with a direct-force surrogate; its measured effect on training time, memory and accuracy is reported in Section~\ref{sec:efficiency}.


\section{Acknowledgments}

We gratefully acknowledge the support received for this work.
The work of Linfeng Zhang is supported by the Advanced Materials-National Science and Technology Major Project, China (No. 2024ZD0606900).
The work of Tiancheng Li and Jianming Xue is supported by the National Natural Science Foundation of China (Grant No.~12135002).
The work of Han Wang is supported by the National Natural Science Foundation of China (grant nos. 12525113 and 12561160120) and the National Key R\&D Program of China (grant no. 2022YFA1004300).

\section{Data availability}

The DPA4 training and inference codes are available in the DeePMD-kit repository (\url{https://github.com/deepmodeling/deepmd-kit}) from version 3.2.0.

\bibliographystyle{naturemag-doi}
\bibliography{ref}

\clearpage

\renewcommand{\thefigure}{S-\arabic{figure}}
\setcounter{figure}{0}
\renewcommand{\thetable}{S-\arabic{table}}
\setcounter{table}{0}
\renewcommand{\thesection}{S-\arabic{section}}
\setcounter{section}{0}
\makeatletter
\@ifundefined{theHfigure}
{\newcommand{\theHfigure}{S-\arabic{figure}}}
{\renewcommand{\theHfigure}{S-\arabic{figure}}}
\@ifundefined{theHtable}
{\newcommand{\theHtable}{S-\arabic{table}}}
{\renewcommand{\theHtable}{S-\arabic{table}}}
\@ifundefined{theHsection}
{\newcommand{\theHsection}{S-\arabic{section}}}
{\renewcommand{\theHsection}{S-\arabic{section}}}
\makeatother

\FloatBarrier
\begin{center}
    {\LARGE\bfseries Supplementary Information for

        \textit{Pushing the accuracy--cost frontier of machine-learning
            interatomic potentials} \par}
    \vspace{1em}
\end{center}

\addtocontents{toc}{\protect\setcounter{tocdepth}{2}}
\tableofcontents
\clearpage

\section{Mathematical formulation, operators and proofs}
\label{sec:sm-model-formulation}
\label{sec:sm-methods}

This section collects the mathematical details that supplement the architecture description in Section~\ref{sec:methods}: the formal SO(3) representation notation, the edge-local frame and SO(2) decomposition, the closed-form cutoff envelope coefficients, the truncated SO(2) layout and rescaled lift, the SO(2)-equivariant operator algebra, and the cutoff-consistent first-layer bias correction used by the ablations in later supplementary sections.
It closes with two results on the equivariant nonlinearity: the numerical equivariance of the truncated local layouts used inside the EMFA SO(2) convolution, and a proof that the Wigner bilinear network is complete in its full-frame form.

\subsection{Notation for SO(3) representations}
\label{sec:sm-representation}

For each integer $l\ge 0$, let $V_l\cong\mathbb{R}^{2l+1}$ be the real $(2l+1)$-dimensional irreducible representation of SO(3), spanned by the real spherical harmonics $\{Y_l^m\}_{m=-l}^{l}$.
The action of a rotation
$Q\in\mathrm{SO}(3)$ on $V_l$ is given by the real Wigner $D$-matrix
$D^l(Q)\in\mathrm{O}(2l+1)$, the orthogonal $(2l+1)\times(2l+1)$ matrix
that describes how the real spherical harmonics transform under rotation,
\begin{equation}
    Y_l^m(Q^{-1}\widehat{\mathbf{r}}) =\sum_{m'=-l}^{l}D^l(Q)_{m,m'}\,Y_l^{m'}(\widehat{\mathbf{r}}).
    \label{eq:sm-wigner-D}
\end{equation}
The map $D^l:\mathrm{SO}(3)\to\mathrm{O}(2l+1)$ is a continuous group homomorphism (i.e.\ $D^l(QQ')=D^l(Q)D^l(Q')$), and the $V_l$ are \emph{irreducible}: $V_l$ admits no SO(3)-invariant subspace other than $\{0\}$ and $V_l$ itself.
The trivial case $l=0$ is the invariant scalar representation with $D^0(Q)\equiv 1$.

The node-feature space $V_{\le L}\otimes\mathbb{R}^{C}=\bigoplus_{l=0}^{L}
    V_l\otimes\mathbb{R}^{C}$ of main-text Section~\ref{sec:methods} therefore
carries the real block-diagonal action
\begin{equation}
    D(Q)=\bigoplus_{l=0}^{L}D^l(Q), \label{eq:sm-direct-sum} \end{equation} acting independently on each degree-$l$ block and trivially on the channel factor $\mathbb{R}^{C}$.
Basis coefficients are packed by the linear index
\begin{equation}
    \iota(l,m)=l^2+l+m,\qquad
    m=-l,\ldots,l,\quad l=0,\ldots,L,
    \label{eq:sm-pack-index}
\end{equation}
used throughout this Supplement and in main-text Eq.~\eqref{eq:dpa4-gie};
under this packing, the matrix $D(Q)$ is block-diagonal with the
$(2l+1)\times(2l+1)$ block $D^l(Q)$ occupying rows and columns
$\iota(l,-l),\ldots,\iota(l,+l)$.
A map $f:V_{\le L}\otimes\mathbb{R}^{C}\to V_{\le L}\otimes\mathbb{R}^{C'}$ is \emph{SO(3)-equivariant} if $f\circ D(Q)=D(Q)\circ f$ for every $Q\in\mathrm{SO}(3)$, and \emph{SO(3)-invariant} if it lands in the $l=0$ block, on which $D^0\equiv 1$.

\subsection{Edge-local frame and SO(2) decomposition}
\label{sec:sm-local-frame}

\paragraph{Goal.}
For each directed edge $(i,j)$, DPA4 chooses a rotation $R_{ij}$ satisfying
\begin{equation}
    R_{ij}\widehat{\mathbf{r}}_{ij}=\mathbf{e}_z=(0,0,1)^{\top}.
    \label{eq:sm-local-frame-goal}
\end{equation}
This converts an SO(3) problem into an SO(2) problem in the edge-local frame: after the bond direction is aligned with $\mathbf{e}_z$, the residual symmetry is the subgroup of rotations about $\mathbf{e}_z$.
The in-plane basis is a gauge choice.
The equivariant message does not depend on this choice because the local operator commutes with the residual SO(2) action.

\paragraph{Two quaternion charts.}
Let $\widehat{\mathbf{r}}=(x,y,z)\in \Sph$.
DPA4 uses two smooth unit-quaternion
charts,
\begin{equation}
    \mathbf{q}^{+}(\widehat{\mathbf{r}})
    =
    \frac{(1+z,\;y,\;-x,\;0)}{\sqrt{2(1+z)}},
    \qquad
    \mathbf{q}^{-}(\widehat{\mathbf{r}})
    =
    \frac{(-x,\;0,\;1-z,\;y)}{\sqrt{2(1-z)}}.
    \label{eq:sm-quaternion-charts}
\end{equation}
The first chart is regular away from the south pole and the second is regular away from the north pole.
In the overlap, the sign of $\mathbf{q}^{-}$ is chosen
so that $\langle\mathbf{q}^{+},\mathbf{q}^{-}\rangle\ge 0$, and the blended
quaternion is
\begin{equation}
    \mathbf{q}_{ij}
    =
    \frac{\lambda\mathbf{q}^{+}(\widehat{\mathbf{r}}_{ij})
        +(1-\lambda)\mathbf{q}^{-}(\widehat{\mathbf{r}}_{ij})}
    {\left\|\lambda\mathbf{q}^{+}(\widehat{\mathbf{r}}_{ij})
        +(1-\lambda)\mathbf{q}^{-}(\widehat{\mathbf{r}}_{ij})\right\|},
    \qquad
    \lambda=\frac{1+z}{2}.
    \label{eq:sm-quaternion-blend}
\end{equation}
The denominator is bounded away from zero after shortest-arc sign alignment in the chosen chart overlap, so the resulting gauge is smooth on the chart used by the implementation.
The Wigner-D matrix in the edge-local gauge is denoted $D_{ij}=D(R(\mathbf{q}_{ij}))$.
The underlying unit-quaternion rotation matrix is
\begin{equation}
    R(\mathbf{q})=
    \begin{pmatrix}
        1-2(q_y^2+q_z^2) & 2(q_xq_y-q_wq_z) & 2(q_xq_z+q_wq_y) \\
        2(q_xq_y+q_wq_z) & 1-2(q_x^2+q_z^2) & 2(q_yq_z-q_wq_x) \\
        2(q_xq_z-q_wq_y) & 2(q_yq_z+q_wq_x) & 1-2(q_x^2+q_y^2)
    \end{pmatrix}.
    \label{eq:sm-quaternion-rotation}
\end{equation}
During training, an additional random roll about the local $z$ axis may be composed with $R_{ij}$.
This roll is a gauge augmentation: it changes the in-plane basis of the local gauge but leaves the bond direction fixed.
Because the SO(2) stack is gauge equivariant, the lifted global message is unchanged by this roll apart from the prescribed SO(3) transformation law.

\paragraph{SO(2) decomposition.}
Let $\{\mathbf{e}^{(l)}_m\}_{m=-l}^{l}$ denote the real-spherical-harmonic basis of $V_l$.
Under restriction to the SO(2) subgroup of rotations about
the local $z$ axis, $V_l$ decomposes as
\begin{equation}
    V_l|_{\mathrm{SO}(2)} = \mathrm{span}_{\mathbb{R}}\{\mathbf{e}^{(l)}_0\} \;\oplus\; \bigoplus_{m=1}^{l} \mathrm{span}_{\mathbb{R}}\{\mathbf{e}^{(l)}_{-m},\,\mathbf{e}^{(l)}_{+m}\}, \label{eq:sm-so2-decomposition} \end{equation} where $\mathrm{span}_{\mathbb{R}}\{\cdot\}$ denotes the real linear span of the indicated basis vectors.
The $m=0$ summand is a one-dimensional trivial SO(2) sub-representation, and each $m=1,\ldots,l$ summand is a two-dimensional real SO(2) sub-representation on which a $z$-axis rotation by angle $\theta$ acts as a planar rotation by angle $m\theta$, equivalent to multiplication by the complex phase $e^{im\theta}$.
This is the algebraic reason DPA4 can use SO(2)-equivariant local operators instead of Clebsch--Gordan tensor products.

\subsection{Closed-form cutoff coefficients}
\label{sec:sm-radial}

The cutoff envelopes of the learned branch (main-text Section~\ref{sec:gie}) are the polynomials
\begin{equation}
    s_p(r) =
    \begin{cases}
        1+x^p\bigl(a_p+b_p x+c_p x^2+d_p x^3\bigr), & x=r/r_{\mathrm{c}}\in[0,1), \\
        0,                                          & x\ge 1,
    \end{cases}
    \label{eq:sm-envelope}
\end{equation}
whose four boundary conditions
$s_p(r_{\mathrm{c}})=s_p'(r_{\mathrm{c}})=s_p''(r_{\mathrm{c}})=s_p'''(r_{\mathrm{c}})=0$
uniquely fix the coefficients to
\begin{align}
    a_p & =-\frac{(p+1)(p+2)(p+3)}{6}, &
    b_p & =\frac{p(p+2)(p+3)}{2},        \\
    c_p & =-\frac{p(p+1)(p+3)}{2},     &
    d_p & =\frac{p(p+1)(p+2)}{6}.
    \label{eq:sm-cutoff-coefficients}
\end{align}
For the two values used in DPA4 ($s_5$ for edge weighting and the degree
normalization, $s_7$ in the Bessel radial basis of main-text
Section~\ref{sec:gie}), the explicit polynomials are
\begin{align}
    s_5(r) & =1-56x^5+140x^6-120x^7+35x^8,       \\
    s_7(r) & =1-120x^7+315x^8-280x^9+84x^{10},
\end{align}
with $x=r/r_{\mathrm{c}}$.

\subsection{Truncated local layout and rescaled lift}
\label{sec:sm-truncation}

Inside the edge-local frame, DPA4 retains only
\begin{equation}
    \mathcal{I}_M=
    \{(l,m):0\le l\le L,\ |m|\le \min(l,M)\},
    \qquad
    D_M=|\mathcal{I}_M|.
    \label{eq:sm-trunc-index}
\end{equation}
Let $P_M:\mathbb{R}^{(L+1)^2}\to\mathbb{R}^{D_M}$ be the orthogonal projection onto this reduced layout.
The truncated edge rotation is
\begin{equation}
    D_{ij}^{\le M}=P_M D_{ij}.
    \label{eq:sm-truncated-rotation}
\end{equation}
Because $P_M^{\top}P_M$ is not the identity when $M<L$, the round trip loses degree-block norm.
DPA4 applies the diagonal lift compensation
\begin{equation}
    (\Upsilon_M)_{\iota(l,m),\iota(l,m)}
    =
    \kappa_l
    =
    \sqrt{\frac{2l+1}{2\min(l,M)+1}}.
    \label{eq:sm-rescale}
\end{equation}
Thus $\Upsilon_M= \operatorname{diag}(\kappa_0 I_1,\kappa_1 I_3,\ldots,\kappa_L I_{2L+1})$.

\subsection{Equivariant operator algebra}
\label{sec:sm-operators}

\paragraph{Classification by Schur's lemma.}
The linear operators used by DPA4 follow from the classification of equivariant maps on the relevant representation spaces.
In the global
SO(3) frame, Schur's lemma forbids mixing distinct degrees:
\begin{equation}
    \mathrm{Hom}_{\mathrm{SO}(3)}
    (V_{\le L}\otimes\mathbb{R}^{C},V_{\le L}\otimes\mathbb{R}^{C'})
    \cong
    \bigoplus_{l=0}^{L}\mathbb{R}^{C\times C'}.
    \label{eq:sm-schur-so3}
\end{equation}
In the edge-local SO(2) frame, the $m=0$ lines are trivial representations
and each $|m|>0$ pair is of complex type, giving
\begin{equation}
    \mathrm{Hom}_{\mathrm{SO}(2)}
    (V_{\le L}\otimes\mathbb{R}^{C},V_{\le L}\otimes\mathbb{R}^{C'})
    \cong
    \bigoplus_{l,l'}\mathbb{R}^{C\times C'}
    \oplus
    \bigoplus_{m=1}^{M}\bigoplus_{l,l'\ge m}\mathbb{C}^{C\times C'}.
    \label{eq:sm-schur-so2}
\end{equation}
Equation~\eqref{eq:sm-schur-so2} is the formal reason the local-frame operator may mix degrees $l,l'$ while preserving each $|m|$ stratum.

\paragraph{Global SO(3) linear maps.}
By Eq.~\eqref{eq:sm-schur-so3}, an SO(3)-equivariant \emph{degree-wise} map
has the explicit form
\begin{equation}
    (L^{\mathrm{deg}}_{\Theta}\mathbf{h})_{\iota(l,m),c'}
    =
    \sum_{c=1}^{C}
    W^{(l)}_{c,c'}\mathbf{h}_{\iota(l,m),c}, \label{eq:sm-degree-linear} \end{equation} with one learnable channel matrix $W^{(l)}\in\mathbb{R}^{C\times C'}$ per degree $l$.
Here ``degree-wise'' means \emph{per-$l$ but identical across all $m\in\{-l,\ldots,l\}$ within a degree}: the same $W^{(l)}$ is applied to every $m$-slice of the degree-$l$ block, with different matrices allowed for different $l$.
All ``degree-wise channel-mixing maps'' in main-text Section~\ref{sec:methods} (including $L^{\mathrm{pre}}_{\mathrm{deg}}$, $L^{\mathrm{post}}_{\mathrm{deg}}$, $L^{\mathrm{rad}}_{\mathrm{lift}}$, $L^{\mathrm{ch}}_{\mathrm{in}}$ and $L^{\mathrm{ch}}_{\mathrm{out}}$) are instances of this form.

\paragraph{Edge-local SO(2) linear maps.}
In the edge-local frame, the $m=0$ lines are trivial SO(2) representations, whereas each $|m|>0$ pair is of complex type.
The most general SO(2)-equivariant
linear map on the reduced layout is therefore
\begin{align}
    (L_{\Theta}^{\mathrm{SO2}}\mathbf{x})_{(l,0),c'}
     & =
    \sum_{l'=0}^{L}\sum_{c=1}^{C}
    A_{c,c'}^{(l,l',0)}\mathbf{x}_{(l',0),c}
    +b_{0,c'}\delta_{l,0},
    \label{eq:sm-so2-m0} \\
    \begin{pmatrix}
        (L_{\Theta}^{\mathrm{SO2}}\mathbf{x})_{(l,-m),c'}\\
        (L_{\Theta}^{\mathrm{SO2}}\mathbf{x})_{(l,+m),c'}
    \end{pmatrix}
     & =
    \sum_{l'\ge m}\sum_{c=1}^{C}
    \begin{pmatrix}
        U_{c,c'}^{(l,l',m)} & -V_{c,c'}^{(l,l',m)} \\
        V_{c,c'}^{(l,l',m)} & U_{c,c'}^{(l,l',m)}
    \end{pmatrix}
    \begin{pmatrix}
        \mathbf{x}_{(l',-m),c} \\
        \mathbf{x}_{(l',+m),c}
    \end{pmatrix}.
    \label{eq:sm-so2-mn}
\end{align}

\paragraph{Equivariant RMS normalization.}
For $\mathbf{h}\in V_{\le L}\otimes\mathbb{R}^{C}$, define
\begin{equation}
    \sigma^2(\mathbf{h})
    =
    \sum_{l=0}^{L}\sum_{m=-l}^{l}\sum_{c=1}^{C}
    \frac{(\mathbf{h}_{\iota(l,m),c}-\delta_{l,0}\bar{h})^2}
    {(2l+1)(L+1)C},
    \qquad
    \bar{h}=C^{-1}\sum_c \mathbf{h}_{\iota(0,0),c}.
    \label{eq:sm-rms-stat}
\end{equation}
The normalization
\begin{equation}
    (N_{\gamma,\beta}\mathbf{h})_{\iota(l,m),c}
    =
    \gamma_l
    \frac{\mathbf{h}_{\iota(l,m),c}-\delta_{l,0}\bar{h}}
    {\sqrt{\sigma^2(\mathbf{h})+\varepsilon}}
    +\delta_{l,0}\beta_c
    \label{eq:sm-rmsnorm}
\end{equation}
commutes with SO(3), because the numerator is equivariant, the denominator is
invariant under the orthogonal representation $D^l$, and $\gamma_l$ is constant
within each degree block.

\paragraph{Scalar-gated nonlinearity.}
For a smooth scalar nonlinearity $\psi$ and degree-wise gate matrices
$G^{(l)}$, define
\begin{equation}
    (\Gamma_{\psi,G}\mathbf{h})_{\iota(l,m),c}
    =
    \begin{cases}
        \psi(\mathbf{h}_{\iota(0,0),c}),                  & l=0,    \\[2pt]
        \mathbf{h}_{\iota(l,m),c}
        \sigma\!\left(\sum_{c'}
        G^{(l)}_{c',c} \mathbf{h}_{\iota(0,0),c'}\right), & l\ge 1.
    \end{cases}
    \label{eq:sm-gated}
\end{equation}
The gate is a scalar function of the invariant $l=0$ slice, so the operation is SO(3)-equivariant.

\paragraph{Scalar-keyed mixtures.}
If $\mathbf{x}^{(1)},\ldots,\mathbf{x}^{(S)}$ are equivariant tensors of the
same type and $\pi$ is an invariant scalar projection, then
\begin{equation}
    \mathcal{A}(\mathbf{x}^{(1)},\ldots,\mathbf{x}^{(S)};\mathbf{q})
    =
    \sum_{s=1}^{S}\alpha_s\mathbf{x}^{(s)},\qquad
    \alpha_s=
    \frac{\exp\langle\mathbf{q},\pi(\mathbf{x}^{(s)})\rangle}
    {\sum_{s'}\exp\langle\mathbf{q},\pi(\mathbf{x}^{(s')})\rangle}
    \label{eq:sm-scalar-keyed-mixture}
\end{equation}
is equivariant, because the weights are invariant scalars.

\subsection{Bias consistency at the smooth cutoff}
\label{sec:sm-so2conv}

If the first SO(2) linear map of the EMFA SO(2) convolution (main-text Sec.~\ref{sec:so2}; Eqs.~\eqref{eq:sm-so2-m0}, \eqref{eq:sm-so2-mn}) includes an additive $l=0$ bias, that bias must vanish with the same smooth envelope as the rest of the edge message.
Otherwise an edge whose radial contribution has gone to zero could still carry a constant scalar offset.
DPA4 therefore treats the first-layer scalar bias as part of the envelope-gated edge response.
If $b_{0,f,c}$ is the scalar bias for focus $f$
and channel $c$, the net contribution to the local $(l,m)=(0,0)$ slot is
adjusted to
\begin{equation}
    b_{0,f,c}\,\widetilde{\rho}_{ij,0,f,c}\,s_5(r_{ij}),
    \label{eq:sm-bias-consistency}
\end{equation}
which is $C^3$ and vanishes at the cutoff.
This correction is only needed at the first SO(2) layer because subsequent layers operate on already modulated local features.

\subsection{Numerical equivariance of truncated local layouts}
\label{sec:sm-truncated-equivariance}

Table~\ref{tab:sm_s2_truncated_equivariance} extends the full-coefficient comparison of main-text Table~\ref{tab:s2_full_equivariance} to the $m$-truncated local layout used inside the EMFA SO(2) convolution, evaluating the maximum equivariance error of product-grid rules and Lebedev quadrature under random rotations about the local $z$ axis.

\begin{table}[p]
    \PaperTableStyle
    \begin{threeparttable}
        \caption{Equivariance of the $m$-truncated spherical-grid nonlinearity under random local $z$-axis rotations.}
        \label{tab:sm_s2_truncated_equivariance}
        \begin{tabular*}{\linewidth}{@{\extracolsep{\fill}}cccccccccc@{}}
            \toprule
            \textbf{$M$} &
            \textbf{$L$} &
            \multicolumn{4}{c}{\textbf{Product grid}} &
            \multicolumn{4}{c}{\textbf{Lebedev quadrature}} \\
            \cmidrule(lr){3-6}\cmidrule(lr){7-10}
            & & \textbf{Rule} & \textbf{\#\,pts} & \textbf{fp64 error} & \textbf{fp32 error}
            & \textbf{$p$} & \textbf{\#\,pts} & \textbf{fp64 error} & \textbf{fp32 error} \\
            \midrule
            1 & 2 & $6\times8$   & 48  & $2.36\times10^{-7}$ & $3.58\times10^{-7}$ & 7  & 26  & $2.31\times10^{-14}$ & $2.38\times10^{-7}$ \\
            1 & 3 & $6\times12$  & 72  & $1.22\times10^{-7}$ & $5.96\times10^{-7}$ & 9  & 38  & $3.55\times10^{-14}$ & $2.98\times10^{-7}$ \\
            1 & 4 & $6\times14$  & 84  & $1.12\times10^{-6}$ & $9.54\times10^{-7}$ & 13 & 74  & $1.04\times10^{-13}$ & $9.54\times10^{-7}$ \\
            1 & 5 & $6\times18$  & 108 & $1.10\times10^{-7}$ & $1.43\times10^{-6}$ & 15 & 86  & $9.34\times10^{-14}$ & $7.15\times10^{-7}$ \\
            1 & 6 & $6\times20$  & 120 & $7.64\times10^{-7}$ & $1.91\times10^{-6}$ & 19 & 146 & $8.56\times10^{-14}$ & $2.15\times10^{-6}$ \\
            1 & 7 & $6\times24$  & 144 & $2.17\times10^{-7}$ & $1.91\times10^{-6}$ & 21 & 170 & $2.08\times10^{-13}$ & $3.34\times10^{-6}$ \\
            \midrule
            2 & 2 & $8\times8$   & 64  & $4.01\times10^{-7}$ & $8.34\times10^{-7}$ & 7  & 26  & $1.50\times10^{-14}$ & $2.38\times10^{-7}$ \\
            2 & 3 & $8\times12$  & 96  & $5.99\times10^{-7}$ & $8.34\times10^{-7}$ & 9  & 38  & $5.71\times10^{-14}$ & $3.58\times10^{-7}$ \\
            2 & 4 & $8\times14$  & 112 & $6.02\times10^{-7}$ & $1.67\times10^{-6}$ & 13 & 74  & $9.15\times10^{-14}$ & $5.96\times10^{-7}$ \\
            2 & 5 & $8\times18$  & 144 & $1.19\times10^{-6}$ & $1.55\times10^{-6}$ & 15 & 86  & $7.83\times10^{-14}$ & $4.77\times10^{-7}$ \\
            2 & 6 & $8\times20$  & 160 & $1.33\times10^{-6}$ & $2.15\times10^{-6}$ & 19 & 146 & $1.29\times10^{-13}$ & $9.54\times10^{-7}$ \\
            2 & 7 & $8\times24$  & 192 & $1.41\times10^{-6}$ & $2.62\times10^{-6}$ & 21 & 170 & $1.56\times10^{-13}$ & $1.43\times10^{-6}$ \\
            \bottomrule
        \end{tabular*}
        \begin{tablenotes}[flushleft]
            \PaperTableNotesStyle
            \item[] $M$ and $L$ are the maximum retained azimuthal order and spherical-harmonic degree, respectively. These cases use the reduced local layout of the EMFA SO(2) convolution.
            \item[] Product-grid rules are $(R_{\phi},R_{\theta})$, and \#\,pts is the total number of grid points $R_{\phi}R_{\theta}$. For Lebedev quadrature, $p$ is the algebraic order of accuracy.
            \item[] fp64 and fp32 denote 64- and 32-bit floating-point arithmetic, respectively; errors are maximum absolute deviations between the two equivariance paths.
        \end{tablenotes}
    \end{threeparttable}
\end{table}

\subsection{Completeness of the Wigner bilinear network}
\label{sec:sm-wbn-complete}

This subsection proves the completeness statement invoked in main-text Section~\ref{sec:s2-method}: networks stacking Wigner bilinear blocks with the \emph{full} frame range $|k|\le l$, interleaved with equivariant linear maps, approximate every continuous SO(3)-equivariant map between band-limited feature spaces, and realize polynomial equivariant maps exactly.
The block analyzed here is the exact continuous model underlying the deployed FFN: it retains all frames, forms products over all pairs of left and right channels, and evaluates the Haar integral exactly.
The deployed block (main-text Eqs.~\eqref{eq:dpa4-wbb-lift}--\eqref{eq:dpa4-ffn}) is its economical specialization, with channel-diagonal products, frames truncated to $|k|\le k_{\max}$, and an exact quadrature in place of the integral; the role of the truncation is discussed in Remark~\ref{rem:sm-wbn-truncation}.

Throughout, $g$ denotes a rotation, $dg$ the normalized Haar measure on $SO(3)$, and $D^{(l)}_{mk}(g)$ the \emph{complex} Wigner matrices, which satisfy the orthogonality relation
\begin{equation}
    (2l+1)\int_{SO(3)}\overline{D^{(l)}_{mk}(g)}\,D^{(l')}_{m'k'}(g)\,dg
    =\delta_{ll'}\delta_{mm'}\delta_{kk'}
    \label{eq:sm-wigner-orth}
\end{equation}
and the product identity~\cite{varshalovich1988quantum}
\begin{equation}
    D^{(l_1)}_{m_1k_1}(g)\,D^{(l_2)}_{m_2k_2}(g) =\sum_{L=|l_1-l_2|}^{l_1+l_2} \langle l_1 m_1\, l_2 m_2|L M\rangle\, \langle l_1 k_1\, l_2 k_2|L \kappa\rangle\, D^{(L)}_{M\kappa}(g), \label{eq:sm-wigner-product} \end{equation} with $M=m_1+m_2$, $\kappa=k_1+k_2$, and $\langle\cdot\,\cdot|\cdot\rangle$ the Clebsch--Gordan coefficients.
The associated coupling map $C^{L}_{l_1l_2}:V_{l_1}\otimes V_{l_2}\to V_L$,
\begin{equation}
    \bigl(C^{L}_{l_1l_2}(x,y)\bigr)_M
    =\sum_{m_1,m_2}\langle l_1 m_1\, l_2 m_2|L M\rangle\,x_{m_1}y_{m_2},
    \label{eq:sm-cg-map}
\end{equation}
is a nonzero SO(3)-intertwiner for every admissible triple $|l_1-l_2|\le L\le l_1+l_2$.
The deployed network uses the real Wigner basis of Section~\ref{sec:sm-representation}; the real and complex conventions are related by an invertible equivariant change of coordinates, which is absorbed into the learnable channel maps, so it suffices to argue in the complex convention.

\paragraph{The full-frame block.}
Let $\mathcal{H}=V_{\le L_{\max}}\otimes\mathbb{C}^{C}$.
A \emph{full-frame Wigner bilinear block} $\mathcal{B}:\mathcal{H}\to\mathcal{H}$ maps a feature $x^{(l)}_{m,c}$ to
\begin{align}
    u^{\mathrm{L},(l)}_{m,k,a}
     & =\sum_{c}
    A^{\mathrm{L},(l)}_{k,a,c}\,x^{(l)}_{m,c}, \qquad u^{\mathrm{R},(l)}_{m,k,b} =\sum_{c}A^{\mathrm{R},(l)}_{k,b,c}\,x^{(l)}_{m,c}, \qquad |k|\le l, \label{eq:sm-wbb-lin} \\ f^{\mathrm{L}}_{a}(g) &=\sum_{l,m,k}u^{\mathrm{L},(l)}_{m,k,a}D^{(l)}_{mk}(g), \qquad f^{\mathrm{R}}_{b}(g) =\sum_{l,m,k}u^{\mathrm{R},(l)}_{m,k,b}D^{(l)}_{mk}(g), \label{eq:sm-wbb-synth}\\ h_{ab}(g) &=f^{\mathrm{L}}_{a}(g)\,f^{\mathrm{R}}_{b}(g), \label{eq:sm-wbb-prod}\\ z^{(l)}_{m,k,ab} &=(2l+1)\int_{SO(3)}\overline{D^{(l)}_{mk}(g)}\,h_{ab}(g)\,dg, \label{eq:sm-wbb-analysis}\\ \mathcal{B}(x)^{(l)}_{m,d} &=\sum_{c}P^{(l)}_{dc}\,x^{(l)}_{m,c} +\delta_{l0}\delta_{m0}\,q_{d} +\sum_{k,a,b}B^{(l)}_{d,k,ab}\,z^{(l)}_{m,k,ab}, \label{eq:sm-wbb-out}\end{align} with the per-degree matrices $A^{\mathrm{L}},A^{\mathrm{R}},P$, the bias $q$, and the mixing coefficients $B$ learnable.
A \emph{Wigner bilinear network} (WBN) is a composition $L_{\mathrm{read}}\circ\mathcal{B}_{K}\circ\cdots\circ\mathcal{B}_{1}\circ L_{\mathrm{emb}}$ with equivariant linear maps $L_{\mathrm{emb}}:\mathcal{V}\to\mathcal{H}$ and $L_{\mathrm{read}}:\mathcal{H}\to\mathcal{W}$.
Each block is equivariant: a rotation translates the argument of the synthesized fields, the point-wise product commutes with this translation, and the Haar projection~\eqref{eq:sm-wbb-analysis} intertwines it back to the coefficient action.

\begin{lemma}[Clebsch--Gordan trees span polynomial equivariants]
    \label{lem:sm-cg-span}
    Let $\mathcal{V},\mathcal{W}$ be finite direct sums of SO(3) irreps.
    Every polynomial SO(3)-equivariant map $P:\mathcal{V}\to\mathcal{W}$ is a finite linear combination of \emph{tree maps}: compositions of equivariant linear leaf maps $\mathcal{V}\to V_{l}$, iterated binary couplings~\eqref{eq:sm-cg-map}, and a final equivariant linear map into $\mathcal{W}$, together with constant terms in the trivial isotypic part of $\mathcal{W}$.
\end{lemma}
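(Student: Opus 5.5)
We reduce to homogeneous components and then decompose tensor powers by iterated Clebsch--Gordan coupling. Since the SO(3) action on $\mathcal{V}$ is linear, it preserves polynomial degree. Hence a polynomial equivariant map $P$ splits as $P=\sum_{d=0}^{D}P_d$, and each homogeneous part $P_d$ is itself equivariant: averaging the degree-$d$ component over the group, or simply comparing homogeneous components of $P(D(g)x)=D(g)P(x)$, shows this. The degree-$0$ part is a constant vector $w\in\mathcal{W}$ with $D(g)w=w$ for all $g$, so it lies in the trivial isotypic part. This accounts for the constant terms. It therefore suffices to treat a fixed $d\ge 1$.

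For $d\ge1$ we write $P_d(x)=\Phi(x\otimes\cdots\otimes x)$ with $\Phi:\mathcal{V}^{\otimes d}\to\mathcal{W}$ linear. Symmetrizing $\Phi$ over the symmetric group and averaging it over $SO(3)$ with the Haar measure, we may take $\Phi$ to be an SO(3)-intertwiner. Averaging does not change $P_d$, because $P_d$ is already equivariant and the diagonal $x^{\otimes d}$ is symmetric. Next we decompose $\mathcal{V}=\bigoplus_i V_{l_i}$ via the projections $\pi_i$, which are equivariant linear leaf maps. Expanding multilinearly, $x^{\otimes d}$ becomes a sum of terms $\pi_{i_1}x\otimes\cdots\otimes\pi_{i_d}x$. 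So $\Phi$ restricted to each summand $V_{l_{i_1}}\otimes\cdots\otimes V_{l_{i_d}}$ is an intertwiner into $\mathcal{W}$.

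The key step is to show that every intertwiner $T:V_{l_1}\otimes\cdots\otimes V_{l_d}\to\mathcal{W}$ is a linear combination of left-comb trees $\Lambda\circ C^{L_{d}}_{L_{d-1}l_d}\circ\cdots\circ C^{L_2}_{l_1l_2}$, where $\Lambda:V_{L_d}\to\mathcal{W}$ is equivariant linear. We argue by induction on $d$, using the Clebsch--Gordan decomposition $V_{a}\otimes V_{b}\cong\bigoplus_{L=|a-b|}^{a+b}V_L$, with the isomorphism realized by the maps $C^L_{ab}$ of Eq.~\eqref{eq:sm-cg-map}.

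By induction, $V_{l_1}\otimes\cdots\otimes V_{l_{d-1}}$ decomposes as a direct sum over intermediate-label sequences $(L_2,\dots,L_{d-1})$ of copies of $V_{L_{d-1}}$, with explicit equivariant projections given by comb trees. Here we use that the full collection $\bigoplus_L C^L_{ab}$ is an isomorphism, and not merely a family of nonzero maps. Its inverse is therefore a sum of equivariant embeddings, and this makes the decomposition exact. Tensoring this decomposition with $V_{l_d}$ and applying the two-factor decomposition once more gives
\[
V_{l_1}\otimes\cdots\otimes V_{l_d}\cong\bigoplus V_{L_d},
\]
with projections given by comb trees. Any intertwiner $T$ then equals $\sum\Lambda_\alpha\circ(\text{tree}_\alpha)$, where $\Lambda_\alpha$ is $T$ restricted to the $\alpha$-th copy of $V_{L_d}$, and $\Lambda_\alpha$ is equivariant linear by Schur's lemma. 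Evaluating on the diagonal, with leaves $\pi_{i_j}x$, yields $P_d$ as the claimed finite sum of tree maps.

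The main obstacle is to make the two-factor isomorphism precise with the normalized maps $C^L_{ab}$. This requires unitarity of the Clebsch--Gordan matrix, which gives orthogonality $\sum_{M}\langle aα\,bβ|LM\rangle\langle aα'\,bβ'|LM\rangle$ summed over $L$ equal to $\delta\delta$. It also requires passing from complex to real coordinates, which, as noted above, amounts to an equivariant change of basis. Everything else is bookkeeping.
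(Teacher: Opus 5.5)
Your proposal is correct and follows essentially the same route as the paper: split $P$ into equivariant homogeneous parts, linearize each $P_d$ to an intertwiner on $\mathcal{V}^{\otimes d}$, reduce to irreducible tensor factors, and induct on $d$ using the multiplicity-free Clebsch--Gordan decomposition. The differences are minor: you build an explicit comb-tree decomposition of $V_{l_1}\otimes\cdots\otimes V_{l_d}$ from the invertibility of $\bigoplus_L C^L_{ab}$, for which unitarity is more than you need, since nonvanishing of each $C^L_{ab}$, multiplicity-freeness and Schur's lemma already give it; the paper instead takes an arbitrary irreducible decomposition of the first $d-1$ factors, expresses its projections through the inductive hypothesis, and uses that $\mathrm{Hom}_{SO(3)}(V_J\otimes V_{l_d},V_L)$ is spanned by $C^L_{Jl_d}$.
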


\begin{proof}
    Decomposing $P$ into homogeneous parts and comparing degrees in $P(tx)$ shows each part is equivariant; the degree-zero part is an invariant constant.
    A homogeneous part of degree $d$ is, by polarization, the diagonal restriction of a symmetric $d$-linear equivariant map, hence of an element of $\mathrm{Hom}_{SO(3)}(\mathcal{V}^{\otimes d},\mathcal{W})$; it therefore suffices to span this space by trees, and, after decomposing $\mathcal{V}$ and $\mathcal{W}$ into irreps, to span $\mathrm{Hom}_{SO(3)}(V_{l_1}\otimes\cdots\otimes V_{l_d},V_L)$.
    We induct on $d$.
    For $d=1$ Schur's lemma leaves only multiples of the identity.
    For $d>1$, decompose $X=V_{l_1}\otimes\cdots\otimes V_{l_{d-1}}$ into irreducible summands with equivariant projections $p_{J,r}:X\to V_J$ and inclusions $i_{J,r}$ (the equivariant maps extracting and identifying the $r$-th copy of $V_J$ in the decomposition of $X$ into irreducibles, with $\sum_{J,r}i_{J,r}p_{J,r}=\mathrm{id}_X$); any $T\in\mathrm{Hom}_{SO(3)}(X\otimes V_{l_d},V_L)$ can be written as $T=\sum_{J,r}T\circ(i_{J,r}p_{J,r}\otimes\mathrm{id})$, and each factor $T\circ(i_{J,r}\otimes\mathrm{id})$ lies in $\mathrm{Hom}_{SO(3)}(V_J\otimes V_{l_d},V_L)$, which by the multiplicity-free Clebsch--Gordan decomposition is spanned by $C^{L}_{Jl_d}$ when the triple is admissible and vanishes otherwise.
    The induction hypothesis expresses each $p_{J,r}$ by trees on the first $d-1$ factors; composing with $C^{L}_{Jl_d}$ yields trees on all $d$ factors.
    Evaluating on the diagonal $(x,\ldots,x)$ recovers the homogeneous polynomial parts and proves the claim.
\end{proof}

\begin{theorem}[Completeness of the WBN]
    \label{thm:sm-wbn-complete}
    Let $\mathcal{V},\mathcal{W}$ be finite direct sums of SO(3) irreps and let $F:\mathcal{V}\to\mathcal{W}$ be continuous and SO(3)-equivariant.
    For every compact $\Omega\subset\mathcal{V}$ and every $\varepsilon>0$ there exist a band limit $L_{\max}$, a width $C$, a depth $K$, and WBN parameters such that
    \begin{equation}
        \sup_{x\in\Omega}
        \bigl\|L_{\mathrm{read}}\circ\mathcal{B}_{K}\circ\cdots\circ\mathcal{B}_{1}\circ L_{\mathrm{emb}}(x)-F(x)\bigr\|
        <\varepsilon .
        \label{eq:sm-wbn-uat}
    \end{equation}
    Moreover, every \emph{polynomial} SO(3)-equivariant map is realized exactly, with zero error.
\end{theorem}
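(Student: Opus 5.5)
The argument has two parts: first the exact realization of polynomial equivariants, then density. For the exact part I would rely on Lemma~\ref{lem:sm-cg-span}. It reduces the task to realizing finite linear combinations of tree maps, plus invariant constants, and the constants come for free from the bias $q$ in Eq.~\eqref{eq:sm-wbb-out}.

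The key local computation shows that a single full-frame block can produce any binary coupling $C^{L}_{l_1l_2}(x^{(l_1)}_{a},x^{(l_2)}_{b})$ of two channels it carries. I would choose $A^{\mathrm{L}}$ to place channel $a$ at degree $l_1$ with a single frame $k_1$, and $A^{\mathrm{R}}$ to place channel $b$ at degree $l_2$ with frame $k_2$. The synthesized fields are then $\sum_{m_1}x_{m_1}D^{(l_1)}_{m_1k_1}$ and $\sum_{m_2}y_{m_2}D^{(l_2)}_{m_2k_2}$. The product identity~\eqref{eq:sm-wigner-product} followed by the orthogonality relation~\eqref{eq:sm-wigner-orth} gives
\[
z^{(L)}_{M,\kappa}=\langle l_1k_1\,l_2k_2|L\kappa\rangle\,\bigl(C^L_{l_1l_2}(x,y)\bigr)_M ,\qquad \kappa=k_1+k_2 .
\]
So I need frames $|k_1|\le l_1$ and $|k_2|\le l_2$ with $|k_1+k_2|\le L$ and a nonzero frame Clebsch--Gordan coefficient. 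The choice $k_1=k_2=0$ fails exactly when $l_1+l_2+L$ is odd, which is the Gaunt obstruction. I would therefore take $k_1=1$, $k_2=0$ (or $k_1=1$, $k_2=-1$) in that case and verify, via the standard nonvanishing of $\langle l_1 1\,l_2 0|L1\rangle$ for admissible triples with $l_1,L\ge1$, that some admissible frame pair always works. I expect this to be the main obstacle, since it is precisely where the full frame range is needed. Dividing by the nonzero coefficient and reading out through $B^{(L)}$ completes the local step.

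With the binary coupling in hand, I would build a tree of depth $d$ with $d$ blocks. Width is chosen large enough that every internal node of every tree in the finite linear combination occupies its own channel. The skip term $P$ carries the already-computed intermediate values forward unchanged. Each block appends the couplings of the next level, and its other bilinear contributions are zeroed through $B$. Leaf maps $\mathcal{V}\to V_l$ are absorbed into $L_{\mathrm{emb}}$, since equivariant linear maps by Schur's lemma are degree-wise channel maps. The final linear combination and the output map are absorbed into $L_{\mathrm{read}}$. The band limit $L_{\max}$ is taken as the largest degree occurring in any tree. Trees of different depths are padded by identity passes, realized by setting $P=I$ and $B=0$. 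This establishes exact realization of every polynomial equivariant.

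For density, I would show that equivariant polynomials are uniformly dense in continuous equivariants on compact sets. First I would replace $\Omega$ by the compact set $\mathrm{SO}(3)\Omega$. Stone--Weierstrass then gives a polynomial $p$ with $\|p-F\|<\varepsilon$ on that set. Next I would average over the group, setting $\bar p(x)=\int D_{\mathcal{W}}(g)^{-1}p(D_{\mathcal{V}}(g)x)\,dg$. Since $F$ is equivariant, $\bar p-F$ is the average of $D_{\mathcal W}(g)^{-1}(p-F)(D_{\mathcal V}(g)x)$, and the representations are unitary, so the error still satisfies $\|\bar p-F\|<\varepsilon$. The averaged map $\bar p$ is a polynomial of the same degree and is equivariant. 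Applying the exact realization to $\bar p$ then yields the bound~\eqref{eq:sm-wbn-uat}.
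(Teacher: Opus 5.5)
Your argument follows the paper's proof step for step, with one exception. The shared steps are Stone--Weierstrass on $SO(3)\Omega$ plus Reynolds averaging, Lemma~\ref{lem:sm-cg-span}, one block per binary coupling via the product identity and Wigner orthogonality, and a stack that preserves filled channels through $P$ and writes each tree node into a fresh channel. The exception is the frame choice in the local step, which you flag as the main obstacle. As written, that step is not justified.

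You handle the parity-odd case by citing a ``standard nonvanishing of $\langle l_1 1\,l_2 0|L1\rangle$ for admissible triples with $l_1,L\ge1$''. In that generality the claim is false. For even $l_1+l_2+L$, the $J_+$ recursion on $3j$ symbols gives $\langle l_1 1\,l_2 0|L1\rangle\propto[\,l_2(l_2+1)-l_1(l_1+1)-L(L+1)\,]\,\langle l_1 0\,l_2 0|L0\rangle$, which vanishes at $(l_1,l_2,L)=(3,6,5)$. In the odd case, the only one where you use it, the coefficient is in fact nonzero. But that is a genuine lemma, essentially the nonvanishing of the Elsasser integral $\int Y_L\,\widehat{\mathbf r}\cdot(\nabla Y_{l_1}\times\nabla Y_{l_2})$, and you cannot simply quote it.

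The paper needs none of this. As $(k_1,k_2)$ range over the full frame range, the numbers $\langle l_1k_1\,l_2k_2|L\,k_1+k_2\rangle$ are exactly the coefficient array of the coupling map $C^L_{l_1l_2}$. That map is nonzero for every admissible triple, so a frame pair with a nonzero coefficient exists. This needs no parity split and no special-function input.

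You can either adopt that one-line argument or actually prove the odd-case nonvanishing. The second option buys something extra. Your construction only ever uses the frames $(0,0)\to0$ or $(1,0)\to1$, so it would show that the full-channel-pair, exactly integrated block is already complete with frames truncated to $|k|\le1$. The paper does not claim this.
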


\begin{proof}
    \emph{Step 1 (reduction to polynomial equivariants).}
    Let $\widehat\Omega=SO(3)\,\Omega$, compact by compactness of the group.
    By the Stone--Weierstrass theorem there is a polynomial map $R:\mathcal{V}\to\mathcal{W}$ with $\sup_{\widehat\Omega}\|R-F\|<\delta$.
    Its Reynolds average $P(x)=\int_{SO(3)}g^{-1}R(gx)\,dg$ is polynomial and equivariant, and since $F(x)=\int g^{-1}F(gx)\,dg$, \[ \sup_{x\in\Omega}\|P(x)-F(x)\| \le \Bigl(\sup_{g}\|g^{-1}\|_{\mathcal{W}}\Bigr)\,\delta , \] which is smaller than $\varepsilon$ for $\delta$ small.
    It therefore suffices to realize polynomial equivariants exactly.

    \emph{Step 2 (one block realizes one coupling).}
    Suppose two hidden channels of the current feature hold covariants $T_1(x)\in V_{l_1}$ and $T_2(x)\in V_{l_2}$, and fix an admissible output degree $L$.
    Since the coupling map~\eqref{eq:sm-cg-map} is nonzero, its coefficient array is not identically zero, so there exist frames $|k_1|\le l_1$, $|k_2|\le l_2$ with
    \begin{equation}
        \langle l_1 k_1\, l_2 k_2|L\,\kappa\rangle\neq0,
        \qquad \kappa=k_1+k_2 .
        \label{eq:sm-frame-choice}
    \end{equation}
    Use the maps~\eqref{eq:sm-wbb-lin} to load the two channels into one product pair, \[ f^{\mathrm{L}}_{a}(g)=\sum_{m_1}(T_1)_{m_1}D^{(l_1)}_{m_1k_1}(g), \qquad f^{\mathrm{R}}_{b}(g)=\sum_{m_2}(T_2)_{m_2}D^{(l_2)}_{m_2k_2}(g), \] all other coefficients set to zero.
    By the product identity~\eqref{eq:sm-wigner-product}, \[ h_{ab}(g) =\sum_{L'}\langle l_1 k_1\, l_2 k_2|L'\kappa\rangle\, \bigl(C^{L'}_{l_1l_2}(T_1,T_2)\bigr)_{M}\,D^{(L')}_{M\kappa}(g), \] and the analysis~\eqref{eq:sm-wbb-analysis} with the orthogonality~\eqref{eq:sm-wigner-orth} extracts \[ z^{(L)}_{M,\kappa,ab} =\langle l_1 k_1\, l_2 k_2|L\,\kappa\rangle\, \bigl(C^{L}_{l_1l_2}(T_1,T_2)\bigr)_{M}.
    \]
    The nonzero scalar~\eqref{eq:sm-frame-choice} is absorbed into $B^{(L)}_{d,\kappa,ab}$ in~\eqref{eq:sm-wbb-out}; choosing $P^{(l)}$ to preserve the existing channels and directing the result into one fresh output channel $d$ stores exactly the coupling $C^{L}_{l_1l_2}(T_1,T_2)$.

    \emph{Step 3 (finite stacks realize all trees).}
    Fix a polynomial equivariant $P$ and expand it by Lemma~\ref{lem:sm-cg-span} into finitely many tree maps.
    Assign one hidden channel to every node of every tree; choose $L_{\max}$ no smaller than the largest degree appearing and $C$ large enough to hold all channels.
    The embedding $L_{\mathrm{emb}}$ places the leaf maps.
    Proceeding by induction on tree depth, each block writes all couplings of the next depth into fresh channels while preserving the channels already filled; this is Step~2, applied with a separate product pair $(a,b)$ and output channel per node, the mixing coefficients $B$ set to ignore all other pairs.
    After finitely many blocks all nodes are present; the readout $L_{\mathrm{read}}$ forms the finite linear combination of tree outputs, and the bias $q$ supplies the constant terms.
    The stack realizes $P$ exactly, and with Step~1 the theorem follows.
\end{proof}

\begin{remark}[The adapters $L_{\mathrm{emb}},L_{\mathrm{read}}$]
\label{rem:sm-wbn-adapters}
The maps $L_{\mathrm{emb}}$ and $L_{\mathrm{read}}$ serve only to connect the blocks to input and output spaces $\mathcal{V},\mathcal{W}$ that may differ from the hidden space $\mathcal{H}$, which Theorem~\ref{thm:sm-wbn-complete} is free to enlarge.
When the three spaces coincide, as for the deployed FFN, whose blocks map the feature space $V_{\le L}\otimes\mathbb{R}^{C}$ to itself, no adapters are needed: $L_{\mathrm{emb}}$ composes with the linear maps $A^{\mathrm{L}},A^{\mathrm{R}},P$ of the first block and $L_{\mathrm{read}}$ with $P$, $B$ and $q$ of the last, each composition again an equivariant linear map of the same form.
A bare stack of Wigner bilinear blocks on a common feature space is therefore itself a WBN.
\end{remark}

\begin{remark}[Frame truncation and sphere projection]
    \label{rem:sm-wbn-truncation}
    The full frame range enters the proof only through the choice~\eqref{eq:sm-frame-choice}.
    Truncating the frames to $|k|\le k_{\max}$ restricts that choice, and at $k_{\max}=0$ it forces $k_1=k_2=\kappa=0$: the surviving coefficient $\langle l_1 0\, l_2 0|L 0\rangle$ vanishes whenever $l_1+l_2+L$ is odd, so a sphere-restricted bilinear nonlinearity cannot realize any parity-odd coupling, however deep the stack; this is the incompleteness discussed in main-text Section~\ref{sec:s2-method}.
    Already at $k_{\max}=1$ nonzero parity-odd choices become available, e.g.\ $\langle 1\,1\,1\,{-1}|1\,0\rangle=1/\sqrt{2}$ for the cross product $1\otimes1\to1$; the hypothesis of Theorem~\ref{thm:sm-wbn-complete}, however, requires the full range.
\end{remark}

\FloatBarrier
\section{Training methods}
\label{sec:sm-training-systems}

\subsection{Training objective}
\label{sec:sm-training-objective}

DPA4 is trained as a conservative interatomic potential with the MAE objective of main-text Eq.~\eqref{eq:dpa4-loss}; the benchmark-specific weights, batch sizes and training lengths are listed in Tables~\ref{tab:sm_main_ablation_config}--\ref{tab:sm_matbench_config}.

\subsection{Warmup--stable--decay learning-rate schedule}
\label{sec:sm-wsd-schedule}

For long training runs, DPA4 uses a warmup--stable--decay schedule in which the learning rate first increases linearly, remains constant for the main training phase and is annealed only near the end of the run~\cite{wen2024wsd}.
Let $T$ be the total number of optimization steps, $T_{\mathrm{w}}$ the warmup length, $T_{\mathrm{d}}$ the decay length and $T_{\mathrm{s}}=T-T_{\mathrm{w}}-T_{\mathrm{d}}$ the stable length.
The
schedule used in the reported DPA4 training runs is
\begin{equation}
    \alpha(t)=
    \begin{cases}
        \alpha_{\mathrm{w}}
        +(\alpha_0-\alpha_{\mathrm{w}})t/T_{\mathrm{w}},
         & 0\le t<T_{\mathrm{w}},                             \\[3pt]
        \alpha_0,
         & T_{\mathrm{w}}\le t<T_{\mathrm{w}}+T_{\mathrm{s}}, \\[3pt]
        \alpha_{\mathrm{d}}(\tau),
         & T_{\mathrm{w}}+T_{\mathrm{s}}\le t<T,              \\[3pt]
        \alpha_{\min},
         & t\ge T,
    \end{cases}
    \label{eq:sm-wsd-piecewise}
\end{equation}
where $\alpha_{\mathrm{w}}$ is the initial warmup learning rate,
$\alpha_0$ is the stable-phase learning rate,
$\alpha_{\min}$ is the final learning rate and
\begin{equation}
    \tau
    =
    \operatorname{clip}
    \left(
    \frac{t-T_{\mathrm{w}}-T_{\mathrm{s}}}{T_{\mathrm{d}}},
    0,1
    \right),
    \label{eq:sm-wsd-tau}
\end{equation}
with cosine annealing in the decay phase,
\begin{equation}
    \alpha_{\mathrm{d}}(\tau)
    =
    \alpha_{\min}
    +
    \frac{\alpha_0-\alpha_{\min}}{2}
    \left(1+\cos \pi\tau\right).
    \label{eq:sm-wsd-cosine}
\end{equation}
Thus the stable phase carries most of the optimization steps, whereas the final cosine decay suppresses high-learning-rate oscillations before checkpoint selection.

\subsection{HybridMuon optimizer}
\label{sec:sm-hybrid-muon}

DPA4 is optimized with HybridMuon, a matrix-aware hybrid optimizer adapted from Muon~\cite{jordan2024muon} and scalable-Muon training studies~\cite{liu2025muon}.
The design separates two classes of trainable parameters.
Matrix-valued hidden transformations are routed to Muon, whereas biases, normalization scales, one-dimensional parameters and explicitly marked auxiliary parameters are routed to Adam or decoupled-weight-decay Adam.
This static routing is built on the first optimizer step and remains fixed during training, so the update rule for each parameter is independent of the current gradient value.

\paragraph{Matrix views and slice mode.}
The effective shape of a trainable tensor is its shape with singleton dimensions removed.
HybridMuon interprets a rank-two effective shape as one matrix.
For higher-rank equivariant tensors, DPA4 uses slice mode: the leading dimensions index independent blocks, and Muon is applied separately to every trailing $(m,n)$ matrix.
Thus, for an effective shape $(b_1,\ldots,b_q,m,n)$, the
optimizer constructs
\begin{equation}
    B=\prod_{a=1}^{q}b_a \quad\text{independent matrix blocks}\quad G_t^{(b)}\in\mathbb{R}^{m\times n}, \qquad b=1,\ldots,B.
    \label{eq:sm-muon-slice-view}
\end{equation}
This choice is important for SO(2)-structured equivariant weights: degree- and order-indexed blocks are updated independently, rather than being flattened into one large matrix that would mix unrelated representation strata.
In the degree-wise SO(3) channel maps used by the SO(2) pre- and post-projections and by the equivariant FFN, the weight tensor has shape $(L+1,C_{\mathrm{in}},F C_{\mathrm{out}})$; slice mode therefore applies Muon separately to the $(C_{\mathrm{in}},F C_{\mathrm{out}})$ matrix of each degree $l$.
Local SO(2) linear maps keep separate matrix parameters for the $m=0$ block and the constrained $|m|>0$ blocks, so the optimizer acts on these structured matrix blocks without collapsing distinct equivariant subspaces.

\paragraph{Muon update.}
For each Muon-routed block, the optimizer maintains a momentum buffer and forms
a Nesterov-style update
\begin{align}
    M_t^{(b)} & = \beta M_{t-1}^{(b)}+(1-\beta)G_t^{(b)}, \\ U_t^{(b)} & = \beta M_t^{(b)}+(1-\beta)G_t^{(b)}.
    \label{eq:sm-muon-nesterov}
\end{align}
The matrix $U_t^{(b)}$ is then orthogonalized by a Newton--Schulz polar iteration.
For the standard square or batched path, the iteration starts from
$X_0=U_t^{(b)}/\|U_t^{(b)}\|_{\mathrm{F}}$ and applies
\begin{equation}
    X_{k+1} = a_k X_k + \left(b_k A_k+c_k A_k^2\right)X_k, \qquad A_k=X_kX_k^{\mathsf{T}}.
    \label{eq:sm-newton-schulz}
\end{equation}
DPA4 uses a two-stage schedule: eight fast iterations with $(a,b,c)=(3.4445,-4.7750,2.0315)$ followed by two Newton polishing iterations with $(a,b,c)=(2,-1.5,0.5)$.
The resulting polar factor is denoted $Q_t^{(b)}$.

\paragraph{Rectangular Gram path.}
For rectangular matrices, HybridMuon uses a compiled Gram Newton--Schulz path following the fast polar-decomposition formulation used for Muon~\cite{dao2026gramnewtonschulz}.
The matrix is oriented so that $m\le n$, normalized in single precision and iterated in half precision using a fixed Polar-Express coefficient schedule.
Since the iteration depends on $XX^{\mathsf{T}}\in\mathbb{R}^{m\times m}$, rectangular blocks with the same smaller dimension can be column-padded, concatenated and orthogonalized in a single grouped call.
Padding only the larger dimension preserves the Frobenius
norm and the Gram matrix,
\begin{equation}
    [X\;0][X\;0]^{\mathsf{T}}=XX^{\mathsf{T}},
    \label{eq:sm-gram-padding}
\end{equation}
so truncating the padded columns after the iteration exactly recovers the
unpadded result while reducing the number of small GPU launches.

\paragraph{Update-RMS matching and Adam-family path.}
In the default match-RMS mode, the Muon update for an $m\times n$ block is
scaled as
\begin{equation}
    \Delta W_t^{(b)}
    =
    -\alpha_t\,\gamma\,\sqrt{\max(m,n)}\,Q_t^{(b)},
    \qquad \gamma=0.18.
    \label{eq:sm-match-rms}
\end{equation}
This coefficient follows the update-RMS calibration used in scalable Muon training~\cite{liu2025muon}: it brings the per-element magnitude of the orthogonalized matrix update onto the same learning-rate scale as AdamW-like updates.
Parameters routed to Adam use first and second moments in single precision, with bias correction and $\epsilon=10^{-20}$.
Decoupled weight decay is applied to Muon-routed matrices and to decay-enabled Adam-routed tensors, but not to one-dimensional Adam-routed parameters.

\subsection{Magma-lite update damping}
\label{sec:sm-magma-lite}

Conservative energy-gradient training produces gradients that include mixed coordinate--parameter derivatives and can exhibit large block-to-block variation.
DPA4 therefore augments the Muon path with a deterministic Magma-lite damping rule, adapted from momentum-aligned gradient masking~\cite{joo2026magma}.
For each Muon block, let $G_t^{(b)}$ be the current gradient and $M_t^{(b)}$ the momentum buffer after the update in Eq.~\eqref{eq:sm-muon-nesterov}.
The block alignment is
\begin{equation}
    \chi_t^{(b)}
    =
    \frac{
        \left\langle M_t^{(b)},G_t^{(b)}\right\rangle_{\mathrm{F}}
    }{
        \|M_t^{(b)}\|_{\mathrm{F}}\,
        \|G_t^{(b)}\|_{\mathrm{F}}+\varepsilon
    },
    \qquad
    \chi_t^{(b)}\in[-1,1].
    \label{eq:sm-magma-cosine}
\end{equation}
The alignment is mapped through a temperature-scaled sigmoid and stretched to
$[0,1]$,
\begin{equation}
    r_t^{(b)}
    =
    \operatorname{clip}
    \left[
        \frac{
            \sigma(\chi_t^{(b)}/\tau)-\sigma(-1/\tau)
        }{
            \sigma(1/\tau)-\sigma(-1/\tau)
        },
        0,1
        \right],
    \qquad \tau=2.
    \label{eq:sm-magma-score}
\end{equation}
An exponential moving average gives
\begin{equation}
    u_t^{(b)}
    =
    \rho_{\mathrm{ema}} u_{t-1}^{(b)}
    +(1-\rho_{\mathrm{ema}})r_t^{(b)},
    \qquad \rho_{\mathrm{ema}}=0.9,
    \label{eq:sm-magma-ema}
\end{equation}
and the final Muon update is rescaled by
\begin{equation}
    \Delta W_t^{(b)}
    \leftarrow
    \left[s_{\min}+(1-s_{\min})u_t^{(b)}\right]\Delta W_t^{(b)},
    \qquad s_{\min}=0.1.
    \label{eq:sm-magma-lite}
\end{equation}
This rule differs from stochastic masking: all blocks remain active, and poorly aligned updates are continuously damped rather than randomly skipped.
The nonzero lower bound prevents the optimizer from freezing a block entirely, which is important for MLIP training, where force labels can make the alignment temporarily noisy without implying that the corresponding representation block should stop learning.

\FloatBarrier
\section{Ablation study}
\label{sec:sm-ablation}

This section provides the complete ablation evidence behind Section~\ref{sec:ablation}.
The first subsection reports the cumulative ablation behind main-text Fig.~\ref{fig:ablation}, and the following six give the full mechanism-level sweeps for graph compilation, attention aggregation, multi-focus design, the low-rank edge--node SO(2)-equivariant degree coupling, the spherical-grid nonlinearity and the Wigner bilinear network.
The remaining subsections report model-selection and robustness studies.
All ablations use the same WBM-subsampled evaluation protocol.
Relative efficiency metrics are normalized within each controlled group under matched H20 hardware, batch-size, data-loading and precision settings.
Unless otherwise noted, \textit{Train time (rel.)
} denotes relative training wall-clock time, and
\textit{Test time (rel.)} denotes the wall-clock time of a full test pass over the
WBM-subsampled test set in DeePMD-kit.
Boldface and underlining denote the best and second-best values only for metrics in which ranking is explicitly highlighted.
In the ablation tables, \cmark{} and \xmark{} denote an enabled and a disabled setting, respectively.
N/A denotes a parameter or option that is not applicable to the corresponding setting.

\subsection{Cumulative ablation}
\label{sec:sm-cumulative-ablation}

This ablation is the numerical basis of main-text Fig.~\ref{fig:ablation}.
Starting from the complete interaction block, the four components that can be switched off without changing the dimensions of the equivariant representation are removed one after another, and every configuration is retrained under the identical protocol given in the cumulative-ablation column of Table~\ref{tab:sm_main_ablation_config}; the resulting errors are collected in Table~\ref{tab:sm_cumulative_ablation}.
All five configurations use a single focus stream, so the multi-focus design contributes nothing to the reported differences; it is assessed separately at matched SO(2) width in Section~\ref{sec:sm-multifocus}.

Removing all four components raises the energy MAE from 31.007 to 39.671~meV/atom and the force MAE from 39.169 to 44.581~meV/\AA{}, increases of 27.9\% and 13.8\% relative to the complete architecture.
It reduces the parameter count from 2.05~M to 1.19~M and shortens the wall-clock time of a full test pass from 20.2 to 17.7~s, a 12.4\% reduction on the same denominator.
The energy error increases monotonically along the chain, whereas the force error does not: removing the message node WBN leaves it marginally lower, so the two WBN insertions are separated by the force metric only in the controlled comparison of Table~\ref{tab:sm_wbn_ablation}.

\begin{table}[ht]
    \PaperTableStyle
    \setlength{\tabcolsep}{1.8pt}
    \begin{threeparttable}
        \caption{Cumulative ablation of the DPA4 architectural components.}
        \label{tab:sm_cumulative_ablation}
        \begin{tabular*}{\linewidth}{@{\extracolsep{\fill}}ccccccccc@{}}
            \toprule
            \makecell{\textbf{Message}\\\textbf{node WBN}}       &
            \makecell{\textbf{WBN in}\\\textbf{FFN}}             &
            \makecell{\textbf{Edge--node}\\\textbf{degree coupling}}     &
            \makecell{\textbf{Attention}\\\textbf{aggregation}}  &
            \textbf{E MAE}$\downarrow$\tnote{a}                  &
            \textbf{F MAE}$\downarrow$\tnote{a}                  &
            \makecell{\textbf{Train time}\\\textbf{(rel.)}$\downarrow$}\tnote{b} &
            \makecell{\textbf{Test time}\\\textbf{(rel.)}$\downarrow$}\tnote{b}  &
            \textbf{Params}                                                                        \\
            \midrule
            \cmark & \cmark & \cmark & \cmark & \textbf{31.007}    & \underline{39.169} & 1.00 & 1.00 & 2.05M \\
            \xmark & \cmark & \cmark & \cmark & \underline{31.585} & \textbf{38.957}    & 0.93 & 0.98 & 1.89M \\
            \xmark & \xmark & \cmark & \cmark & 33.226             & 39.918             & 0.87 & 0.93 & 1.22M \\
            \xmark & \xmark & \xmark & \cmark & 34.529             & 42.004             & 0.80 & 0.90 & 1.21M \\
            \xmark & \xmark & \xmark & \xmark & 39.671             & 44.581             & 0.72 & 0.88 & 1.19M \\
            \bottomrule
        \end{tabular*}
        \begin{tablenotes}[flushleft]
            \PaperTableNotesStyle
            \item[] Check and cross marks denote enabled and disabled components, respectively. A disabled edge--node degree coupling denotes scalar scaling, and disabled attention aggregation denotes scatter-sum aggregation. All rows use one focus stream.
            \item[] Boldface and underlining denote the lowest and second-lowest value in each MAE column.
            \item[a] E MAE and F MAE are mean absolute errors in meV/atom and meV/\AA, respectively; lower values are better.
            \item[b] Train time (rel.) and Test time (rel.) are normalized to the complete architecture in the first row.
        \end{tablenotes}
    \end{threeparttable}
\end{table}

\FloatBarrier

\subsection{Graph compilation and training precision}
\label{sec:sm-compile-ablation}

This ablation quantifies the systems-level benefit of graph compilation and reduced-precision tensor-core execution (Table~\ref{tab:dpa4_compile_ablation}).
Relative to the non-compiled FP32 baseline, bf16 AMP alone gives a 1.43$\times$ training speedup and reduces peak training memory by 59\%.
Graph compilation provides a larger and complementary gain: compiled FP32 training gives a 1.61$\times$ speedup, and enabling TF32 in this compiled path increases the speedup to 1.82$\times$.
The largest practical improvement is obtained by combining compilation with bf16 AMP, which gives a 3.1$\times$ speedup and reduces peak memory by 60\%.
Thus, the compiled mixed-precision path makes conservative energy-gradient training more than three times faster while using only about 40\% of the baseline peak GPU memory.

The corresponding accuracy changes are small compared with the efficiency gain.
Against the FP32 baseline (27.603~meV/atom and 34.246~meV/\AA{}), bf16 AMP changes the energy and force MAEs by 1.7\% and 2.0\%, respectively.
The compiled-bf16 setting, which gives the strongest speed and memory improvement, changes them by 2.7\% and 1.8\%; when TF32 is also enabled, the energy MAE changes by 2.8\% and the force MAE by only 0.1\%.
These differences are consistent with small numerical and stochastic variation rather than systematic accuracy degradation.
Although the random seed is fixed and \texttt{torch.compile} preserves the mathematical computation graph, compilation can change kernel fusion, hardware-specific kernel selection, reduction ordering and tensor-core dispatch; bf16 AMP and TF32 also change the effective arithmetic used by eligible matrix operations.
DPA4 limits this sensitivity by keeping geometric preprocessing and normalization operations such as RMSNorm in FP32, while allowing large matrix operations to use bf16 or TF32 where appropriate.
We therefore enable \texttt{torch.compile}, bf16 AMP and TF32 in the following ablations and benchmark experiments unless otherwise specified.

\begin{table}[ht]
    \PaperTableStyle
    \begin{threeparttable}
        \caption{Ablation of graph compilation and training precision.}
        \label{tab:dpa4_compile_ablation}
        \begin{tabular*}{\linewidth}{@{\extracolsep{\fill}}ccccccc@{}}
            \toprule
            \textbf{Compile}                       &
            \textbf{BF16 AMP}                      &
            \textbf{TF32}                          &
            \textbf{E MAE}$\downarrow$\tnote{a}    &
            \textbf{F MAE}$\downarrow$\tnote{a}    &
            \textbf{Train time (rel.)}$\downarrow$\tnote{b} &
            \makecell{\textbf{Peak train}\\\textbf{mem. (rel.)}$\downarrow$}\tnote{b}                              \\
            \midrule
            \xmark                                  & \xmark & N/A   & 27.603 & 34.246 & 1.00          & 1.00          \\
            \xmark                                  & \cmark  & N/A   & 28.081 & 34.936 & 0.70          & 0.41          \\
            \cmark                                   & \xmark & \xmark & 28.217 & 34.787 & 0.62          & 0.77          \\
            \cmark                                   & \xmark & \cmark  & 28.190 & 34.860 & 0.55          & 1.00          \\
            \cmark                                   & \cmark  & \xmark & 28.355 & 34.864 & \textbf{0.32} & \textbf{0.40} \\
            \cmark                                   & \cmark  & \cmark  & 28.364 & 34.275 & \textbf{0.32} & \textbf{0.40} \\
            \bottomrule
        \end{tabular*}
        \begin{tablenotes}[flushleft]
            \PaperTableNotesStyle
            \item[] Check and cross marks denote enabled and disabled settings, respectively. N/A indicates that TF32 tensor cores are not applicable to the non-compiled execution path.
            \item[] Boldface denotes the minimum value in each efficiency column.
            \item[a] E MAE and F MAE are mean absolute errors in meV/atom and meV/\AA, respectively; lower values are better.
            \item[b] Train time (rel.) and peak train memory are normalized to the non-compiled FP32 baseline measured on NVIDIA H20 hardware. Peak train memory is the peak GPU memory allocated during training.
        \end{tablenotes}
    \end{threeparttable}
\end{table}

\FloatBarrier
\subsection{Attention versus sum aggregation}
\label{sec:sm-attention-aggregation}

The attention ablation isolates the aggregation rule while holding the feature dimension and focus count fixed within each pair of rows (Table~\ref{tab:dpa4_attention_ablation}).
Replacing scatter-sum aggregation with attention-weighted sum consistently reduces both energy and force MAEs across the 64-channel, 96-channel and 96-channel 2-focus settings.
The improvement is substantial and stable: energy MAE decreases by 8.3--9.3\%, and force MAE decreases by 4.7--6.8\% relative to the corresponding scatter-sum controls.
The gain costs little: training time increases by only 5--6\%, and test time stays within 1--5\% of the control models.
This result supports the design choice of computing attention from rotationally invariant scalar channels: the model gains adaptive neighbor selection while preserving equivariance of the higher-order SO(2) features.

\begin{table}[ht]
    \PaperTableStyle
    \begin{threeparttable}
        \caption{Ablation of attention aggregation.}
        \label{tab:dpa4_attention_ablation}
        \begin{tabular*}{\linewidth}{@{\extracolsep{\fill}}ccccccc@{}}
            \toprule
            \textbf{Feature dim.}                  &
            \textbf{No. focuses}                   &
            \textbf{Aggregation}                   &
            \textbf{E MAE}$\downarrow$\tnote{a}    &
            \textbf{F MAE}$\downarrow$\tnote{a}    &
            \makecell{\textbf{Train time}\\\textbf{(rel.)}}$\downarrow$\tnote{b} &
            \makecell{\textbf{Test time}\\\textbf{(rel.)}}$\downarrow$\tnote{b}                                                            \\
            \midrule
            64                                     & 1 & Scatter sum            & 30.691          & 40.072          & 1.00 & 1.00 \\
            64                                     & 1 & Attention-weighted sum & \textbf{27.839} & \textbf{38.184} & 1.05 & 1.01 \\
            \midrule
            96                                     & 1 & Scatter sum            & 30.068          & 38.407          & 1.00 & 1.00 \\
            96                                     & 1 & Attention-weighted sum & \textbf{27.567} & \textbf{36.127} & 1.05 & 1.05 \\
            \midrule
            96                                     & 2 & Scatter sum            & 30.935          & 36.639          & 1.00 & 1.00 \\
            96                                     & 2 & Attention-weighted sum & \textbf{28.083} & \textbf{34.158} & 1.06 & 1.02 \\
            \bottomrule
        \end{tabular*}
        \begin{tablenotes}[flushleft]
            \PaperTableNotesStyle
            \item[] Boldface denotes the lower error within each matched feature-dimension and focus-count pair.
            \item[a] E MAE and F MAE are mean absolute errors in meV/atom and meV/\AA, respectively; lower values are better.
            \item[b] Within each feature-dimension and focus-count setting, Train time (rel.) and Test time (rel.) are normalized to scatter-sum aggregation.
        \end{tablenotes}
    \end{threeparttable}
\end{table}

\FloatBarrier
\subsection{Multi-focus design}
\label{sec:sm-multifocus}

The multi-focus comparison separates per-focus feature width from the number of parallel equivariant focus channels (Table~\ref{tab:dpa4_focus_ablation}).
Increasing the width of a single focus stream rapidly increases parameter count and inference cost, but the corresponding accuracy gains are uneven.
By contrast, multi-focus variants increase the effective SO(2) convolution dimension through several narrower focus channels and often reach better accuracy--cost trade-offs.
At an SO(2) dimension of 192, the 96-channel 2-focus model gives the best overall result, improving energy MAE from 29.418 to 26.994~meV/atom and force MAE from 39.529 to 36.408~meV/\AA{} relative to the 64-channel 1-focus baseline.
It also outperforms the 192-channel 1-focus model with more than 56\% fewer trainable parameters and approximately 23\% and 34\% lower training and inference times, respectively.
All rows in this sweep use the same learning-rate setting; as larger-capacity variants often benefit from smaller tuned learning rates, the reported MAEs of the largest configurations may be mildly conservative.
Under a shared training recipe, however, the 96-channel 2-focus configuration gives the most balanced point in this sweep.
These trends are consistent with the intended role of focus competition, where parallel equivariant sub-channels specialize to different edge-local geometric motifs before the rotate-back step.

\begin{table}[ht]
    \PaperTableStyle
    \begin{threeparttable}
        \caption{Ablation of SO(2) feature width and focus count.}
        \label{tab:dpa4_focus_ablation}
        \begin{tabular*}{\linewidth}{@{\extracolsep{\fill}}cccccccc@{}}
            \toprule
            \textbf{Feature dim.}                  &
            \textbf{No. focuses}                   &
            \textbf{SO(2) dim.}                    &
            \textbf{E MAE}$\downarrow$\tnote{a}    &
            \textbf{F MAE}$\downarrow$\tnote{a}    &
            \makecell{\textbf{Train time}\\\textbf{(rel.)}}$\downarrow$\tnote{b} &
            \makecell{\textbf{Test time}\\\textbf{(rel.)}}$\downarrow$\tnote{b}  &
            \textbf{Params}                                                                                                  \\
            \midrule
            64                                     & 1 & 64  & 29.418             & 39.529             & 1.00 & 1.00 & 1.9M  \\
            96                                     & 1 & 96  & 28.671             & 38.333             & 1.37 & 1.49 & 4.1M  \\
            64                                     & 2 & 128 & 28.126             & 38.123             & 1.55 & 1.55 & 3.2M  \\
            128                                    & 1 & 128 & 28.708             & 37.882             & 1.80 & 1.86 & 7.3M  \\
            64                                     & 3 & 192 & 27.890             & 37.283             & 1.95 & 2.14 & 4.5M  \\
            96                                     & 2 & 192 & \textbf{26.994}    & \textbf{36.408}    & 2.28 & 2.54 & 7.0M  \\
            192                                    & 1 & 192 & 27.286             & 36.477             & 2.98 & 3.86 & 16.0M \\
            64                                     & 4 & 256 & 27.821             & 37.605             & 2.43 & 2.65 & 5.7M  \\
            256                                    & 1 & 256 & 28.409             & 36.670             & 4.50 & 5.17 & 28.4M \\
            96                                     & 3 & 288 & \underline{27.168} & \underline{36.429} & 3.25 & 3.62 & 9.8M  \\
            \bottomrule
        \end{tabular*}
        \begin{tablenotes}[flushleft]
            \PaperTableNotesStyle
            \item[] Boldface and underlining denote the lowest and second-lowest value in each MAE column. Params is the number of trainable parameters; all rows use the same learning-rate setting.
            \item[a] E MAE and F MAE are mean absolute errors in meV/atom and meV/\AA, respectively; lower values are better.
            \item[b] Train time (rel.) and Test time (rel.) are normalized to the 64-channel, 1-focus baseline.
        \end{tablenotes}
    \end{threeparttable}
\end{table}

\FloatBarrier
\subsection{Low-rank edge--node \texorpdfstring{SO(2)}{SO2}-equivariant degree coupling}
\label{sec:sm-radial-coupling}

This ablation tests the low-rank edge--node SO(2)-equivariant degree coupling (A1), namely how the per-degree edge radial profiles couple the angular degrees of the node-side SO(2) message in the local frame (Table~\ref{tab:dpa4_radial_degree_ablation}).
The scalar-scaling baseline applies each degree-$l$ radial profile only to the degree-$l$ channels, whereas degree mixing builds a cross-degree kernel as a learnable function of the same degree-indexed profiles, mixing input and output angular degrees at fixed $|m|$ before the SO(2) stack.
Making this kernel channel-dependent improves expressivity, but a per-channel dense kernel would inflate the parameter count by a factor of the hidden width $H$.
DPA4 instead parameterizes the kernel as a rank-$R$ factorization across the channel index, with $R$ scalar degree-pair coefficients contracted against a learnable channel basis of width $R\leq H$ (main-text Eq.~\ref{eq:dpa4-K-lowrank}).
Even the rank-$R=1$ form captures most of the benefit: it reduces energy MAE from 28.493 to 27.611~meV/atom and force MAE from 38.349 to 35.689~meV/\AA{} relative to scalar scaling, corresponding to 3.1\% and 6.9\% improvements, respectively (Table~\ref{tab:dpa4_radial_degree_ablation}).
This improvement is obtained at low additional cost, with training and test times increasing only to 1.12$\times$ and 1.10$\times$ the scalar-scaling baseline.
A compact low-rank edge--node degree coupling therefore provides a favorable accuracy--throughput trade-off before the kernel is made more expressive.

The remaining rank sweep shows that a more expressive edge--node degree coupling is not monotonically better.
Increasing the rank from 1 to 4 gives the best energy and force MAEs, but also raises the training cost to 1.86$\times$ the baseline; rank 8, rank 16, and the full kernel are still more expensive yet give worse force MAEs (Table~\ref{tab:dpa4_radial_degree_ablation}).
These results suggest that compact per-channel kernels act as a structural regularizer, providing enough channel-specific angular mixing without introducing many weakly constrained radial--angular interactions.

\begin{table}[ht]
    \PaperTableStyle
    \begin{threeparttable}
        \caption{Ablation of the low-rank edge--node SO(2)-equivariant degree coupling.}
        \label{tab:dpa4_radial_degree_ablation}
        \begin{tabular*}{\linewidth}{@{\extracolsep{\fill}}ccccccc@{}}
            \toprule
            \makecell{\textbf{Edge--node}\\\textbf{degree coupling}}\tnote{a} &
            \makecell{\textbf{Per-channel}\\\textbf{coupling}} &
            \makecell{\textbf{Coupling}\\\textbf{rank}} &
            \textbf{E MAE}$\downarrow$\tnote{b} &
            \textbf{F MAE}$\downarrow$\tnote{b} &
            \makecell{\textbf{Train time}\\\textbf{(rel.)}}$\downarrow$\tnote{c} &
            \makecell{\textbf{Test time}\\\textbf{(rel.)}}$\downarrow$\tnote{c} \\
            \midrule
            Scalar scaling & \xmark & N/A & 28.493 & 38.349 & 1.00 & 1.00 \\
            Degree mixing & \xmark & N/A & 28.494 & 37.212 & 1.09 & 1.06 \\
            Degree mixing & \cmark & 1 & 27.611 & \underline{35.689} & 1.12 & 1.10 \\
            Degree mixing & \cmark & 2 & \underline{26.983} & 36.127 & 1.66 & 1.13 \\
            Degree mixing & \cmark & 4 & \textbf{26.556} & \textbf{35.675} & 1.86 & 1.14 \\
            Degree mixing & \cmark & 8 & 27.106 & 36.662 & 2.27 & 1.19 \\
            Degree mixing & \cmark & 16 & 26.984 & 36.401 & 3.09 & 1.34 \\
            Degree mixing & \cmark & Full & 28.011 & 37.400 & 3.36 & 2.22 \\
            \bottomrule
        \end{tabular*}
        \begin{tablenotes}[flushleft]
            \PaperTableNotesStyle
            \item[] Check and cross marks denote enabled and disabled per-channel coupling, respectively; N/A denotes a coupling rank that is not applicable. Boldface and underlining denote the lowest and second-lowest value in each MAE column.
            \item[a] Scalar scaling applies each degree-$l$ radial profile only to the degree-$l$ channels, whereas degree mixing uses the profiles to mix angular degrees in the local SO(2) frame.
            \item[b] E MAE and F MAE are mean absolute errors in meV/atom and meV/\AA, respectively; lower values are better.
            \item[c] Train time (rel.) and Test time (rel.) are normalized to the scalar-scaling baseline.
        \end{tablenotes}
    \end{threeparttable}
\end{table}

\FloatBarrier

\subsection{Spherical-grid nonlinearity and quadrature rule}

The spherical-grid nonlinearity is applied in the equivariant FFN alone, so the remaining design choice is the quadrature rule used to project grid features back to equivariant coefficients (Table~\ref{tab:sm_quadrature_ablation}).
Replacing the latitude--longitude product grid by the Lebedev rule lowers the energy MAE by 0.8\%, from 29.225 to 28.992~meV/atom, and leaves the force MAE and the cost within 1\%: 38.055 against 38.103~meV/\AA{}, with relative training and inference times of 0.98 and 1.01.
The released DPA4 variants therefore use Lebedev quadrature.

\begin{table}[ht]
    \PaperTableStyle
    \begin{threeparttable}
        \caption{Ablation of the quadrature rule for the spherical-grid nonlinearity in the FFN.}
        \label{tab:sm_quadrature_ablation}
        \begin{tabular*}{\linewidth}{@{\extracolsep{\fill}}ccccc@{}}
            \toprule
            \textbf{Quadrature}                    &
            \textbf{E MAE}$\downarrow$\tnote{a}    &
            \textbf{F MAE}$\downarrow$\tnote{a}    &
            \textbf{Train time (rel.)}$\downarrow$\tnote{b} &
            \textbf{Test time (rel.)}$\downarrow$\tnote{b}                                                              \\
            \midrule
            Product & \underline{29.225} & \textbf{38.055}    & 1.00 & 1.00 \\
            Lebedev & \textbf{28.992}    & \underline{38.103} & 0.98 & 1.01 \\
            \bottomrule
        \end{tabular*}
        \begin{tablenotes}[flushleft]
            \PaperTableNotesStyle
            \item[] Both rows apply the nonlinearity only in the equivariant FFN on the $k_{\max}=0$ spherical grid; the SO(2) convolution has no grid nonlinearity. The quadrature rule projects spherical-grid features back to equivariant coefficients.
            \item[] Boldface and underlining denote the lowest and second-lowest value in each MAE column.
            \item[a] E MAE and F MAE are mean absolute errors in meV/atom and meV/\AA, respectively; lower values are better.
            \item[b] Train time (rel.) and Test time (rel.) are normalized to the latitude--longitude product-grid row.
        \end{tablenotes}
    \end{threeparttable}
\end{table}

\FloatBarrier

\subsection{Placement of the Wigner bilinear network}

This ablation isolates the two places where the Wigner bilinear network (WBN) enters an interaction block: the equivariant FFN, whose nonlinearity is either absent, evaluated on the $k_{\max}=0$ spherical grid, or evaluated on the full $k_{\max}=1$ grid on the rotation group, and the post-aggregation message--node path, where a WBN couples the aggregated message with the destination node state (Table~\ref{tab:sm_wbn_ablation}).
Each of the two insertions is assessed with the other held fixed.

With the message node WBN disabled, the FFN nonlinearity produces the largest single effect: adding it on the $k_{\max}=0$ grid lowers the energy MAE by 2.77~meV/atom (7.4\%) and the force MAE by 1.51~meV/\AA{} (3.7\%), and raising $k_{\max}$ to $1$ lowers them by a further 0.96~meV/atom (2.8\%) and 0.63~meV/\AA{} (1.6\%) for a 2\% increase in relative training time.
Enabling the message node WBN helps at either FFN setting.
On top of the $k_{\max}=0$ FFN it lowers the energy MAE by 0.79~meV/atom (2.3\%) and the force MAE by 1.10~meV/\AA{} (2.8\%), and on top of the $k_{\max}=1$ FFN it lowers the force MAE by a further 0.45~meV/\AA{} (1.2\%) while raising the energy MAE by 0.25~meV/atom (0.75\%).

The three configurations that contain a WBN therefore differ by at most 0.75\% in energy MAE and 1.2\% in force MAE, smaller than every effect established above, so this experiment does not separate them.
Enabling both paths attains the lower force MAE of the two configurations that use a WBN in the FFN and is the setting used by the released DPA4 variants.
The cost of the added capacity falls almost entirely on the parameter count rather than on runtime: across the table the parameter count grows from 1.2~M to 2.0~M while the relative test time grows by only 4\%.

\begin{table}[ht]
    \PaperTableStyle
    \begin{threeparttable}
        \caption{Ablation of the Wigner bilinear network.}
        \label{tab:sm_wbn_ablation}
        \begin{tabular*}{\linewidth}{@{\extracolsep{\fill}}ccccccc@{}}
            \toprule
            \makecell{\textbf{FFN}\\\textbf{nonlinearity}}\tnote{a} &
            \makecell{\textbf{Message node}\\\textbf{WBN}}          &
            \textbf{E MAE}$\downarrow$\tnote{b}    &
            \textbf{F MAE}$\downarrow$\tnote{b}    &
            \makecell{\textbf{Train time}\\\textbf{(rel.)}}$\downarrow$\tnote{c} &
            \makecell{\textbf{Test time}\\\textbf{(rel.)}}$\downarrow$\tnote{c}  &
            \textbf{Params}                                                                                  \\
            \midrule
            None            & \xmark & 37.213             & 40.653             & 1.00 & 1.00 & 1.2M \\
            $k_{\max}=0$    & \xmark & 34.448             & 39.142             & 1.14 & 1.01 & 1.4M \\
            $k_{\max}=1$    & \xmark & \textbf{33.485}    & 38.517             & 1.16 & 1.02 & 1.9M \\
            $k_{\max}=0$    & \cmark & \underline{33.662} & \textbf{38.044}    & 1.24 & 1.02 & 1.6M \\
            $k_{\max}=1$    & \cmark & 33.736             & \underline{38.069} & 1.23 & 1.04 & 2.0M \\
            \bottomrule
        \end{tabular*}
        \begin{tablenotes}[flushleft]
            \PaperTableNotesStyle
            \item[] Check and cross marks denote an enabled and disabled message--node WBN, respectively. Boldface and underlining denote the lowest and second-lowest value in each MAE column.
            \item[a] The FFN nonlinearity is evaluated on no grid, the $k_{\max}=0$ spherical grid, or the $k_{\max}=1$ rotation-group grid used by the WBN (main-text Section~\ref{sec:s2-method}).
            \item[b] E MAE and F MAE are mean absolute errors in meV/atom and meV/\AA, respectively; lower values are better.
            \item[c] Train time (rel.) and Test time (rel.) are normalized to the row without an FFN nonlinearity.
        \end{tablenotes}
    \end{threeparttable}
\end{table}

\FloatBarrier

\subsection{Interaction depth and stack depth}

Interaction depth is the strongest determinant of accuracy among the structural hyperparameters (Tables~\ref{tab:sm_layers}--\ref{tab:sm_ffn_stack}).
Both errors fall most rapidly over the first few interaction blocks and then with diminishing returns, while training and inference cost grow approximately linearly with depth (Table~\ref{tab:sm_layers}); the depth of each DPA4 variant is therefore chosen to balance accuracy against cost rather than to minimize the error alone.
Within a block, additional SO(2) sublayers improve accuracy steadily at modest cost ($\sim$44\% training-time increase from 2 to 5 SO(2) sublayers; Table~\ref{tab:sm_so2_stack}), whereas deepening the feed-forward sublayer produces only small and partly non-monotonic changes (Table~\ref{tab:sm_ffn_stack}).
We accordingly retain a single feed-forward sublayer and tune the interaction depth and SO(2)-stack depth per variant.

\begin{table}[ht]
    \PaperTableStyle
    \begin{threeparttable}
        \caption{Ablation of interaction-block depth.}
        \label{tab:sm_layers}
        \begin{tabular*}{\linewidth}{@{\extracolsep{\fill}}ccccc@{}}
        \toprule
        \textbf{No. layers}               &
        \textbf{E MAE}$\downarrow$\tnote{a} &
        \textbf{F MAE}$\downarrow$\tnote{a} &
        \textbf{Train time (rel.)}$\downarrow$\tnote{b} &
        \textbf{Test time (rel.)}$\downarrow$\tnote{b}                  \\
        \midrule
        1                                 & 37.125 & 45.552 & 1.00 & 1.00 \\
        2                                 & 29.816 & 39.968 & 1.23 & 1.36 \\
        3                                 & 28.036 & 37.946 & 1.70 & 1.71 \\
        4                                 & 27.207 & 36.628 & 2.18 & 2.13 \\
        5                                 & 27.419 & 35.853 & 2.55 & 2.66 \\
        6                                 & 27.087 & 35.603 & 2.89 & 3.02 \\
        7                                 & 27.019 & 35.281 & 3.43 & 3.47 \\
        8                                 & 26.549 & 34.628 & 3.80 & 3.84 \\
        9                                 & 26.818 & 34.523 & 4.14 & 4.41 \\
        10                                & 26.488 & 33.752 & 4.59 & 4.89 \\
        11                                & 26.552 & 33.731 & 5.07 & 5.29 \\
        12                                & 26.491 & 33.851 & 5.50 & 5.35 \\
        13                                & 26.488 & 34.282 & 5.91 & 6.20 \\
        14                                & 26.227 & 32.821 & 6.30 & 6.64 \\
        15                                & 26.868 & 33.224 & 6.66 & 7.75 \\
        \bottomrule
        \end{tabular*}
        \begin{tablenotes}[flushleft]
            \PaperTableNotesStyle
            \item[a] E MAE and F MAE are mean absolute errors in meV/atom and meV/\AA, respectively; lower values are better.
            \item[b] Train time (rel.) and Test time (rel.) are normalized to the one-layer model.
        \end{tablenotes}
    \end{threeparttable}
\end{table}

\begin{table}[ht]
    \PaperTableStyle
    \begin{threeparttable}
        \caption{Ablation of SO(2)-stack depth per interaction block.}
        \label{tab:sm_so2_stack}
        \begin{tabular*}{\linewidth}{@{\extracolsep{\fill}}ccccc@{}}
        \toprule
        \textbf{SO(2) layers}             &
        \textbf{E MAE}$\downarrow$\tnote{a} &
        \textbf{F MAE}$\downarrow$\tnote{a} &
        \textbf{Train time (rel.)}$\downarrow$\tnote{b} &
        \textbf{Test time (rel.)}$\downarrow$\tnote{b}                  \\
        \midrule
        2                                 & 30.012 & 41.482 & 1.00 & 1.00 \\
        3                                 & 27.483 & 39.521 & 1.10 & 1.20 \\
        4                                 & 27.407 & 38.822 & 1.28 & 1.26 \\
        5                                 & 27.242 & 38.327 & 1.44 & 1.38 \\
        \bottomrule
        \end{tabular*}
        \begin{tablenotes}[flushleft]
            \PaperTableNotesStyle
            \item[a] E MAE and F MAE are mean absolute errors in meV/atom and meV/\AA, respectively; lower values are better.
            \item[b] Train time (rel.) and Test time (rel.) are normalized to the two-layer SO(2) stack.
        \end{tablenotes}
    \end{threeparttable}
\end{table}

\begin{table}[ht]
    \PaperTableStyle
    \begin{threeparttable}
        \caption{Ablation of FFN-stack depth per interaction block.}
        \label{tab:sm_ffn_stack}
        \begin{tabular*}{\linewidth}{@{\extracolsep{\fill}}ccccc@{}}
        \toprule
        \textbf{FFN layers}              &
        \textbf{E MAE}$\downarrow$\tnote{a} &
        \textbf{F MAE}$\downarrow$\tnote{a} &
        \textbf{Train time (rel.)}$\downarrow$\tnote{b} &
        \textbf{Test time (rel.)}$\downarrow$\tnote{b}                  \\
        \midrule
        1                                & 29.527 & 37.408 & 1.00 & 1.00 \\
        2                                & 29.142 & 36.670 & 1.05 & 1.03 \\
        3                                & 28.889 & 36.627 & 1.13 & 1.05 \\
        4                                & 29.227 & 36.225 & 1.18 & 1.08 \\
        \bottomrule
        \end{tabular*}
        \begin{tablenotes}[flushleft]
            \PaperTableNotesStyle
            \item[a] E MAE and F MAE are mean absolute errors in meV/atom and meV/\AA, respectively; lower values are better.
            \item[b] Train time (rel.) and Test time (rel.) are normalized to the one-layer FFN stack.
        \end{tablenotes}
    \end{threeparttable}
\end{table}

\FloatBarrier
\subsection{Attention-aggregation design variants}

These sweeps vary the attention parameterization around the default single-head design, toggling the value projection, the output projection, the pre-mixing and the number of heads for the 64-channel, 96-channel and 96-channel two-focus configurations (Tables~\ref{tab:sm_attention_64}--\ref{tab:sm_attention_96x2}).
The value projection is a learnable linear map applied to the message before the attention-weighted sum, the output projection is a channel mixing applied to the aggregated equivariant feature, and the pre-mixing is a cross-focus channel mixing applied to the input before computing attention.
Across all three configurations, enabling attention with one or more heads consistently outperforms the no-attention scatter-sum baseline, in agreement with the main attention ablation.
The minimal single-head form, without value, output or pre-mixing projections, attains the lowest energy MAE or comes close to it; the additional projections and extra heads yield no consistent gain while increasing the parameter count.
Single-head attention without further projections is therefore retained as the default.

\begin{table}[ht]
    \PaperTableStyle
    \begin{threeparttable}
        \caption{Ablation of attention aggregation for the 64-channel, 1-focus configuration.}
        \label{tab:sm_attention_64}
        \begin{tabular*}{\linewidth}{@{\extracolsep{\fill}}cccccc@{}}
        \toprule
        \textbf{Attn. heads} &
        \textbf{Attention} &
        \textbf{Attn. value/output proj.}\tnote{a} &
        \textbf{Pre mixing} &
        \textbf{E MAE}$\downarrow$\tnote{b} &
        \textbf{F MAE}$\downarrow$\tnote{b} \\
        \midrule
        0 & \xmark & N/A & N/A & 30.691 & 40.072 \\
        1 & \cmark & \xmark/\xmark & \xmark & \textbf{27.839} & 38.184 \\
        1 & \cmark & \xmark/\cmark & \xmark & 29.214 & 38.250 \\
        1 & \cmark & \cmark/\xmark & \xmark & 28.773 & 37.775 \\
        1 & \cmark & \cmark/\cmark & \xmark & 28.656 & 37.688 \\
        1 & \cmark & \cmark/\cmark & \cmark & 28.985 & 37.242 \\
        2 & \cmark & \xmark/\xmark & \xmark & 28.661 & 37.896 \\
        2 & \cmark & \xmark/\cmark & \xmark & \underline{28.634} & 38.035 \\
        2 & \cmark & \cmark/\xmark & \xmark & 28.958 & \underline{37.056} \\
        2 & \cmark & \cmark/\cmark & \xmark & 28.936 & \textbf{36.939} \\
        2 & \cmark & \cmark/\cmark & \cmark & 28.811 & 37.168 \\
        \bottomrule
        \end{tabular*}
        \begin{tablenotes}[flushleft]
            \PaperTableNotesStyle
            \item[] Check and cross marks denote enabled and disabled settings, respectively; N/A denotes a setting that is not applicable. Boldface and underlining denote the lowest and second-lowest value in each MAE column.
            \item[a] Each pair indicates whether the attention value and output projections are enabled, in that order.
            \item[b] E MAE and F MAE are mean absolute errors in meV/atom and meV/\AA, respectively; lower values are better.
        \end{tablenotes}
    \end{threeparttable}
\end{table}

\begin{table}[ht]
    \PaperTableStyle
    \begin{threeparttable}
        \caption{Ablation of attention aggregation for the 96-channel, 1-focus configuration.}
        \label{tab:sm_attention_96}
        \begin{tabular*}{\linewidth}{@{\extracolsep{\fill}}cccccc@{}}
        \toprule
        \textbf{Attn. heads} &
        \textbf{Attention} &
        \textbf{Attn. value/output proj.}\tnote{a} &
        \textbf{Pre mixing} &
        \textbf{E MAE}$\downarrow$\tnote{b} &
        \textbf{F MAE}$\downarrow$\tnote{b} \\
        \midrule
        0 & \xmark & N/A & N/A & 30.068 & 38.407 \\
        1 & \cmark & \xmark/\xmark & \xmark & \textbf{27.567} & 36.127 \\
        1 & \cmark & \xmark/\cmark & \xmark & 28.301 & 36.065 \\
        1 & \cmark & \cmark/\xmark & \xmark & 28.578 & 36.350 \\
        1 & \cmark & \cmark/\cmark & \xmark & 28.430 & 36.321 \\
        1 & \cmark & \cmark/\cmark & \cmark & 28.975 & 36.225 \\
        2 & \cmark & \xmark/\xmark & \xmark & 27.732 & 36.680 \\
        2 & \cmark & \xmark/\cmark & \xmark & 28.074 & 36.288 \\
        2 & \cmark & \cmark/\xmark & \xmark & \underline{27.571} & 36.119 \\
        2 & \cmark & \cmark/\cmark & \xmark & 27.968 & \textbf{35.457} \\
        2 & \cmark & \cmark/\cmark & \cmark & 28.473 & 35.858 \\
        3 & \cmark & \xmark/\xmark & \xmark & 28.170 & 36.419 \\
        3 & \cmark & \xmark/\cmark & \xmark & 28.016 & 36.017 \\
        3 & \cmark & \cmark/\xmark & \xmark & 27.593 & \underline{35.639} \\
        3 & \cmark & \cmark/\cmark & \xmark & 27.722 & 35.809 \\
        3 & \cmark & \cmark/\cmark & \cmark & 28.126 & 36.100 \\
        \bottomrule
        \end{tabular*}
        \begin{tablenotes}[flushleft]
            \PaperTableNotesStyle
            \item[] Check and cross marks denote enabled and disabled settings, respectively; N/A denotes a setting that is not applicable. Boldface and underlining denote the lowest and second-lowest value in each MAE column.
            \item[a] Each pair indicates whether the attention value and output projections are enabled, in that order.
            \item[b] E MAE and F MAE are mean absolute errors in meV/atom and meV/\AA, respectively; lower values are better.
        \end{tablenotes}
    \end{threeparttable}
\end{table}

\begin{table}[ht]
    \PaperTableStyle
    \begin{threeparttable}
        \caption{Ablation of attention aggregation for the 96-channel, 2-focus configuration.}
        \label{tab:sm_attention_96x2}
        \begin{tabular*}{\linewidth}{@{\extracolsep{\fill}}cccccc@{}}
        \toprule
        \textbf{Attn. heads} &
        \textbf{Attention} &
        \textbf{Attn. value/output proj.}\tnote{a} &
        \textbf{Pre mixing} &
        \textbf{E MAE}$\downarrow$\tnote{b} &
        \textbf{F MAE}$\downarrow$\tnote{b} \\
        \midrule
        0 & \xmark & N/A & N/A & 30.935 & 36.639 \\
        1 & \cmark & \xmark/\xmark & \xmark & \underline{28.083} & 34.158 \\
        1 & \cmark & \xmark/\cmark & \xmark & 28.449 & 34.508 \\
        1 & \cmark & \cmark/\xmark & \xmark & 28.363 & 34.460 \\
        1 & \cmark & \cmark/\cmark & \xmark & 28.212 & 35.045 \\
        1 & \cmark & \cmark/\cmark & \cmark & 28.885 & 35.387 \\
        2 & \cmark & \xmark/\xmark & \xmark & 28.335 & \underline{34.013} \\
        2 & \cmark & \xmark/\cmark & \xmark & 28.401 & 34.361 \\
        2 & \cmark & \cmark/\xmark & \xmark & \textbf{27.814} & \textbf{33.786} \\
        2 & \cmark & \cmark/\cmark & \xmark & 28.440 & 34.440 \\
        2 & \cmark & \cmark/\cmark & \cmark & 28.968 & 35.507 \\
        3 & \cmark & \xmark/\xmark & \xmark & 28.506 & 34.560 \\
        3 & \cmark & \xmark/\cmark & \xmark & 29.274 & 34.624 \\
        3 & \cmark & \cmark/\xmark & \xmark & 28.907 & 34.412 \\
        3 & \cmark & \cmark/\cmark & \xmark & 28.609 & 34.869 \\
        3 & \cmark & \cmark/\cmark & \cmark & 28.957 & 34.719 \\
        4 & \cmark & \cmark/\cmark & \cmark & 28.736 & 34.603 \\
        6 & \cmark & \xmark/\xmark & \xmark & 28.757 & 34.289 \\
        6 & \cmark & \xmark/\cmark & \xmark & 28.968 & 34.778 \\
        6 & \cmark & \cmark/\xmark & \xmark & 29.027 & 34.313 \\
        6 & \cmark & \cmark/\cmark & \xmark & 28.397 & 34.541 \\
        6 & \cmark & \cmark/\cmark & \cmark & 29.483 & 34.889 \\
        \bottomrule
        \end{tabular*}
        \begin{tablenotes}[flushleft]
            \PaperTableNotesStyle
            \item[] Check and cross marks denote enabled and disabled settings, respectively; N/A denotes a setting that is not applicable. Boldface and underlining denote the lowest and second-lowest value in each MAE column.
            \item[a] Each pair indicates whether the attention value and output projections are enabled, in that order.
            \item[b] E MAE and F MAE are mean absolute errors in meV/atom and meV/\AA, respectively; lower values are better.
        \end{tablenotes}
    \end{threeparttable}
\end{table}

\FloatBarrier
\subsection{Normalization placement}

Normalization placement is examined by applying RMSNorm as pre-normalization, post-normalization or both within the SO(2) and feed-forward sublayers (Table~\ref{tab:sm_stability}).
A single normalization per sublayer is consistently more accurate than applying both, which adds layers without benefit.
SO(2) post-normalization combined with feed-forward pre-normalization yields the lowest energy MAE and close to the lowest force MAE, and this placement is used throughout.

\begin{table}[ht]
    \PaperTableStyle
    \begin{threeparttable}
        \caption{Ablation of SO(2) and FFN normalization placement.}
        \label{tab:sm_stability}
        \begin{tabular*}{\linewidth}{@{\extracolsep{\fill}}cccccc@{}}
        \toprule
        \makecell{\textbf{SO(2)}\\\textbf{pre-norm}}  &
        \makecell{\textbf{SO(2)}\\\textbf{post-norm}} &
        \makecell{\textbf{FFN}\\\textbf{pre-norm}}    &
        \makecell{\textbf{FFN}\\\textbf{post-norm}}   &
        \textbf{E MAE}$\downarrow$\tnote{a}          &
        \textbf{F MAE}$\downarrow$\tnote{a}                                                     \\
        \midrule
        \cmark                                               & \xmark & \cmark  & \xmark & \underline{28.493} & 37.683          \\
        \xmark                                              & \cmark  & \cmark  & \xmark & \textbf{28.013} & \underline{37.624} \\
        \cmark                                               & \xmark & \xmark & \cmark  & 30.237          & 38.608          \\
        \xmark                                              & \cmark  & \xmark & \cmark  & 29.321          & \textbf{37.466} \\
        \cmark                                               & \xmark & \cmark  & \cmark  & 29.322          & 38.548          \\
        \cmark                                               & \cmark  & \cmark  & \xmark & 28.987          & 38.250          \\
        \xmark                                              & \cmark  & \cmark  & \cmark  & 28.808          & 37.840          \\
        \cmark                                               & \cmark  & \xmark & \cmark  & 29.841          & 39.068          \\
        \cmark                                               & \cmark  & \cmark  & \cmark  & 30.215          & 39.208          \\
        \bottomrule
        \end{tabular*}
        \begin{tablenotes}[flushleft]
            \PaperTableNotesStyle
            \item[] Check and cross marks denote enabled and disabled normalization, respectively. Boldface and underlining denote the lowest and second-lowest value in each MAE column.
            \item[a] E MAE and F MAE are mean absolute errors in meV/atom and meV/\AA, respectively; lower values are better.
        \end{tablenotes}
    \end{threeparttable}
\end{table}

\FloatBarrier
\subsection{Learning-rate schedule}

Under otherwise identical settings, the warmup--stable--decay (WSD) schedule lowers both the energy and force MAEs relative to a cosine schedule (Table~\ref{tab:sm_training_recipe}).
This advantage is consistent with its extended high-learning-rate phase and short terminal decay, and the WSD schedule is used for the released DPA4 variants.

\begin{table}[ht]
    \PaperTableStyle
    \begin{threeparttable}
        \caption{Ablation of the learning-rate scheduler.}
        \label{tab:sm_training_recipe}
        \begin{tabular*}{\linewidth}{@{\extracolsep{\fill}}ccc@{}}
        \toprule
        \textbf{LR scheduler}          &
        \textbf{E MAE}$\downarrow$\tnote{a} &
        \textbf{F MAE}$\downarrow$\tnote{a}                                             \\
        \midrule
        Cosine                          & 28.638          & 38.607          \\
        WSD                             & \textbf{27.828} & \textbf{37.022} \\
        \bottomrule
        \end{tabular*}
        \begin{tablenotes}[flushleft]
            \PaperTableNotesStyle
            \item[] Boldface denotes the lower value in each MAE column.
            \item[a] E MAE and F MAE are mean absolute errors in meV/atom and meV/\AA, respectively; lower values are better.
        \end{tablenotes}
    \end{threeparttable}
\end{table}

\FloatBarrier
\section{Inference benchmark protocol}
\label{sec:sm-inference}

The inference-throughput benchmarks use periodic diamond supercells at a lattice constant of 3.567~\AA{}.
This system permits a wide size sweep at fixed local geometry; within the 6~\AA{} DPA4 cutoff, every atom has 158 neighbors and therefore represents a comparatively dense inference workload.
For each target size, the cubic eight-atom cell is repeated by a near-cubic integer triplet, so the realized atom count can differ slightly from the target.
Each ASE calculator~\cite{ase-paper} is reset before requesting energy, forces and stress, with CUDA synchronized immediately before and after the timed region.
After 10 warm-up evaluations, 100 evaluations are timed and their mean wall time is used to compute throughput.

The adaptive size search begins at a target of 128 atoms and doubles the target after each successful trial until the first out-of-memory result.
The interval between the last successful target and the first failed target is then refined by three bisection trials; after each trial, the corresponding successful or failed boundary is replaced by the midpoint.
All ASE inference throughputs reported in this work use this size-search and timing protocol.
For the CPS--throughput Pareto analysis in Fig.~\ref{fig:cps-throughput}, saturated throughput is defined as the largest throughput among all successful trials, including the bisection trials; out-of-memory trials are excluded.
CPS values are taken from Table~\ref{tab:matbench}.
The saturated throughput of all six DPA4 size classes, the system size at which it is attained and the largest system evaluated within the memory of one GPU are collected in Table~\ref{tab:sm_saturated_throughput}, together with the corresponding values for ten representative compliant baselines.
The CPS--throughput comparison in Fig.~\ref{fig:cps-throughput} includes the five DPA4 variants evaluated on Matbench Discovery and the ten baselines: DPA-3.1-MPtrj~\cite{zhang2025dpa3}, EquiformerV3+DeNS-MP~\cite{liao2026equiformerv3,liao2024dens}, MatRIS-10M-MP~\cite{zhou2026matris}, Nequix MP PFT~\cite{koker2026pft}, Nequip-MP-L~\cite{batzner2022nequip}, Allegro-MP-L~\cite{musaelian2022learning}, SevenNet-l3i5~\cite{park2024scalable}, MACE-MP-0~\cite{batatia2022mace,batatia2023foundation}, ORB v2 MPtrj~\cite{neumann2024orb} and CHGNet~\cite{deng2023chgnet}.
For each baseline, we use the checkpoint corresponding to the compliant Matbench Discovery entry in Table~\ref{tab:matbench}.
The public artifact identifiers are \texttt{matris\_10m\_mp}, \texttt{nequix-mp-1-pft}, \texttt{SevenNet-l3i5}, the \texttt{medium} MACE-MP-0 checkpoint, the ORB loader \texttt{orb\_mptraj\_only\_v2}, CHGNet 0.3.0, and version 0.1 of \texttt{mir-group/NequIP-MP-L} and \texttt{mir-group/Allegro-MP-L}; the EquiformerV3 artifact and implementation revision are specified below.

\begin{table}[ht]
    \PaperTableStyle
    \sisetup{group-separator={{,}},group-minimum-digits=4}
    \begin{threeparttable}
        \caption{Single-GPU saturated inference throughput on diamond supercells.}
        \label{tab:sm_saturated_throughput}
        \begin{tabular*}{\linewidth}{@{\extracolsep{\fill}}lccc@{}}
        \toprule
        \textbf{Model} &
        {\textbf{Sat.\ throughput (atoms/ms)}} &
        {\textbf{Saturating size (atoms)}\tnote{a}} &
        {\textbf{Largest size (atoms)}\tnote{b}} \\
        \midrule
        DPA4-Pro             & \num{1.36}  & \num{1920}  & \num{1920}  \\
        DPA4-Plus            & \num{5.14}  & \num{6144}  & \num{7200}  \\
        DPA4-Air             & \num{11.6}  & \num{12320} & \num{12320} \\
        DPA4-Neo             & \num{28.0}  & \num{20384} & \num{20384} \\
        DPA4-Mini            & \num{79.2}  & \num{49248} & \num{53176} \\
        DPA4-Nano            & \num{153}   & \num{65664} & \num{97336} \\
        \midrule
        ORB v2 MPtrj         & \num{52.2}  & \num{1440}  & \num{25088} \\
        SevenNet-l3i5        & \num{27.8}  & \num{30056} & \num{30056} \\
        Nequix MP PFT        & \num{18.2}  & \num{16016} & \num{25088} \\
        MACE-MP-0            & \num{11.5}  & \num{5832}  & \num{39304} \\
        Nequip-MP-L          & \num{3.35}  & \num{2016}  & \num{2016}  \\
        Allegro-MP-L         & \num{1.40}  & \num{2016}  & \num{3024}  \\
        DPA-3.1-MPtrj        & \num{0.94}  & \num{720}   & \num{720}   \\
        EquiformerV3+DeNS-MP & \num{0.43}  & \num{288}   & \num{288}   \\
        CHGNet               & \num{0.37}  & \num{96}    & \num{1000}  \\
        MatRIS-10M-MP        & \num{0.055} & \num{512}   & \num{512}   \\
        \bottomrule
        \end{tabular*}
        \begin{tablenotes}[flushleft]
            \PaperTableNotesStyle
            \item[a] Saturating size is the realized atom count at which the largest throughput among all successful trials is attained.
            \item[b] Largest size is the largest supercell successfully evaluated within the memory of one NVIDIA H20 GPU.
        \end{tablenotes}
    \end{threeparttable}
\end{table}

All measurements are performed on a single NVIDIA H20 GPU with CUDA 12.8, Ubuntu 20.04.6 LTS and GCC 9.4.0 through the ASE interface.
DPA4 inference uses our fused Triton operators and custom CUDA kernels with PyTorch 2.11.0.
The baseline environments use model-compatible PyTorch 2.8.0 or 2.10.0 releases.
MACE 0.3.15 and SevenNet 0.12.1 use cuEquivariance 0.10.0~\cite{nvidia_cuequivariance}; the SevenNet flash path is disabled.
ORB 0.5.5 uses its CUDA \texttt{torch.compile} path, whereas Nequix 0.4.5 uses its Torch kernel path with graph compilation disabled.
Nequip-MP-L and Allegro-MP-L are compiled on the H20 with AOTInductor and loaded through the NequIP 0.16.1 compiled-model calculator with matscipy 1.2.0 neighbor lists; the Allegro 0.8.0 package additionally uses its Triton contraction backend.
MatRIS is evaluated at commit \texttt{c16f569c} with its compiled Cython graph converter, and CHGNet 0.3.0 uses its standard CUDA ASE calculator.
These baseline environments use Python 3.11.15, ASE 3.28.0 and NumPy 2.4.4, except for the CHGNet 0.3.0 compatibility environment, which uses ASE 3.22.1 and NumPy 1.24.4.

For Fig.~\ref{fig:infer_diamond}, MACE-Omat-Small and MACE-Omat-Medium are the \texttt{mace-omat-0-small} and \texttt{mace-omat-0-medium} checkpoints of the MACE-OMAT-0 release~\cite{batatia2025crosslearning}; these checkpoints also supply the corresponding rows of Table~\ref{tab:omat24}.
The EquiformerV3 timing uses the MPtrj checkpoint of the $L_{\max}=4$ architecture and the \texttt{atomicarchitects/equiformer\_v3} implementation at commit \texttt{a7300c5}~\cite{liao2026equiformerv3}.

\FloatBarrier
\section{Close-contact probe of the ZBL branch}
\label{sec:sm-zbl}

The benchmark datasets of Sections~\ref{sec:matbench}--\ref{sec:small_mol} do not sample the sub-\AA{} regime, so the close-contact behavior of native ZBL zone bridging (Section~\ref{sec:zbl_method}) is probed directly with a C--Si dimer scan.
Both a DPA3 baseline with the DeePMD DP-ZBL pairwise correction~\cite{wang2019dpzbl} and a DPA4 model with native ZBL zone bridging are trained on a 3C-SiC (cubic silicon carbide) dataset computed with the ABACUS DFT package~\cite{li2016abacus}.
The scan drives the pair into the regime where ordinary DFT training data are sparse and the screened nuclear repulsion should dominate, and therefore tests the transition between the learned potential and the analytical ZBL limit.

Figure~\ref{fig:zbl} shows that the energy curves appear smooth over most of the scan, whereas the force curves expose the difference between the two coupling strategies.
The DPA3 DP-ZBL baseline develops a sharp force excursion near the switching region, producing a local attractive impulse even though the analytical ZBL force~\cite{ziegler1985} is strongly repulsive at these distances.
The location and sign of this excursion are consistent with the switching-force term in Eq.~\eqref{eq:zbl-switching-force}: the extra contribution is governed by the mismatch between the ZBL and learned energies inside the splice window, not by the monotone screened-Coulomb repulsion.

DPA4 carries no such switching term.
The analytical branch is evaluated on the true distance, the learned branch sees the $C^3$ clamped displacement, and the source-freeze gate suppresses the direct learned short-range pair channel (Section~\ref{sec:zbl_method}).
Consequently, DPA4 follows the analytical ZBL force in the inner region and joins smoothly to the learned force as the C--Si distance increases.
This dimer scan is a local close-contact probe; validating long-time energy drift, collision stability and damage evolution requires separate many-body molecular-dynamics tests.

\begin{figure}[ht]
    \centering
    \includegraphics[width=0.9\textwidth]{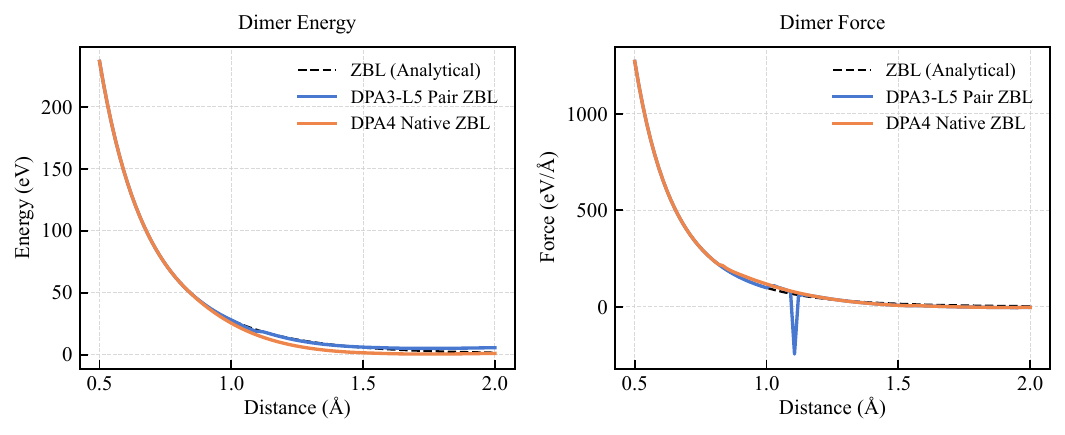}
    \caption{
        Short-range C--Si dimer response for models trained on an ABACUS-computed~\cite{li2016abacus} 3C-SiC dataset.
        The DPA3 baseline uses the DeePMD DP-ZBL pairwise correction~\cite{wang2019dpzbl}, whereas DPA4 uses the native ZBL zone bridging branch described in Section~\ref{sec:zbl_method}.
        The analytical reference is the ZBL screened Coulomb potential~\cite{ziegler1985}.
    }
    \label{fig:zbl}
\end{figure}

\FloatBarrier
\section{Model and training configurations}

\subsection{Ablation model configurations}

Table~\ref{tab:sm_main_ablation_config} gives the configurations for two groups of experiments.
The first is the cumulative ablation of main-text Fig.~\ref{fig:ablation}, whose four controlled switches are the message node WBN, the WBN in the equivariant FFN, the edge--node degree coupling and the attention aggregation.
The second comprises the mechanism ablations: graph compilation, attention aggregation, multi-focus design, the low-rank edge--node SO(2)-equivariant degree coupling, the spherical-grid nonlinearity and the Wigner bilinear network.
Table~\ref{tab:sm_supplementary_ablation_config} gives those for the model-selection and robustness ablations.
Entries marked by ``--'' are the controlled variables within the corresponding experiment family; all other entries define the shared reference setting.
Thin horizontal rules separate related groups of model or training hyperparameters without adding category labels; a thick rule marks the boundary between the model and training blocks.

\begin{table}[p]
    \centering
    \scriptsize
    \renewcommand{\arraystretch}{0.94}
    \setlength{\tabcolsep}{2pt}
    \begin{threeparttable}
        \caption{Configurations for the primary architecture ablations.}
        \label{tab:sm_main_ablation_config}
        \begin{tabular*}{\linewidth}{@{\extracolsep{\fill}}
                L{3.0cm}C{1.42cm}C{1.42cm}C{1.42cm}C{1.42cm}C{1.42cm}C{1.42cm}C{1.42cm}@{}} \toprule \textbf{Hyperparameter} & \makecell{\textbf{Cumulative}\\\textbf{ablation}} & \makecell{\textbf{Compile/}\\\textbf{precision}} & \textbf{Attention}\tnote{a} & \makecell{\textbf{Multi-focus}\\\textbf{SO(2)}} & \makecell{\textbf{Edge--node}\\\textbf{degree}\\\textbf{coupling}} & \makecell{\textbf{Sph.
                    grid/}\\\textbf{quad.}} & \textbf{WBN} \\
            \midrule
            Feature dim.                  & 64                                                & 64                                               & 64/96/96                                                                     & --                                              & 64                                               & 64                                          & 64 \\
            No. focuses                   & 1                                                 & 1                                                & 1/1/2                                                                        & --                                              & 1                                                & 1                                           & 1 \\
            No. layers                    & 2                                                 & 3                                                & 3                                                                            & 2                                               & 3                                                & 3                                           & 2 \\
            SO(2) layers                  & 4                                                 & 4                                                & 4                                                                            & 4                                               & 4                                                & 4                                           & 4 \\
            FFN layers                    & 1                                                 & 1                                                & 1                                                                            & 1                                               & 1                                                & 1                                           & 1 \\
            \midrule
            Radial basis                  & Bessel                                            & Bessel                                           & Bessel                                                                       & Bessel                                          & Bessel                                           & Bessel                                      & Bessel \\
            No. radial bases              & 16                                                & 16                                               & 16                                                                           & 16                                              & 16                                               & 16                                          & 16 \\
            $L_{\max}$                    & 2                                                 & 3                                                & 3                                                                            & 3                                               & 3                                                & 3                                           & 2 \\
            $M_{\max}$                    & 1                                                 & 1                                                & 1                                                                            & 1                                               & 1                                                & 1                                           & 1 \\
            Cutoff (\AA)                  & 6                                                 & 6                                                & 6                                                                            & 6                                               & 6                                                & 6                                           & 6 \\
            \midrule
            Edge--node degree coupling\tnote{b}   & --                                                & Degree mixing                                    & Scalar scaling                                                               & Scalar scaling                                  & --                                               & Degree mixing                               & Degree mixing \\
            Per-channel coupling              & --                                                & True                                             & N/A                                                                          & N/A                                             & --                                               & True                                        & True \\
            Coupling rank               & --                                                & 1                                                & N/A                                                                          & N/A                                             & --                                               & 1                                           & 2 \\
            \midrule
            Attn. heads                   & --                                                & 1                                                & --                                                                           & 1                                               & 1                                                & 1                                           & 1 \\
            Attn. value/output proj.  & False                                             & False                                            & False                                                                        & False                                           & False                                            & False                                       & False \\
            Pre mixing                    & False                                             & False                                            & False                                                                        & False                                           & False                                            & False                                       & False \\
            \midrule
            Message node WBN              & --                                                & False                                            & False                                                                        & False                                           & False                                            & False                                       & -- \\
            WBN in FFN                    & --                                                & False                                            & False                                                                        & False                                           & False                                            & False                                       & -- \\
            Quadrature                    & Lebedev                                           & Lebedev                                          & Lebedev                                                                      & Product                                         & Lebedev                                          & --                                          & Lebedev \\
            \midrule
            Norm. placement\tnote{c}      & Post \& Pre                                       & Post \& Pre                                      & Post \& Pre                                                                  & Pre \& Pre                                      & Post \& Pre                                      & Pre \& Pre                                  & Post \& Pre \\
            Activation func.              & SiLU                                              & SiLU                                             & SiLU                                                                         & SiLU                                            & SiLU                                             & SiLU                                        & SiLU \\
            FFN hidden dim.\tnote{d}      & Auto                                              & Auto                                             & Auto                                                                         & Auto                                            & Auto                                             & Auto                                        & Auto \\
            SO(3) read-out                & False                                             & False                                            & False                                                                        & False                                           & False                                            & False                                       & False \\
            Output fitting dim.           & Auto                                              & Auto                                             & Auto                                                                         & Auto                                            & Auto                                             & Auto                                        & Auto \\
            Output fitting layers         & 1                                                 & 1                                                & 1                                                                            & 1                                               & 1                                                & 1                                           & 1 \\
            \midrule[\heavyrulewidth]
            Compile                       & True                                              & --                                               & True                                                                         & True                                            & True                                             & True                                        & True \\
            BF16 AMP                      & True                                              & --                                               & True                                                                         & True                                            & True                                             & True                                        & True \\
            TF32 matmul                   & True                                              & --                                               & True                                                                         & True                                            & True                                             & True                                        & True \\
            \midrule
            LR scheduler\tnote{e}         & WSD                                               & Cosine                                           & Cosine                                                                       & WSD                                             & Cosine                                           & Cosine                                      & WSD \\
            Max.
            LR & $4\times10^{-4}$ & $4\times10^{-4}$ & \makecell{$4.5/4.2/3.5$\\$\times10^{-4}$} & $4\times10^{-4}$ & $4.5\times10^{-4}$ & $4.5\times10^{-4}$ & $4\times10^{-4}$ \\ Min.
            LR & $1\times10^{-6}$ & $1\times10^{-6}$ & $1\times10^{-6}$ & $1\times10^{-6}$ & $1\times10^{-6}$ & $1\times10^{-6}$ & $1\times10^{-6}$ \\ Warmup ratio & 0.003 & 0.003 & 0.003 & 0.003 & 0.003 & 0.003 & 0.003 \\ Decay ratio & 0.65 & N/A & N/A & 0.65 & N/A & N/A & 0.65 \\ Decay type & Cosine & N/A & N/A & Cosine & N/A & N/A & Cosine \\ Batch size (per GPU)\tnote{f} & $\lceil 900/N\rceil$ & $\lceil 450/N\rceil$ & \makecell{$\lceil 1000/N\rceil$\\$\lceil 700/N\rceil$\\$\lceil 400/N\rceil$} & $\lceil 600/N\rceil$ & $\lceil 1000/N\rceil$ & $\lceil 700/N\rceil$ & $\lceil 900/N\rceil$ \\ Training steps & $1\times10^6$ & $1\times10^6$ & $1\times10^6$ & $2\times10^6$ & $1\times10^6$ & $1\times10^6$ & $5\times10^5$ \\ No.
            GPUs                      & 1                                                 & 1                                                & 1                                                                            & 1                                               & 1                                                & 1                                           & 1 \\
            \midrule
            Loss                          & MAE                                               & MAE                                              & MAE                                                                          & MAE                                             & MAE                                              & MAE                                         & MAE \\
            Loss weights $(E,F,V)$\tnote{g} & 20, 20, 5                                      & 20, 20, 5                                        & 20, 20, 5                                                                    & 20, 20, 5                                       & 20, 20, 5                                        & 20, 20, 5                                   & 20, 20, 5 \\
            Optimizer                     & Hybrid\hspace{0pt}Muon                            & Hybrid\hspace{0pt}Muon                           & Hybrid\hspace{0pt}Muon                                                       & Hybrid\hspace{0pt}Muon                          & Hybrid\hspace{0pt}Muon                           & Hybrid\hspace{0pt}Muon                      & Hybrid\hspace{0pt}Muon \\
            Muon mode                     & Slice                                             & Slice                                            & Slice                                                                        & Slice                                           & Slice                                            & Slice                                       & Slice \\
            Magma Lite                    & True                                              & True                                             & True                                                                         & True                                            & True                                             & True                                        & True \\
            Weight decay                  & $1\times10^{-3}$                                  & $1\times10^{-3}$                                 & $1\times10^{-3}$                                                             & $1\times10^{-3}$                                & $1\times10^{-3}$                                 & $1\times10^{-3}$                            & $1\times10^{-3}$ \\
            \bottomrule
        \end{tabular*}
        \begin{tablenotes}[flushleft]
            \PaperTableNotesStyle
            \item[] A dash indicates a controlled variable within the corresponding ablation family; N/A denotes a parameter that is not used by the corresponding setting.
            \item[a] Multi-value entries in the attention column follow the 64-channel 1-focus, 96-channel 1-focus and 96-channel 2-focus settings.
            \item[b] Scalar scaling applies each degree-$l$ radial profile only to the degree-$l$ channels, whereas degree mixing uses the profiles to mix angular degrees in the local SO(2) frame.
            \item[c] In each normalization-placement entry, the first term refers to the SO(2) subblock and the second to the FFN subblock.
            \item[d] Auto denotes a hidden dimension of $(8/3)d_\mathrm{feat}$, rounded up to a multiple of 32.
            \item[e] LR denotes learning rate, and WSD denotes the warmup--stable--decay schedule.
            \item[f] $N$ denotes the number of atoms in each system; $\lceil\cdot\rceil$ rounds up to the nearest integer.
            \item[g] $E$, $F$ and $V$ denote the energy, force and virial loss contributions, respectively.
        \end{tablenotes}
    \end{threeparttable}
\end{table}

\begin{table}[p]
    \centering
    \scriptsize
    \renewcommand{\arraystretch}{0.97}
    \setlength{\tabcolsep}{3pt}
    \begin{threeparttable}
        \caption{Configurations for the supplementary hyperparameter ablations.}
        \label{tab:sm_supplementary_ablation_config}
        \begin{tabular*}{\linewidth}{@{\extracolsep{\fill}}
                L{3.0cm}C{1.6cm}C{1.6cm}C{1.6cm}C{1.6cm}C{1.6cm}C{1.6cm}@{}} \toprule \textbf{Hyperparameter} & \textbf{Layers} & \makecell{\textbf{SO(2)}\\\textbf{stack}} & \makecell{\textbf{FFN}\\\textbf{stack}} & \textbf{Attention}\tnote{a} & \makecell{\textbf{Norm.
                }\\\textbf{placement}} & \makecell{\textbf{LR}\\\textbf{scheduler}} \\
            \midrule
            Feature dim. & 64 & 64 & 64 & 64/96/96 & 64 & 64 \\
            No. focuses & 1 & 1 & 1 & 1/1/2 & 1 & 1 \\
            No. layers & -- & 2 & 2 & 3 & 3 & 3 \\
            SO(2) layers & 4 & -- & 3 & 4 & 4 & 4 \\
            FFN layers & 1 & 1 & -- & 1 & 1 & 1 \\
            \midrule
            Radial basis & Bessel & Bessel & Bessel & Bessel & Bessel & Bessel \\
            No. radial bases & 16 & 16 & 16 & 16 & 16 & 16 \\
            $L_{\max}$ & 3 & 3 & 3 & 3 & 3 & 3 \\
            $M_{\max}$ & 1 & 1 & 1 & 1 & 1 & 1 \\
            Cutoff (\AA) & 6 & 6 & 6 & 6 & 6 & 6 \\
            \midrule
            Edge--node degree coupling\tnote{b} & Scalar scaling & Degree mixing & Degree mixing & Scalar scaling & Scalar scaling & Degree mixing \\
            Per-channel coupling & N/A & True & True & N/A & N/A & True \\
            Coupling rank & N/A & 1 & 1 & N/A & N/A & 1 \\
            \midrule
            Attn. heads & 1 & 1 & 1 & -- & 1 & 1 \\
            Attn. value/output proj. & False & False & False & -- & False & False \\
            Pre mixing & False & False & False & -- & False & False \\
            \midrule
            Message node WBN & False & False & False & False & False & False \\
            WBN in FFN & False & False & False & False & False & False \\
            Quadrature & Product & Lebedev & Lebedev & Lebedev & Product & Lebedev \\
            \midrule
            Norm. placement\tnote{c} & Pre \& Pre & Post \& Pre & Post \& Pre & Post \& Pre & -- & Post \& Pre \\
            Activation func. & SiLU & SiLU & SiLU & SiLU & SiLU & SiLU \\
            FFN hidden dim.\tnote{d} & Auto & Auto & Auto & Auto & Auto & Auto \\
            SO(3) read-out & False & False & False & False & False & False \\
            Output fitting dim. & Auto & Auto & Auto & Auto & Auto & Auto \\
            Output fitting layers & 1 & 1 & 1 & 1 & 1 & 1 \\
            \midrule[\heavyrulewidth]
            Compile & True & True & True & True & True & True \\
            BF16 AMP & True & True & True & True & True & True \\
            TF32 matmul & True & True & True & True & True & True \\
            \midrule
            LR scheduler\tnote{e} & Cosine & WSD & WSD & Cosine & WSD & -- \\
            Max.
            LR & $5\times10^{-4}$ & $6.5\times10^{-4}$ & $4.5\times10^{-4}$ & \makecell{$4.5/4.2/3.5$\\$\times10^{-4}$} & $4\times10^{-4}$ & $4.5\times10^{-4}$ \\ Min.
            LR & $1\times10^{-6}$ & $1\times10^{-6}$ & $1\times10^{-6}$ & $1\times10^{-6}$ & $1\times10^{-6}$ & $1\times10^{-6}$ \\ Warmup ratio & 0.003 & 0.003 & 0.003 & 0.003 & 0.003 & 0.003 \\ Decay ratio & N/A & 0.65 & 0.65 & N/A & 0.65 & -- \\ Decay type & N/A & Cosine & Cosine & N/A & Cosine & -- \\ Batch size (per GPU)\tnote{f} & $\lceil 512/N\rceil$ & $\lceil 2100/N\rceil$ & $\lceil 900/N\rceil$ & \makecell{$\lceil 1000/N\rceil$\\$\lceil 700/N\rceil$\\$\lceil 400/N\rceil$} & $\lceil 1000/N\rceil$ & $\lceil 1000/N\rceil$ \\ Training steps & $2\times10^6$ & $2\times10^6$ & $1\times10^6$ & $1\times10^6$ & $1\times10^6$ & $1\times10^6$ \\ No.
            GPUs & 1 & 1 & 1 & 1 & 1 & 1 \\
            \midrule
            Loss & MAE & MAE & MAE & MAE & MAE & MAE \\
            Loss weights $(E,F,V)$\tnote{g} & 20, 20, 5 & 20, 20, 5 & 20, 20, 5 & 20, 20, 5 & 20, 20, 5 & 20, 20, 5 \\
            Optimizer & HybridMuon & HybridMuon & HybridMuon & HybridMuon & HybridMuon & HybridMuon \\
            Muon mode & Slice & Slice & Slice & Slice & Slice & Slice \\
            Magma Lite & True & True & True & True & True & True \\
            Weight decay & $1\times10^{-3}$ & $1\times10^{-3}$ & $1\times10^{-3}$ & $1\times10^{-3}$ & $1\times10^{-3}$ & $1\times10^{-3}$ \\
            \bottomrule
        \end{tabular*}
        \begin{tablenotes}[flushleft]
            \PaperTableNotesStyle
            \item[] A dash indicates a controlled variable within the corresponding ablation family; N/A denotes a parameter that is not used by the corresponding setting.
            \item[a] Multi-value entries in the attention column follow the 64-channel 1-focus, 96-channel 1-focus and 96-channel 2-focus settings.
            \item[b] Scalar scaling applies each degree-$l$ radial profile only to the degree-$l$ channels, whereas degree mixing uses the profiles to mix angular degrees in the local SO(2) frame.
            \item[c] In each normalization-placement entry, the first term refers to the SO(2) subblock and the second to the FFN subblock.
            \item[d] Auto denotes a hidden dimension of $(8/3)d_\mathrm{feat}$, rounded up to a multiple of 32.
            \item[e] LR denotes learning rate, and WSD denotes the warmup--stable--decay schedule.
            \item[f] $N$ denotes the number of atoms in each system; $\lceil\cdot\rceil$ rounds up to the nearest integer.
            \item[g] $E$, $F$ and $V$ denote the energy, force and virial loss contributions, respectively.
        \end{tablenotes}
    \end{threeparttable}
\end{table}

\subsection{Benchmark model configurations}

Table~\ref{tab:sm_model_config} lists the model and systems configurations of the six DPA4 size classes considered in this work.
The model settings of each size class are dataset-independent; only the training schedule varies between datasets.
Tables~\ref{tab:sm_matbench_config}, \ref{tab:sm_omat24_config}, \ref{tab:sm_omol25_config} and \ref{tab:sm_spice_mace_off_config} give the training hyperparameters for the Matbench Discovery, OMat24, OMol25 and SPICE-MACE-OFF benchmarks, respectively.

\begin{table}[p]
    \PaperTableStyle
    \renewcommand{\arraystretch}{1.05}
    \begin{threeparttable}
        \caption{Model and systems configurations of the six DPA4 size classes.}
        \label{tab:sm_model_config}
        \begin{tabular*}{\linewidth}{@{\extracolsep{\fill}}
                L{2.6cm}C{1.65cm}C{1.65cm}C{1.65cm}C{1.65cm}C{1.65cm}C{1.65cm}@{}} \toprule \textbf{Hyperparameter} & \textbf{DPA4-Nano} & \textbf{DPA4-Mini} & \textbf{DPA4-Neo} & \textbf{DPA4-Air} & \textbf{DPA4-Plus} & \textbf{DPA4-Pro} \\ \midrule Feature dim.
            & 32                 & 32                 & 64                 & 64                 & 64                 & 128                \\
            No. focuses                  & 1                  & 1                  & 2                  & 1                  & 1                  & 2                  \\
            No. layers                   & 2                  & 2                  & 2                  & 3                  & 4                  & 6                  \\
            SO(2) layers                 & 3                  & 3                  & 3                  & 4                  & 4                  & 4                  \\
            FFN layers                   & 1                  & 1                  & 1                  & 1                  & 1                  & 1                  \\
            \midrule
            Radial basis                 & Bessel             & Bessel             & Bessel             & Bessel             & Bessel             & Bessel             \\
            No. radial bases             & 16                 & 16                 & 16                 & 16                 & 16                 & 16                 \\
            $L_{\max}$                   & 1                  & 2                  & 3                  & 3                  & 4                  & 5                  \\
            $M_{\max}$                   & 1                  & 1                  & 1                  & 1                  & 1                  & 1                  \\
            Cutoff (\AA)                 & 6                  & 6                  & 6                  & 6                  & 6                  & 6                  \\
            \midrule
            Edge--node degree coupling\tnote{a}  & Scalar scaling     & Degree mixing      & Degree mixing      & Degree mixing      & Degree mixing      & Degree mixing      \\
            Per-channel coupling             & N/A                & True               & True               & True               & True               & True               \\
            Coupling rank              & N/A                & 1                  & 1                  & 1                  & 2                  & 2                  \\
            \midrule
            Attn. heads                  & 1                  & 1                  & 1                  & 1                  & 1                  & 1                  \\
            Attn. value/output proj. & False              & False              & False              & False              & False              & False              \\
            Pre mixing                   & False              & False              & False              & False              & False              & False              \\
            \midrule
            Message node WBN             & True               & True               & True               & True               & True               & True               \\
            WBN in FFN                   & True               & True               & True               & True               & True               & True               \\
            Quadrature                   & Lebedev            & Lebedev            & Lebedev            & Lebedev            & Lebedev            & Lebedev            \\
            \midrule
            Norm. placement\tnote{b}     & Post \& Pre        & Post \& Pre        & Post \& Pre        & Post \& Pre        & Post \& Pre        & Post \& Pre        \\
            Activation func.             & SiLU               & SiLU               & SiLU               & SiLU               & SiLU               & SiLU               \\
            FFN hidden dim.\tnote{c}     & Auto               & Auto               & Auto               & Auto               & Auto               & Auto               \\
            SO(3) read-out               & True               & True               & True               & True               & True               & False              \\
            Output fitting dim.          & Auto               & Auto               & Auto               & Auto               & Auto               & Auto               \\
            Output fitting layers        & 1                  & 1                  & 1                  & 1                  & 1                  & 1                  \\
            \midrule[\heavyrulewidth]
            Compile                      & True               & True               & True               & True               & True               & True               \\
            BF16 AMP                     & True               & True               & True               & True               & True               & True               \\
            TF32 matmul                  & True               & True               & True               & True               & True               & True               \\
            \midrule
            Optimizer                    & Hybrid\hspace{0pt}Muon & Hybrid\hspace{0pt}Muon & Hybrid\hspace{0pt}Muon & Hybrid\hspace{0pt}Muon & Hybrid\hspace{0pt}Muon & Hybrid\hspace{0pt}Muon \\
            Muon mode                    & Slice              & Slice              & Slice              & Slice              & Slice              & Slice              \\
            Magma Lite                   & True               & True               & True               & True               & True               & True               \\
            Weight decay                 & $1\times10^{-3}$   & $1\times10^{-3}$   & $1\times10^{-3}$   & $1\times10^{-3}$   & $1\times10^{-3}$   & $1\times10^{-3}$   \\
            \midrule
            Params                       & 0.48M              & 0.66M              & 1.1M               & 5.1M               & 8.8M               & 25.2M              \\
            \bottomrule
        \end{tabular*}
        \begin{tablenotes}[flushleft]
            \PaperTableNotesStyle
            \item[] N/A denotes a setting that is not applicable. Params is the number of trainable parameters.
            \item[a] Scalar scaling applies each degree-$l$ radial profile only to the degree-$l$
            channels, whereas degree mixing uses the profiles to mix angular degrees in the local SO(2)
            frame.
            \item[b] In the normalization-placement entry, the first term refers to the SO(2)
            subblock and the second term to the FFN subblock.
            \item[c] Auto denotes a hidden dimension of $(8/3)d_\mathrm{feat}$, rounded
            up to a multiple of 32.
        \end{tablenotes}
    \end{threeparttable}
\end{table}

\begin{table}[p]
    \PaperTableStyle
    \begin{threeparttable}
        \caption{Matbench Discovery training hyperparameters.}
        \label{tab:sm_matbench_config}
        \begin{tabular*}{\linewidth}{@{\extracolsep{\fill}}
                L{3.2cm}C{1.9cm}C{1.9cm}C{1.9cm}C{1.9cm}C{1.9cm}@{}} \toprule \textbf{Hyperparameter} & \textbf{DPA4-Mini} & \textbf{DPA4-Neo} & \textbf{DPA4-Air} & \textbf{DPA4-Plus} & \textbf{DPA4-Pro} \\ \midrule LR scheduler\tnote{a} & WSD & WSD & WSD & WSD & WSD \\ Max.
            LR & $2\times10^{-3}$ & $7\times10^{-4}$ & $4.2\times10^{-4}$ & $4.7\times10^{-4}$ & $4.3\times10^{-4}$ \\ Min.
            LR & $1\times10^{-6}$ & $1\times10^{-6}$ & $1\times10^{-6}$ & $1\times10^{-6}$ & $1\times10^{-6}$ \\ Warmup ratio & 0.003 & 0.003 & 0.003 & 0.003 & 0.003 \\ Decay ratio & 0.65 & 0.65 & 0.65 & 0.65 & 0.65 \\ Decay type & Cosine & Cosine & Cosine & Cosine & Cosine \\ Batch size (per GPU)\tnote{b} & $\lceil 7000/N\rceil$ & $\lceil 3000/N\rceil$ & $\lceil 3150/N\rceil$ & $\lceil 1500/N\rceil$ & $\lceil 300/N\rceil$ \\ Training epochs & 150 & 100 & 100 & 100 & 140 \\ No.
            GPUs & 1 & 2 & 1 & 4 & 16 \\
            \midrule
            Loss & MAE & MAE & MAE & MAE & MAE \\
            Loss weights $(E,F,V)$\tnote{c} & 20, 20, 5 & 20, 20, 5 & 20, 20, 5 & 20, 20, 5 & 20, 20, 5 \\
            \bottomrule
        \end{tabular*}
        \begin{tablenotes}[flushleft]
            \PaperTableNotesStyle
            \item[a] LR denotes learning rate, and WSD denotes the warmup--stable--decay schedule.
            \item[b] $N$ denotes the number of atoms in each system; $\lceil\cdot\rceil$ rounds up to the nearest integer.
            \item[c] $E$, $F$ and $V$ denote the energy, force and virial loss contributions, respectively.
        \end{tablenotes}
    \end{threeparttable}
\end{table}

\begin{table}[p]
    \PaperTableStyle
    \begin{threeparttable}
        \caption{OMat24 training hyperparameters.}
        \label{tab:sm_omat24_config}
        \begin{tabular*}{\linewidth}{@{\extracolsep{\fill}}
                L{2.65cm}C{1.66cm}C{1.66cm}C{1.66cm}C{1.66cm}C{1.66cm}C{1.66cm}@{}} \toprule \textbf{Hyperparameter} & \makecell{\textbf{DPA4-}\\\textbf{Nano}} & \makecell{\textbf{DPA4-}\\\textbf{Mini}} & \makecell{\textbf{DPA4-}\\\textbf{Neo}} & \makecell{\textbf{DPA4-}\\\textbf{Air}} & \makecell{\textbf{DPA4-}\\\textbf{Plus}} & \makecell{\textbf{DPA4-}\\\textbf{Pro}} \\ \midrule LR scheduler\tnote{a} & Cosine & Cosine & Cosine & Cosine & Cosine & Cosine \\ Max.
            LR & $3\times10^{-3}$ & $1\times10^{-3}$ & $8\times10^{-4}$ & $4\times10^{-4}$ & $3\times10^{-4}$ & $2\times10^{-4}$ \\ Min.
            LR & $1\times10^{-6}$ & $1\times10^{-6}$ & $1\times10^{-6}$ & $1\times10^{-6}$ & $1\times10^{-6}$ & $1\times10^{-6}$ \\ Warmup ratio & 0.003 & 0.003 & 0.003 & 0.003 & 0.003 & 0.003 \\ Batch size (per GPU)\tnote{b} & $\lceil 30000/N\rceil$ & $\lceil 10000/N\rceil$ & $\lceil 5000/N\rceil$ & $\lceil 3900/N\rceil$ & $\lceil 2000/N\rceil$ & $\lceil 400/N\rceil$ \\ Training epochs & 10 & 7 & 7 & 6 & 5 & 5 \\ No.
            GPUs & 1 & 2 & 2 & 4 & 4 & 16 \\
            \midrule
            Loss & MAE & MAE & MAE & MAE & MAE & MAE \\
            Loss weights $(E,F,V)$\tnote{c} & 20, 20, 5 & 20, 20, 5 & 20, 20, 5 & 20, 20, 5 & 20, 20, 5 & 20, 20, 5 \\
            \bottomrule
        \end{tabular*}
        \begin{tablenotes}[flushleft]
            \PaperTableNotesStyle
            \item[a] LR denotes learning rate.
            \item[b] $N$ denotes the number of atoms in each system; $\lceil\cdot\rceil$ rounds up to the nearest integer.
            \item[c] $E$, $F$ and $V$ denote the energy, force and virial loss contributions, respectively.
        \end{tablenotes}
    \end{threeparttable}
\end{table}

\begin{table}[p]
    \PaperTableStyle
    \begin{threeparttable}
        \caption{OMol25 training hyperparameters.}
        \label{tab:sm_omol25_config}
        \begin{tabular*}{\linewidth}{@{\extracolsep{\fill}}
                L{2.65cm}C{1.66cm}C{1.66cm}C{1.66cm}C{1.66cm}C{1.66cm}C{1.66cm}@{}} \toprule \textbf{Hyperparameter} & \makecell{\textbf{DPA4-}\\\textbf{Nano}} & \makecell{\textbf{DPA4-}\\\textbf{Mini}} & \makecell{\textbf{DPA4-}\\\textbf{Neo}} & \makecell{\textbf{DPA4-}\\\textbf{Air}} & \makecell{\textbf{DPA4-}\\\textbf{Plus}} & \makecell{\textbf{DPA4-}\\\textbf{Pro}} \\ \midrule LR scheduler\tnote{a} & WSD & WSD & WSD & WSD & WSD & WSD \\ Max.
            LR & $1\times10^{-3}$ & $5\times10^{-4}$ & $7\times10^{-4}$ & $3\times10^{-4}$ & $3\times10^{-4}$ & $2.5\times10^{-4}$ \\ Min.
            LR & $1\times10^{-6}$ & $1\times10^{-6}$ & $1\times10^{-6}$ & $1\times10^{-6}$ & $1\times10^{-6}$ & $1\times10^{-6}$ \\ Warmup ratio & 0.003 & 0.003 & 0.003 & 0.003 & 0.003 & 0.003 \\ Decay ratio & 0.65 & 0.65 & 0.65 & 0.65 & 0.65 & 0.65 \\ Decay type & Cosine & Cosine & Cosine & Cosine & Cosine & Cosine \\ Batch size (per GPU)\tnote{b} & $\lceil 25000/N\rceil$ & $\lceil 20000/N\rceil$ & $\lceil 6000/N\rceil$ & $\lceil 4000/N\rceil$ & $\lceil 2000/N\rceil$ & $\lceil 700/N\rceil$ \\ Training epochs & 23 & 21 & 19 & 17 & 13 & 10 \\ No.
            GPUs & 2 & 4 & 8 & 16 & 24 & 72 \\
            \midrule
            Loss & MAE & MAE & MAE & MAE & MAE & MAE \\
            Loss weights $(E,F,V)$\tnote{c} & 10, 5, 0 & 10, 5, 0 & 10, 5, 0 & 10, 5, 0 & 10, 5, 0 & 10, 5, 0 \\
            \bottomrule
        \end{tabular*}
        \begin{tablenotes}[flushleft]
            \PaperTableNotesStyle
            \item[a] LR denotes learning rate, and WSD denotes the warmup--stable--decay schedule.
            \item[b] $N$ denotes the number of atoms in each system; $\lceil\cdot\rceil$ rounds up to the nearest integer.
            \item[c] $E$, $F$ and $V$ denote the energy, force and virial loss contributions, respectively.
        \end{tablenotes}
    \end{threeparttable}
\end{table}

\begin{table}[p]
    \PaperTableStyle
    \begin{threeparttable}
        \caption{SPICE-MACE-OFF training hyperparameters.}
        \label{tab:sm_spice_mace_off_config}
        \begin{tabular*}{\linewidth}{@{\extracolsep{\fill}}
                L{3.6cm}C{2.8cm}C{2.8cm}C{2.8cm}@{}} \toprule \textbf{Hyperparameter} & \textbf{DPA4-Neo} & \textbf{DPA4-Air} & \textbf{DPA4-Plus} \\ \midrule LR scheduler\tnote{a} & WSD & WSD & WSD \\ Max.
            LR & $8\times10^{-4}$ & $5\times10^{-4}$ & $3.5\times10^{-4}$ \\ Min.
            LR & $1\times10^{-6}$ & $1\times10^{-6}$ & $1\times10^{-6}$ \\ Warmup ratio & 0.003 & 0.003 & 0.003 \\ Decay ratio & 0.65 & 0.65 & 0.65 \\ Decay type & Cosine & Cosine & Cosine \\ Batch size (per GPU)\tnote{b} & $\lceil 4000/N\rceil$ & $\lceil 2000/N\rceil$ & $\lceil 1000/N\rceil$ \\ Training steps & $2\times10^6$ & $2\times10^6$ & $2\times10^6$ \\ No.
            GPUs & 1 & 1 & 1 \\
            \midrule
            Loss & MAE & MAE & MAE \\
            Loss weights $(E,F,V)$\tnote{c} & 15, 20, 0 & 15, 20, 0 & 15, 20, 0 \\
            \bottomrule
        \end{tabular*}
        \begin{tablenotes}[flushleft]
            \PaperTableNotesStyle
            \item[a] LR denotes learning rate, and WSD denotes the warmup--stable--decay schedule.
            \item[b] $N$ denotes the number of atoms in each system; $\lceil\cdot\rceil$ rounds up to the nearest integer.
            \item[c] $E$, $F$ and $V$ denote the energy, force and virial loss contributions, respectively.
        \end{tablenotes}
    \end{threeparttable}
\end{table}

\end{document}